\let\csname equation*\endcsname\relax
\let\csname endequation*\endcsname\relax
\definecolor{OliveGreen}{rgb}{0,0.6,0}
\def\A{\boldsymbol{A}}
\def\Y{\boldsymbol{Y}}
\def\I{\boldsymbol{I}}
\def\E{\boldsymbol{E}}
\def\M{\boldsymbol{M}}
\def\indicator{\mathbbm{1}}
\def\wt{\widetilde}
\def\hat{\widehat}
\def\no{\notag}
\def\tilde{\widetilde}
\def\supp{\text{supp}}
\def\var{\text{Var}}
\def\ybar{\widebar{\Y}}
\def\xmax{x_{\max}}
\def\sign{\textsf{sign}}
\def\dist{\mathrm{dist}}
\def\stablerank{\textsf{sr}}
\newcommand\stablesparsity[1]{\bar{s}_{#1}}
\newcommand\fro[1]{\| #1 \|_{\rm{F}}}
\newcommand\op[1]{\|#1\|}
\newcommand\lzero[1]{\|#1\|_{\ell_0}}
\newcommand\ltwo[1]{\|#1\|_{\ell_2}}
\newcommand\linf[1]{\|#1\|_{\ell_{\infty}}}
\newcommand\inp[2]{\langle #1, #2 \rangle}
\newcommand{\blue}[1]{{\color{black}#1}}
\newcommand{\RN}[1]{%
	\textup{\uppercase\expandafter{\romannumeral#1}}%
}
\def\calE{{\mathcal E}}
\def\calN{{\mathcal N}}
\def\calO{{\mathcal O}}
\def\calS{{\mathcal S}}
\def\calT{{\mathcal T}}
\def\EE{{\mathbb E}}
\def\MM{{\mathbb M}}
\def\PP{{\mathbb P}}
\def\RR{{\mathbb R}}
\def\SS{{\mathbb S}}
\def\TT{{\mathbb T}}
\def\a{{\boldsymbol a}}
\def\e{{\boldsymbol e}}
\def\u{{\boldsymbol u}}
\def\v{{\boldsymbol v}}
\def\w{{\boldsymbol w}}
\def\x{{\boldsymbol x}}
\def\y{{\boldsymbol y}}
\def\z{{\boldsymbol z}}
\newtheorem{theorem}{Theorem}
\newtheorem{lemma}{Lemma}
\newtheorem{corollary}{Corollary}
\newtheorem{proposition}{Proposition}
\let\oldequation\equation
\let\oldendequation\endequation
\renewenvironment{equation}
  {\linenomathNonumbers\oldequation}
  {\oldendequation\endlinenomath}
\let\oldalign\align
\let\oldendalign\endalign
\renewenvironment{align}
  {\linenomathNonumbers\oldalign}
  {\oldendalign\endlinenomath}
\begin{document}

\title[Sparse Phase Retrieval via Truncated Power Method]{Provable Sample-Efficient Sparse Phase Retrieval Initialized by Truncated Power Method}

\author{Jian-Feng Cai, Jingyang Li, and Juntao You}
\address{Department of Mathematics, The Hong Kong University of Science and Technology, Clear Water Bay, Kowloon, Hong Kong SAR, China}

\ead{{jfcai@ust.hk}, {jlieb@connect.ust.hk}, {jyouab@connect.ust.hk}}

\vspace{10pt}
\begin{indented}
\item[]April 2023
\end{indented}

\begin{abstract}
We study the sparse phase retrieval problem, recovering an  $s$-sparse length-$n$ signal from $m$ magnitude-only measurements.  Two-stage non-convex approaches have drawn much attention in recent studies. Despite non-convexity, many two-stage algorithms provably converge to the underlying solution linearly when appropriately initialized. However, in terms of sample complexity, the bottleneck of those algorithms with Gaussian random measurements often comes from the initialization stage. Although the refinement stage usually needs only $m=\Omega(s\log n )$ measurements, the widely used spectral initialization in the initialization stage requires $m=\Omega(s^2\log n )$ measurements to produce a desired initial guess, which causes the total sample complexity order-wisely more than necessary. To reduce the number of measurements, we propose a truncated power method to replace the spectral initialization for non-convex sparse phase retrieval algorithms. We prove that $m=\Omega(\bar{s} s\log n )$ measurements, where $\bar{s}$ is the stable sparsity of the underlying signal, are sufficient to produce a desired initial guess. When the underlying signal contains only very few significant components, the sample complexity of the proposed algorithm is $m=\Omega(s\log n )$ and optimal. Numerical experiments illustrate that the proposed method is more sample-efficient than state-of-the-art algorithms.

\end{abstract}

\section{Introduction}

In this paper, we consider the real-valued phase retrieval problem, which aims at recovering a high-dimensional signal from a system of phaseless measurements. That is, the task is to recover $\x\in\RR^n$ from the system
\begin{align}\label{problem:PR}
y_i = |\inp{\a_i}{\x}|,~~~i = 1,\ldots,m,
\end{align}
where $\{\a_i\}_{i=1}^{m}\subset \mathbb{R}^n$ are the sensing vectors and $\{y_i\}_{i=1}^{m}\subset\mathbb{R}_+$ are the observations. The phase retrieval problem has arisen in various applications, for instances, X-ray crystallography \cite{harrison1993phase}, optics \cite{walther1963question}, microscopy \cite{miao2008extending}, and others \cite{fienup1982phase,shechtman2015phase}. In those applications, it is easier to record the intensity of the measurements due to hardware limitations, which leads to the phase retrieval problem \eqref{problem:PR}.\par

When there is no prior assumption on $\x$, there may be infinitely many possible solutions to \eqref{problem:PR} if $m<n$.
To ensure the well-posedness of the problem, an \emph{oversampling} (i.e., $m>n$) technique is applied. It has been shown that $m\ge 2n-1$ measurements are necessary and sufficient for a unique recovery (up to a global phase) with generic real sensing vectors \cite{balan2006signal}. Moreover, recent studies further indicate that the oversampling is required for both practical algorithms \cite{candes2015phase,waldspurger2015phase,goldstein2016phasemax,hand2018elementary,bahmani2017flexible,netrapalli2013phase,candes2015phase1, chen2017solving,zhang2016reshaped,wang2016solving,cai2018solving} and global landscape analysis \cite{sun2018geometric,li2020Toward,cai2021global1,cai2022global2,cai2022solving,cai2022nearly} to guarantee a successful recovery.

Meanwhile, the over-complete measurements $m=\Omega(n)$ naturally result in a huge sampling and computation cost in high-dimensional applications. Recently, it has been a great interest to further reduce the required number of measurements (or sample size) for phase retrieval problem \cite{wang2014phase,phase15,cai2016optimal,salehi2018learning,bahmani2015efficient,iwen2017robust}. To recover the underlying signal $\x\in\RR^n$ with underdetermined measurements $m<n$, a common approach is to exploit the latent structure of the signal $\x$. In many real applications related to signal/imaging processing, it is well-known that the underlying signal  $\x$ is usually (approximately) sparse in a transformed domain \cite{stephane1999wavelet}. With sparsity priors, one then has the so-called sparse phase retrieval problem, which is to find a sparse signal $\x\in \mathbb{R}^n$ from the system
\begin{align}\label{problem:sPR}
y_i = |\inp{\a_i}{\x}|,~i = 1,\ldots,m,\quad \text{subject to} \quad \lzero{\x}  \le s,
\end{align}
\blue{where $\lzero{\x}$ is the number of non-zero entries of $\x$} and $s\ll n$ is the sparsity. It has been shown that the problem \eqref{problem:sPR} admits a unique solution (up to a global phase) with only $m=2s$ real generic measurements \cite{wang2014phase}. One of the remaining challenges of the sparse phase retrieval problem \eqref{problem:sPR} is introducing practical algorithms with (near) optimal sample complexity.

\subsection{Related work and contributions}

\paragraph{Related work.}

Despite the non-convexity and NP-hardness, a number of practical algorithms have been introduced for \eqref{problem:sPR} with theoretical guarantees under the assumption of i.i.d. Gaussian sensing vectors $\{\a_i\}_{i=1}^m$. Those non-convex methods are usually divided into two stages, namely, the initialization stage and the local refinement stage. Provided an initial guess that is sufficiently close to the underlying signal, non-convex algorithms including SPARTA~\cite{wang2017sparse}, CoPRAM~\cite{jagatap2019sample}, thresholding/projected Wirtinger flow~\cite{cai2016optimal,soltanolkotabi2019structured}, SAM~\cite{cai2022sample} and HTP~\cite{cai2022sparse} are guaranteed to converge at least linearly to the ground truth with the near-optimal sample complexity $\Omega(s\log n)$.
Moreover, algorithms like SAM and HTP are guaranteed to give an exact recovery of the underlying signal in only a few iterations, implying that the local refinement stage can be efficiently implemented. On the other hand, the popular spectral initialization and its variations require $m = \Omega(s^2\log n)$ samples to obtain an accurate initial guess \cite{phase15,cai2016optimal,wang2017sparse,jagatap2019sample}. Therefore, for non-convex two-stage methods, the bottleneck in sample complexity comes from the initialization stage. Consequently, it is possible to break the sample complexity barriers as long as the methods for initialization can be improved.

Alternatively, a Hadamard Wirtinger flow (HWF) method has been introduced in \cite{wu2021hadamard} as a different strategy to obtain the initial guess, where an implicit regularization is used. The sample complexity of HWF is improved to $\Omega(\stablesparsity{\x}s\log n )$ under the assumption that $x_{\min} = \Omega(\frac{1}{\sqrt{s}}\ltwo{\x})$, where $\|\cdot\|_{\ell_2}$ denotes the $\ell_2$-norm for vectors, $\stablesparsity{\x}:=\frac{\ltwo{\x}^2}{\linf{\x}^2}$ is the stable sparsity of $\x$, and $x_{\min}$ is the minimum nonzero entry in absolute value of $\x$. However, the assumption on $x_{\min}$ requires a very small dynamic range of the nonzero entries of $\x$, which is too stringent to be satisfied in many practical situations. For example, the assumption on $x_{\min}$ fails to hold for signals whose components decay fast, e.g., in the form of power series \cite{chen2015exact}. It is also worth mentioning that if the sampling vectors $\{\bm{a}_i\}_{i=1}^{m}$ are not standard Gaussian, a two-stage sampling scheme requiring only $\Omega(s\log(n/s))$ samples has been introduced in \cite{iwen2017robust}, where The first stage is for sparse compressed sensing and the second stage is for phase recovery. Nevertheless, for Gaussian random measurements, there is still a statistical-to-computational gap to overcome.

\paragraph{Our contributions.} This paper focuses on the standard Gaussian model and aims to reduce the sampling complexity of provably non-convex sparse phase retrieval algorithms by improving the initialization stage. In this work, we introduce two novel initialization algorithms, namely, a modified spectral initialization method and a novel efficient truncated power method for sparse phase retrieval problems.
The proposed algorithms can be applied as initialization methods for all the aforementioned non-convex algorithms. Without any assumption on the underlying signal, we prove that  $\Omega(\stablesparsity{\x}s\log n )$ measurements are sufficient for the modified spectral initialization method and truncated power method to obtain a desired initial guess.
Since $\x$ is known to be $s$-sparse, we see from the definition of $\stablesparsity{\x}$ that $1\leq\stablesparsity{\x}\leq s$. So, when $\x$ has only very few significant entries and thus $\stablesparsity{\x}=\calO(1)$, our algorithm requires only $\Omega(s\log(n))$ samples, which is nearly optimal up to a logarithmic factor. Consequently, the total sample complexity of many non-convex sparse phase retrieval algorithms can be reduced to $m=\Omega(s\log n)$ as long as $\stablesparsity{\x}=\calO(1)$.
Moreover, the truncated power method has a better theoretical sample complexity than HWF to obtain a $\delta$-close initial guess $\hat\x$ (i.e.,  $\dist(\x,\hat\x):=\min\{\ltwo{\x-\hat\x},\ltwo{\x+\hat\x}\}\leq \delta\ltwo{\x}$,). See \cref{tab:samplecom} for a comparison of theoretical results of different initialization methods. We will demonstrate by experimental results the superior performance of the proposed methods over other initialization methods.

\begin{table}[t]
\centering
\caption{\label{tab:samplecom}Comparison of our proposed initialization methods with state-of-the-art initialization methods for sparse phase retrieval \blue{under the regime $s= \Theta(n^{\sigma})$ for some $\sigma\in(0,1)$}. The sample complexity is to guarantee a $\delta$-close initial guess. Here $\stablesparsity{\x}:= \ltwo{\x}^2/\linf{\x}^2\in[1,s]$ is  the stable sparsity of $\x$.}\label{tab:samplecomlexity}
\begin{tabular}{c|c|c}
\toprule
 Initialization methods  &Sample complexity $m$  &Assumptions   \\
\midrule
Spectral method~\cite{phase15,cai2016optimal,wang2017sparse,jagatap2019sample}   & $\Omega(\delta^{-2}s^2\log n)$   &none~\cite{jagatap2019sample}   \\
\hline
HWF~\cite{wu2021hadamard} & $\Omega(\delta^{-2}\stablesparsity{\x}s\log n )$   & $x_{\min} \ge \frac{c}{\sqrt{s}}\ltwo{\x}$   \\
\hline
Modified spectral method (this paper) & \blue{$\Omega(\delta^{-2}\stablesparsity{\x}s\log n )$}   & none  \\
\hline
Truncated power method (this paper)
 & \blue{$\Omega(\max\{\delta^{-2},\stablesparsity{\x}\}s\log n)$}   & none \\
\bottomrule
\end{tabular}
\end{table}

\subsection{Notations and outline}
We use boldface capital and lower-case for matrices and vectors respectively (e.g., $\A$ and $\x$). We use square brackets with subscripts (e.g., $[\A]_{ij},[\x]_i$) or Roman letters (e.g., $A_{ij},x_i$) to represent the corresponding entries of matrices and vectors.
We use $\A_{\calT}$ or $[\A]_{\calT}$ to denote the sub-matrix of $\A$ whose columns and rows are indexed by $\calT$ and $\x_{\calT}$ the sub-vector of $\x$.
We denote $\|\cdot\|_{\ell_p}$ the $\ell_p$-norm for vectors, i.e., $\|\x\|_{\ell_p} = (\sum_{i=1}^n|x_i|^p)^{\frac{1}{p}}$ for all $0<p<+\infty$. The notation $\lzero{\x}$ stands for the number of non-zero entries of $\x$, and $\linf{\x}$ is the largest entry of $\x$ in absolute value.
Denote $\fro{\cdot}$ the Frobenius norm and $\op{\cdot}$ the operator norm of matrices.
For any $1\leq \alpha\leq 2$, the Orlicz norm of a random variable is defined by  $\|X\|_{\psi_{\alpha}}:=\min\{K>0: \EE\exp\{|X/K|^{\alpha}\}\leq 2\}$.
Throughout, we use $\Omega(\cdot)$ for asymptotic lower bounds, $\calO(\cdot)$ for upper bounds, and $\Theta(\cdot)$ for both lower and upper bounds, i.e., $m=\Omega(f(s))$ means $|m/f(s)|\ge C$ for some universal constant $C>0$ when $s$ is sufficiently large and $m=\calO(f(s))$ means $|m/f(s)|\leq C'$ for some universal constant $C'>0$ when $s$ is sufficiently large, and $m=\Theta(f(s))$ means $C''\leq|m/f(s)|\leq C'''$ for some universal constants $C'',C'''>0$ when $s$ is sufficiently large.

Let $\SS^n = \{\A\in\RR^{n\times n}: \A^T = \A\}$ denote the set of symmetric matrices. For any $\A\in\SS^n$, we denote its eigenvalues by $\lambda_{\min}(\A) = \lambda_n(\A)\leq \cdots\leq \lambda_1(\A) = \lambda_{\max}(\A)$, and we use $\rho(\A)$ to denote the spectral norm of $\A$, i.e., $\rho(\A) = \max\{|\lambda_1(\A)|,|\lambda_{n}(\A)|\}$. Also, we define the largest and smallest $s$-sparse eigenvalue by
\begin{align*}
\lambda_{\max}(\A,s) = \max_{\w\in\RR^n,\lzero{\w}\leq s, \ltwo{\w} \leq 1}\w^T\A\w,\quad \lambda_{\min}(\A,s) = \min_{\w\in\RR^n,\lzero{\w}\leq s, \ltwo{\w} \leq 1}\w^T\A\w,
\end{align*}
respectively, and
\begin{equation}\label{eq:defrho}
\rho(\A,s) := \max\{|\lambda_{\min}(\A,s)|,|\lambda_{\max}(\A,s)|\}.
\end{equation}
We denote $\xmax = \linf{\x}$, $\bm{y}=[y_1,y_2,\cdots,y_m]^T$, and $\{\bm{e}_j\}_{j=1}^n$ is the standard basis vectors of $\mathbb{R}^n$.

The rest of the paper is organized as follows. We introduce the proposed algorithms in \cref{section:algorithm}. Our main theoretical results are presented in \cref{section:theory}, and their proofs are given in \cref{section:proofs}. In \cref{section:experiments}, we provide some numerical experiments to demonstrate the performance of the proposed algorithms.

\section{Algorithms}\label{section:algorithm}
For non-convex algorithms, the desired initial guess should be sufficiently close to the ground truth (up to a global phase). In this section, we propose two algorithms that provide accurate initialization for non-convex sparse phase retrieval, using as few samples as possible.
The first algorithm is a truncated power method \cite{yuan2013truncated} in \cref{section:tpm}.
The second algorithm is a modified spectral initialization in \cref{section:ossi}, which is a modification of the spectral method \cite{wang2017sparse,jagatap2019sample} and also a special case of truncated power method without the refinement step. For completeness, we first briefly review spectral methods for phase retrieval in \cref{sec:specinitgpr} and sparse phase retrieval in \cref{sec:specinit}.

\subsection{Spectral initialization for general phase retrieval}\label{sec:specinitgpr}
The standard spectral method is designed to initialize general non-convex phase retrieval algorithms without the sparsity assumption. It constructs the following matrix
\begin{equation}\label{eq:defY}
	\Y = \frac{1}{m}\sum_{i=1}^m y_i^2\a_i\a_i^T.
\end{equation}
It is easy to check that, when the measurement vectors $\a_i$, $i=1,\ldots,m$, are i.i.d. random Gaussian, the expectation of $\Y$ is given by
\begin{equation}\label{eq:EY}
    \EE \Y=\ltwo{\x}^2\I+2\x\x^T.
\end{equation}
So, any principal eigenvector of $\EE\Y$ is a multiple of $\x$. Therefore, when $\Y$ has a high concentration around its expectation, a principle eigenvector of $\Y$ with a suitable length provides a good approximation to the underlying signal $\x$, and hence a good initialization for phase retrieval algorithms.
To improve the accuracy and reduce the sample complexity, instead of $\Y$ itself, one may consider its truncated version
\begin{equation}\label{eq:defYbar0}
	\ybar_0 = \frac{1}{m}\sum_{i=1}^m y_i^2\a_i\a_i^T\indicator_{\{l\leq (y_i/\ltwo{\x})\leq u\}},
\end{equation}
where $0\leq l<1\leq u$ are the truncation parameters. It can still be shown (see Lemma \ref{lemma:expectation}) that any principle eigenvector of $\EE\ybar_0$ is a multiple of $\x$. However, the quantity $\ltwo{\x}$ is not available in the observed data, and one may use
\begin{equation}\label{eq:defnu}
\nu = \left(\frac{1}{m}\sum_{i=1}^m y_i^2\right)^{1/2}
\end{equation}
as an estimation of $\ltwo{\x}$ (see \cref{lemma:nu:concentration}). Therefore, instead of $\ybar_0$, we shall use the empirical estimation $\ybar$ defined as follows
\begin{equation}\label{eq:defYbar}
	\ybar = \frac{1}{m}\sum_{i=1}^m y_i^2\a_i\a_i^T\indicator_{\{l\leq (y_i/\nu)\leq u\}}.
\end{equation}
Since truncation leads to bounded random variables, $\ybar$ has a higher concentration around its expectation than $\Y$. Thus, an eigenvector of $\ybar$ with a suitable length gives a better approximation to $\x$ than $\Y$. In this way, we obtain spectral initialization and its variants for general phase retrieval. It can be shown that $m=\Omega(n\log n)$ \cite[Lemma 5]{ma2018implicit} (resp. $m=\Omega(n)$ \cite[Proposition C.1]{chen2017solving}) samples are sufficient to guarantee the spectral initialization with $\Y$ (resp. $\ybar$) to obtain a desired initial guess, for general underlying signal $\x$ without sparsity assumption. Spectral methods based on $\Y$ and $\ybar$ are widely used and achieve order-wisely near-optimal sample complexity, but they may not provide the tightest estimates in terms of the over-sampling ratio $m/n$. An interesting question is how to perform so-called weak recovery by producing an estimator that is positively correlated with the underlying signal $\bm{x}$ using an optimal ratio $m/n$. One way to achieve this is by applying a preprocessing function $\Gamma(y_i)$ to $y_i$ and constructing the matrix $\sum_{i=1}^m\Gamma(y_i)\bm{a}_i\bm{a}_i^T$ instead of using $\Y$ or $\ybar$. By carefully examining the underlying statistical models, previous work \cite{luo2019optimal,mondelli2019fundamental} has provided an asymptotic characterization of spectral initialization, indicating that the preprocessing function $\Gamma(y_i)$ that optimizes the minimum ratio $m/n$ required for weak recovery can be determined analytically. In the following, we consider spectral methods under sparsity assumption to further reduce the sample complexity.

\subsection{Spectral initialization for sparse phase retrieval}\label{sec:specinit}
When $\x$ is $s$-sparse, the sample complexity can be reduced significantly by exploiting the sparsity prior of $\x$. The main idea is to separately estimate the support and non-zeros of $\x$. If the support $\calS=\supp{(\x)}$ is known or estimated, then $\EE([\Y]_{\calS})$ (or its variants such as $\EE([\ybar_0]_{\calS})$) has $\x_{\calS}$ as its principle eigenvectors. Based on this observation, spectral initialization in existing sparse phase retrieval methods first estimates an index set $\hat\calS$ as the support of the initial guess and then sets a rescaled principal eigenvector of $[\Y]_{\hat\calS}$ (or its variants) as non-zeros of the initial estimate (see \cite{wang2017sparse,jagatap2019sample}). Only a small principal submatrix of $\Y$ (or its variants) is involved in the procedure, and a high concentration of this small random submatrix suffices to give an accurate initialization. Since it is much easier to concentrate a small random matrix than a larger one, much fewer samples are required in spectral initialization for sparse phase retrieval.

In existing approaches \cite{wang2017sparse,jagatap2019sample}, the magnitudes of the diagonal entries of $\Y$ are used to estimate $\calS$. By a simple calculation, the expectation of diagonal entries of $\Y$ satisfy
\begin{align}\label{eq:EYjj}
	|[\EE\Y]_{jj}| = \left\{
	\begin{array}{rlc}
		&\ltwo{\x}^2+2x_j^2, &j \in \calS,\\
		&\ltwo{\x}^2, &j\in\calS^c.
	\end{array}
	\right.
\end{align}
Statistically, the diagonal entries on $\calS$ are larger than those on $\calS^c$. Thus, one simply chooses the set of indices of top-$s$ entries of $\Y$ as an estimation $\hat\calS$ of $\calS$, i.e.,
\begin{equation}\label{eq:hatSdiag}
	\hat\calS =  \{\text{indices of top $s$ elements of $[\Y]_{jj}$ in absolute value}\}.
\end{equation}

However, this approach of estimating $\hat\calS$ results in a spectral initialization with sample complexity $m=\Omega(s^2\log n)$ \cite{jagatap2019sample}. Though this sample complexity is better than $m=\Omega(n\log n)$ in general phase retrieval, it is still unnecessarily large and not optimal in $s$.

Actually, the sample complexity of this approach is governed by the gap between diagonals $[\EE\Y]_{jj}$ for $j\in\calS$ and $j\in\calS^c$ as in the following
\begin{equation}\label{eq:gamma0}
	\gamma_0: = \frac12\left(\min_{j\in\calS}|[\EE\Y]_{jj}| - \max_{j\in\calS^c}|[\EE\Y]_{jj}|\right) = \min_{j\in\calS}|x_j|^2.
\end{equation}
If $\max_j|[\EE\Y]_{jj}-[\Y]_{jj}|\leq \gamma_0$, then, for any $j\in\calS$ and $j'\in\calS^c$,
\begin{equation*}
	\begin{split}
		|[\Y]_{jj}|&\geq|[\EE\Y]_{jj}|-|[\EE\Y]_{jj}-[\Y]_{jj}|\geq
		|[\EE\Y]_{jj}|-\gamma_0\cr
		&\geq |[\EE\Y]_{j'j'}|+\gamma_0\geq|[\EE\Y]_{j'j'}|+|[\EE\Y]_{j'j'}-[\Y]_{j'j'}|
		\geq |[\Y]_{j'j'}|,
	\end{split}
\end{equation*}
which together with \eqref{eq:hatSdiag} implies $\hat\calS=\calS$; otherwise, those indices $j\in\calS$ satisfying $|[\EE\Y]_{jj}-[\Y]_{jj}|>\gamma_0$ might be missed in $\hat\calS$. Therefore, the gap $\gamma_0$ is the tolerance of the concentration error $|[\EE\Y]_{jj}-[\Y]_{jj}|$ for an accurate $\hat\calS$ and an accurate initial estimation. The larger gap $\gamma_0$, the larger tolerance of $|[\EE\Y]_{jj}-[\Y]_{jj}|$, and the fewer samples required. When random Gaussian measurement vectors are used, it can be shown that $\gamma_0$ in \eqref{eq:gamma0} finally leads to the sample complexity $m=\Omega(s^2\log n)$ for spectral initialization \cite{jagatap2019sample}.

\subsection{Truncated power method}\label{section:tpm}
To further reduce the sample complexity, we introduce a new initialization method, which is a truncated power method \cite{yuan2013truncated} with a modified spectral initialization.

\subsubsection{Sparse principal component analysis problem.}
Our algorithm is based on the following proposition, which characterizes $\x$ as the solution of a sparse eigenvector problem.
\begin{proposition}
Any solution to the following optimization problem is a multiple of $\x$:
\begin{equation}\label{eq:sparsePCA}
\max_{\w} \w^T(\EE\Y)\w\quad\text{s.t.}\quad\ltwo{\w} = 1,~\lzero{\w}\leq s,
\end{equation}
where $\Y$ is the matrix defined in \eqref{eq:defY}.
\end{proposition}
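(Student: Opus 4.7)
The plan is to exploit the explicit formula $\EE\Y = \ltwo{\x}^2\I + 2\x\x^T$ from equation \eqref{eq:EY}, which reduces the sparse PCA problem \eqref{eq:sparsePCA} to a transparent maximization of $(\w^T\x)^2$ under a sparsity-plus-unit-norm constraint.

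First I would substitute the expression for $\EE\Y$ into the objective and use $\ltwo{\w}=1$ to rewrite
\begin{equation*}
\w^T(\EE\Y)\w = \ltwo{\x}^2\ltwo{\w}^2 + 2(\w^T\x)^2 = \ltwo{\x}^2 + 2(\w^T\x)^2.
\end{equation*}
Since the first term is a constant, \eqref{eq:sparsePCA} is equivalent to maximizing $|\w^T\x|$ over unit vectors $\w$ with $\lzero{\w}\leq s$.

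Next I would bound $|\w^T\x|$ using Cauchy--Schwarz combined with the support structure. Letting $\calT=\supp(\w)$, so $|\calT|\leq s$, I would write $\w^T\x = \w_\calT^T \x_\calT$ and apply Cauchy--Schwarz to get $|\w^T\x|\leq \ltwo{\w_\calT}\ltwo{\x_\calT} = \ltwo{\x_\calT}\leq \ltwo{\x}$. The outer inequality is an equality if and only if $\supp(\x)\subseteq \calT$, which is feasible because $\lzero{\x}\leq s$ and we are allowed to choose $|\calT|\leq s$. The Cauchy--Schwarz inequality is an equality if and only if $\w_\calT$ is a scalar multiple of $\x_\calT$; combined with $\ltwo{\w}=1$, this forces $\w_\calT = \pm \x_\calT / \ltwo{\x_\calT}$.

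Finally I would combine the two equality conditions to conclude. Any maximizer must have $\supp(\x)\subseteq \calT$ and $\w_{\calT}$ proportional to $\x_{\calT}$; since $\x$ vanishes on $\calT\setminus\supp(\x)$, proportionality forces $\w$ to vanish there as well. Hence $\w$ is supported exactly on $\supp(\x)$ and equals $\pm \x/\ltwo{\x}$, i.e., a multiple of $\x$, completing the proof. I do not foresee a real obstacle: the only subtlety is handling the case $\lzero{\x}<s$, which is resolved by noting that the extra coordinates in $\calT\setminus \supp(\x)$ must be zero because they contribute nothing to $\w^T\x$ yet would waste unit-norm budget otherwise.
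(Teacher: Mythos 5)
Your proof is correct and follows essentially the same route as the paper: both hinge on the identity $\EE\Y=\ltwo{\x}^2\I+2\x\x^T$ and reduce the problem to maximizing alignment with $\x$ over $s$-sparse unit vectors. The only difference is cosmetic — the paper splits into an outer maximization over supports and an inner eigenvalue problem (using simplicity of the top eigenvalue of $[\EE\Y]_{\calS}$ for uniqueness), whereas you substitute the formula directly and settle the equality case with Cauchy--Schwarz, which is if anything slightly more elementary.
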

\begin{proof}
Let the support of the variable $\w$ in \eqref{eq:sparsePCA} be $\calT$. Then $\calT$ is a subset of $[n]$ satisfying $|\calT|\leq s$, and the objective function is rewritten as $\w^T(\EE\Y)\w=\w_{\calT}^T[\EE\Y]_{\calT}\w_{\calT}$ with $\ltwo{\w_{\calT}}=1$. Therefore, \eqref{eq:sparsePCA} is equivalent to
\begin{equation}\label{eq:sparsePCA2}
    \max_{|\calT|\leq s}~\max_{\ltwo{\bm{v}}=1}~\v^T[\EE\Y]_{\calT}\v.
\end{equation}
Using the variational property of eigenvalues of symmetric matrices, the optimal value of the inner maximization is the maximum eigenvalue of $[\EE\Y]_{\calT}$, which by \eqref{eq:defY} is $\ltwo{\x_{\calT}}^2+2$. Thus, \eqref{eq:sparsePCA} is further equivalent to
$$
\max_{|\calT|\leq s}~\ltwo{\x_{\calT}}^2+2.
$$
The optimal value of the above optimization is obviously $\ltwo{\x_{\calS}}^2+2$, which is attained at $\calT=\calS$. Since the maximum eigenvalue of $[\EE\Y]_{\calS}$ is simple and the corresponding eigenvectors are multiples of $\x_{\calS}$, solutions to \eqref{eq:sparsePCA2} are $\calT=\calS$ and $\bm{v}$ being multiples of $\x_{\calS}$, which eventually implies that solutions to the original problem \eqref{eq:sparsePCA} are multiples of $\x$.
\end{proof}

Like the spectral method, we replace the expectation in \eqref{eq:sparsePCA} with its truncated empirical version $\ybar$, and we solve
the following quadratic maximization problem on the unit sphere with a sparsity constraint:
\begin{align}\label{prob:sparseeigenvalue}
	\max~\w^T\ybar\w,\quad\text{s.t.}\quad \ltwo{\w} = 1,~\lzero{\w}\leq s.
\end{align}
Problem \eqref{prob:sparseeigenvalue} aims to find an $s$-sparse principal eigenvector of $\ybar$. It is called a sparse principal component analysis (sparse PCA) problem and has been studied in the literature \cite{moghaddam2005spectral,journee2010generalized,ma2013sparse,yuan2013truncated}. Let $\x_0$ be $\nu$ (defined in \eqref{eq:defnu}) times a solution to the problem~\eqref{prob:sparseeigenvalue}. Then, it has been shown in~\cite{liu2021towards} that $\Omega(s\log n)$ samples are sufficient to guarantee $\x_0$ is a desired initial guess for non-convex algorithms.
Various algorithms are available for solving \eqref{prob:sparseeigenvalue} in the studies, e.g., \cite{moghaddam2005spectral,journee2010generalized,ma2013sparse,yuan2013truncated,balakrishnan2017computationally}. However, the sparse PCA problem is known to suffer from a statistical-to-computational gap, and any computationally efficient method must pay a statistical price, even under natural distributional assumptions \cite{berthet2013complexity}. For instance, in \cite{balakrishnan2017computationally}, a convex relaxation is used to overcome the computational challenge and achieve a practical (polynomial time) algorithm, but this approach requires $\Omega(s^2\log(n/s))$ samples. Therefore, the proposed algorithm for sparse PCA also faces a statistical-computational trade-off and has a bottleneck of $\Omega(s\bar{s}_x \log n)$.  Meanwhile, the algorithms for sparse PCA are not tailored for our problem, and their empirical and theoretical performance is not clear in sparse phase retrieval.

\subsubsection{Truncated power iteration.}
We adopt the truncated power method \cite{yuan2013truncated} to solve the sparse PCA problem \eqref{prob:sparseeigenvalue}. We also provide the theoretical sample complexity to produce the desired initialization for non-convex sparse phase retrieval algorithms, showing that the truncated power method achieves a nearly optimal sample complexity when the underlying signal contains only very few significant components.

The truncated power method extends the popular power method to the case where the target principle eigenvector is sparse. The standard power method is to find a principal eigenvector of a given matrix, say $\ybar$, by solving \eqref{prob:sparseeigenvalue} without the sparsity constraint. The iteration is
\begin{equation}\label{eq:powermethod}
\w_t = \ybar\w_{t-1}^0,\quad \w_{t}^0=\frac{\w_t}{\ltwo{\w_{t}}}.
\end{equation}
It can be proved that, under mild assumptions, the angle between $\w_t$ and the principal eigenspace of $\ybar$ converges to $0$. However, the power method generally gives a dense principal eigenvector. To obtain a sparse principal eigenvector via \eqref{prob:sparseeigenvalue}, we modify \eqref{eq:powermethod} by applying a truncation operator to sparsify the iteration vector. More specifically, we define a truncation operator
\begin{align*}
	T_s(\w):~ \text{keep the top $s$ entries of $\w$ in absolute values and set others to be}~0,
\end{align*}
and generate a sequence $\{\w_t^0\}_{t=0,1,\ldots.}$ by
\begin{equation}\label{eq:itertp}
\w_t = T_s\left(\ybar\w_{t-1}^0\right),\quad \w_{t}^0=\frac{\w_t}{\ltwo{\w_{t}}}.
\end{equation}
The above iteration is the truncated power method \cite{yuan2013truncated}. At each iteration, the truncated power method uses the truncation operator $T_s$ to ensure that the iteration vector is $s$-sparse while aligning with the principal eigenspace of an order-$s$ principal minor of $\ybar$.

\subsubsection{Initial \texorpdfstring{$\w_0^0$}.}\label{sec:initfortpm}
Provided a proper $\w_0^0$, linear convergence of the truncated power method has been proved in \cite{yuan2013truncated} under certain conditions. However, the theoretical result developed in \cite{yuan2013truncated} is for general purpose, and the conditions for convergence cannot work trivially for our problem. Also, the requirement on $\w_0^0$ is stringent and does not link directly to the sample complexity. Moreover, the efficiency of the truncated power method depends on $\w_0^0$, and it is challenging to construct a good $\w_0^0$.

To address this, we introduce a new method to produce the initial $\w_0^0$ for the truncated power method in our case. We aim to find a unit vector $\w_0^0$ to align with $\x$ well. Spectral initialization \cite{wang2017sparse,jagatap2019sample} presented in \cref{sec:specinit} is also applicable here but leads to unnecessarily large sample complexity, as mentioned. Therefore, we modify the spectral initialization to produce $\w_0^0$. Recall that the spectral initialization first estimates the support of $\x$ by taking the indices set $\hat{\calS}$ of the top $s$ diagonal entries of $\Y$ (see \eqref{eq:EYjj} and \eqref{eq:hatSdiag}) and then use a rescaled eigenvector of $[\ybar]_{\hat{\calS}}$ to approximate non-zero entries of $\x$. Following \cite{wu2021hadamard}, instead of the diagonal entries, we use the indices of the $s$ largest number of entries of $\Y\e_{j_0}$ in absolute value as $\hat\calS$, where $j_0$ is the index that $Y_{jj}$ achieves maximum. Our choice of $\hat{\calS}$ is based on the observation that
\begin{align}\label{eq:EYj0}
	|[\EE\Y\e_{j_0}]_j| = \left\{
	\begin{array}{rlc}
		&2x_{j_0}^2+\ltwo{\x}^{2}, &j = j_0,\\
		&2|x_{j_0}x_j|, &j\in\calS\backslash\{j_0\},\\
		&0, &j\in\calS^c.
	\end{array}
	\right.
\end{align}
The result in \eqref{eq:EYj0} indicates that $|[\Y\e_{j_0}]_j|$ is statistically larger at indices $j\in \calS$ than that at $j\in \calS^c$, as long as $j_0\in\calS$ (which is proved later in \cref{lemma:xj0} under mild conditions).  Thus, we use the estimation
\begin{equation}\label{eq:hatS}
	\hat\calS =  \{\text{indices of top $s$ elements of $\Y\e_{j_0}$ in absolute value}\}
\end{equation}
as the support of $\w_0^0$. Then, similar to the standard spectral initialization, we set a unit principle eigenvector of $[\ybar]_{\hat\calS}$ as non-zeros of $\w_0^0$.

Compared to the spectral initialization, \eqref{eq:hatS} gives a more accurate $\hat\calS$ and uses fewer samples. Similar to the argument in \cref{sec:specinit}, the sample complexity of our method for $\w_0^0$ depends on the following gap
$$
\gamma_1 = \frac12\left(\min_{j\in\calS}|[\EE\Y\e_{j_0}]_j| - \max_{j\in\calS^c}|[\EE\Y\e_{j_0}]_j|\right) = |x_{j_0}|\cdot\min_{j\in\calS}|x_j|.
$$
A larger gap $\gamma_1$ leads to a smaller sample complexity. As we will show, we can always find $j_0$ such that $|x_{j_0}|\geq \frac{1}{2}\linf{\x}$ with high probability. When $\linf{\x}$ is significantly larger than $\min_{j\in\calS}|x_j|$, the gap $\gamma_1$ depends only linearly on $\min_{j\in\calS}|x_j|$, which improves order-wisely the gap $\gamma_0$ in the original spectral initialization. Therefore, our method for $\w_0^0$ requires order-wisely fewer samples than the original spectral initialization. Indeed, we will show later that, when $\linf{\x}$ is as large as $\calO(\ltwo{\x})$, our method requires only $\Omega(s\log n)$ samples to produce a good $\w_0^0$ for the truncated power method and even for non-convex sparse phase retrieval algorithms.

\subsubsection{Full algorithm of the truncated power method.}
With the initial $\w_0^0$, we then perform the main iteration \eqref{eq:itertp} in the truncated power method. To improve the accuracy, we slightly increase the sparsity from $s$ to $s'$ during the main iteration to search the principal component on slightly larger support. More explicitly, we replace $T_s$ in \eqref{eq:itertp} with $T_{s'}$. After the main iteration, we need to project the result back to the $s$-sparse set to obtain the final output of our truncated power iteration as the initialization of any non-convex sparse phase retrieval algorithm. The details of the full algorithm are summarized in \cref{alg:init}.
\begin{algorithm}[H]
	\caption{Truncated Power Method Initialization}\label{alg:init}
	\begin{algorithmic}
		\STATE{\textbf{Input } Measurement vectors and observations $\{\a_i,y_i\}_{i=1}^m$, the sparsity $s$, parameter $s'$, maximum iteration $t_{\max}$.}
		\STATE{Compute $\{Y_{jj}\}_{j=1}^n$ and set $j_0 = \arg\max_{1\leq j\leq n} Y_{jj}$.}
		\STATE{Set $\hat\calS$ be the set with top $s$ elements of $\Y\e_{j_0}$ in absolute value.}
		\STATE{Form $[\ybar]_{\hat\calS}$ and set $\w_0^0$ be the top singular vector of the matrix $[\ybar]_{\hat\calS}$ of unit length.}
		\FOR{$t = 1,\ldots,t_{\max}$}
		\STATE{$\w_t = T_{s'}(\ybar\w_{t-1}^0)$}
		\STATE{$\w_t^0 = \w_t/\ltwo{\w_t}$}
		\ENDFOR
		\STATE{$\hat\x^0 = T_s(\w_{t_{\max}}^0)$}
		\STATE{\textbf{Output } $\hat\x = \nu\hat\x^0$ with $\nu$ in \eqref{eq:defnu}. }
	\end{algorithmic}
\end{algorithm}

\subsection{Modified spectral initialization}\label{section:ossi}
As mentioned in \cref{sec:initfortpm}, the modified spectral initialization for the truncated power method improves the standard spectral initialization for non-convex sparse phase retrieval. Therefore, the modified spectral initialization can also be used to initialize non-convex sparse phase retrieval. We obtain a modified spectral method summarized in \cref{alg:init:onestep}. A similar idea by using the set with top $s$ elements of $\Y\e_{j_0}$ in absolute value was adopted in \cite{wu2021hadamard}. However, their theory was established based on some lower bound assumption on $x_{\min}$, which allows only a very small dynamic range of the nonzero entries of the underlying signal. We will show that this assumption is not necessary.
\begin{algorithm}[H]
	\caption{Modified Spectral Initialization}\label{alg:init:onestep}
	\begin{algorithmic}
		\STATE{\textbf{Input } Measurement vectors and observations $\{\a_i,y_i\}_{i=1}^m$, the sparsity $s$.}
		\STATE{Compute $\{Y_{jj}\}_{j=1}^n$ and set $j_0 = \arg\max_{1\leq j\leq n} Y_{jj}$.}
		\STATE{Set $\hat\calS$ be the set with top $s$ elements of $\Y\e_{j_0}$ in absolute value.}
		\STATE{Form $[\ybar]_{\hat\calS}$ and set $\hat{\x}^0$ be the top singular vector of the matrix $[\ybar]_{\hat\calS}$ of unit length.}
		\STATE{\textbf{Output } $\hat\x = \nu\hat\x^0$ with $\nu$ in \eqref{eq:defnu}.}
	\end{algorithmic}
\end{algorithm}
As we shall see in the rest of the paper, theoretical guarantees and numerical experiments confirm that \cref{alg:init:onestep} can also achieve a nearly optimal sample complexity but with a larger constant than \cref{alg:init}, under a suitable assumption on the underlying signal.

\section{Theoretical Results}\label{section:theory}
In this section, we provide theoretical results on both Algorithms \ref{alg:init} and \ref{alg:init:onestep}. We will show that our algorithms are guaranteed to provide the desired estimation for non-convex sparse phase retrieval algorithms with a low sample complexity. We also discuss the significance of our theoretical results compared to existing results in the literature.

\subsection{Theoretical guarantee and sample complexity}
Like most of the theories of non-convex sparse phase retrieval, we assume that the measurement vectors $\{\a_i\}_{i=1}^{m}\subset\mathbb{R}^n$ are i.i.d. random Gaussian with mean the $\bm{0}$ vector and covariance the identity matrix.
Under this assumption, the refinement stage of most non-convex sparse phase retrieval algorithms requires an initialization $\hat\x$ that is $\delta$-close to $\x$, i.e., $\dist(\x,\hat\x)\leq \delta\ltwo{\x}$ for some universal constant $\delta\in(0,1)$.

To present our result, we recall that the \emph{stable sparsity} of $\x$ is defined by $\stablesparsity{\x}:= \frac{\ltwo{\x}^2}{\linf{\x}^2}$,
and it is clear that $\stablesparsity{\x}\in[1,s]$. When there are only very few non-zero entries significantly larger than others in absolute value, the stable sparsity counts only large non-zero entries but not small ones. Therefore, stable sparsity is a more accurate quantity than sparsity to describe the number of significant non-zero entries of a vector.

Since the modified spectral initialization \cref{alg:init:onestep} is a simplification of the truncated power method initialization \cref{alg:init}, we first give our theoretical result on the simpler version \cref{alg:init:onestep}.  Our result states that the output of Algorithm \ref{alg:init:onestep} with $m=\Omega\left(\blue{\max\{\delta^{-1}(\stablesparsity{\x}s)^{1/2}\log^3n, \delta^{-2}\stablesparsity{\x}s\log n\}}\right)$ Gaussian measurements is guaranteed to stay in a $\delta$-neighborhood of $\x$ with high probability.  We summarize the result in the following theorem, whose proof is relegated to \cref{proof:thm:onestep}.
\begin{theorem}[Desired initialization guarantee of \cref{alg:init:onestep}]\label{thm:onestep}
	Let $\x\in\RR^n$ be an $s$-sparse vector. Let $y_i = |\inp{\a_i}{\x}|$, $i=1,\ldots,m$, be $m$ phaseless measurements of $\x$ without noise, where $\a_i\sim N(0,\I_n)$, $1\leq i\leq m$. There exist universal constants $l$, $u$, and $C>0$ such that: for any $\delta\in(0,1)$, if
	\blue{$m\geq C\max\{\delta^{-1}(\stablesparsity{\x}s)^{1/2}\log^3n, \delta^{-2}\stablesparsity{\x}s\log n\},$}
	then with probability exceeding $1-5m^{-1}n^{-1}$, the output of Algorithm \ref{alg:init:onestep} with truncation parameters $l,u$ satisfies $\dist(\x,\hat{\x}) \leq \delta\ltwo{\x}$.
\end{theorem}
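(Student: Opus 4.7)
My plan is to decompose $\dist(\x,\hat\x)$ into three error sources and control each one in turn: (i) the scale estimate, via $|\nu-\ltwo{\x}|$; (ii) the support estimate, via the symmetric difference of $\hat\calS$ and $\calS$; and (iii) the spectral estimate, via the angle between $\hat\x^0$ and $\x_{\hat\calS}/\ltwo{\x_{\hat\calS}}$.  The scale error is handled directly by \cref{lemma:nu:concentration}: $|\nu-\ltwo{\x}|\le\epsilon\ltwo{\x}$ with high probability as soon as $m\gtrsim\log n$, which is dominated by the other thresholds.  For the pilot index $j_0$, I would invoke \cref{lemma:xj0} (or its proof) to show $|x_{j_0}|\ge\tfrac12\linf{\x}$: this follows from Bernstein's inequality applied to the sub-exponential variables $y_i^2 a_{i,j}^2$ (whose $\psi_1$ norm is $\lesssim\ltwo{\x}^2$) and a union bound over $j\in[n]$, which gives $\max_j|Y_{jj}-[\EE\Y]_{jj}|\lesssim \linf{\x}^2$ whenever $m\gtrsim \stablesparsity{\x}\log n$; since by \eqref{eq:EYjj} the maximum of $[\EE\Y]_{jj}$ over $j$ exceeds the value at any $j$ with $|x_j|<\tfrac12\linf{\x}$ by $\Theta(\linf{\x}^2)$, the pilot $j_0$ lands on a significant coordinate.

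With $|x_{j_0}|\gtrsim\linf{\x}$ in hand, formula \eqref{eq:EYj0} shows that the entries of $\EE\Y\e_{j_0}$ are separated by at least $|x_{j_0}|\cdot|x_j|$ between $j\in\calS$ and $j\in\calS^c$.  A second Bernstein-plus-union-bound argument, this time on the scalar entries $y_i^2 a_{i,j}a_{i,j_0}$ uniformly in $j$, yields $\linf{\Y\e_{j_0}-\EE\Y\e_{j_0}}\lesssim \delta\ltwo{\x}\linf{\x}$.  Crucially, I do \emph{not} need $\hat\calS=\calS$; I only need $\hat\calS$ to contain every $j\in\calS$ for which $|x_j|$ exceeds a threshold of order $\delta\ltwo{\x}/\sqrt{\stablesparsity{\x}}$, because the coordinates of $\x$ left outside $\hat\calS$ can together carry at most $O(\delta\ltwo{\x})$ in $\ell_2$.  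On the selected $\hat\calS$, \cref{lemma:expectation} gives $[\EE\ybar]_{\hat\calS}=c_1\x_{\hat\calS}\x_{\hat\calS}^T+c_2\I$ for some $c_1,c_2>0$ depending on the truncation parameters $l,u$, so the leading unit eigenvector of $[\EE\ybar]_{\hat\calS}$ is $\pm\x_{\hat\calS}/\ltwo{\x_{\hat\calS}}$ with spectral gap $c_1\ltwo{\x_{\hat\calS}}^2=\Theta(\ltwo{\x}^2)$.  A Davis-Kahan bound then reduces the spectral error on $\hat\x^0$ to a bound on $\op{[\ybar]_{\hat\calS}-[\EE\ybar]_{\hat\calS}}$.

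The final step, and the main obstacle, is a matrix Bernstein inequality on the $s\times s$ random matrix $[\ybar]_{\hat\calS}-[\EE\ybar]_{\hat\calS}$ with the correct variance proxy.  A naive bound treats the summands $y_i^2\a_i\a_i^T\indicator_{\{l\le y_i/\nu\le u\}}$ restricted to $\hat\calS$ as generic rank-one matrices of operator-norm variance $O(s\ltwo{\x}^4)$, which would give the classical $m\gtrsim s^2\log n$ rate.  To improve this to $m\gtrsim \stablesparsity{\x}s\log n$, I would exploit the alignment of the ``heavy'' part of $y_i^2=|\inp{\a_i}{\x}|^2$ with $\x$: the truncation bounds $y_i/\nu$ by $u$ so that each summand's spectral contribution along directions orthogonal to $\x_{\hat\calS}$ has variance only $O(\linf{\x}^2\ltwo{\x}^2\cdot s)=O(\stablesparsity{\x}s\ltwo{\x}^4/s)\cdot s$, while the $\x_{\hat\calS}$-direction is absorbed into the signal part of $\EE\ybar$; combining this with a sub-exponential tail control for the remaining fluctuations yields $\op{[\ybar]_{\hat\calS}-[\EE\ybar]_{\hat\calS}}\lesssim \ltwo{\x}^2\sqrt{\stablesparsity{\x}s\log n/m}+\ltwo{\x}^2\sqrt{\stablesparsity{\x}s}\log^3 n/m$ with probability $1-O(m^{-1}n^{-1})$.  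Dividing by the gap $\Theta(\ltwo{\x}^2)$ and combining with the scale and support errors produces $\dist(\x,\hat\x)\le\delta\ltwo{\x}$, with the two terms in the sample-complexity $\max$ coming respectively from the Gaussian and the sub-exponential tails of matrix Bernstein.  The delicate variance accounting in this last step is what replaces the $s$ factor by $\stablesparsity{\x}$ and delivers the claimed near-optimal rate.
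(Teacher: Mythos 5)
Your overall architecture matches the paper's (pilot index $j_0$ from the diagonal, support selection from the column $\Y\e_{j_0}$, leading eigenvector of the truncated submatrix, rescale by $\nu$), but two quantitative steps are wrong in ways that break the argument as written. First, the threshold defining which coordinates of $\calS$ must be captured by $\hat\calS$ cannot be $\delta\ltwo{\x}/\sqrt{\stablesparsity{\x}}$. Up to $s$ coordinates may fall below the threshold $\tau$ and be missed, contributing $\ell_2$ mass up to $\sqrt{s}\,\tau$; with your $\tau$ this is $\sqrt{s/\stablesparsity{\x}}\,\delta\ltwo{\x}$, which is $\Theta(\sqrt{s}\,\delta\ltwo{\x})$ when $\stablesparsity{\x}=\calO(1)$ --- not $\calO(\delta\ltwo{\x})$. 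The paper's threshold is $\tau\asymp\delta\ltwo{\x}/\sqrt{s}$ (the set $\calS^+_\gamma$ in its Step 2), and consequently the entrywise concentration one must prove is $\linf{\Y\e_{j_0}-\EE\Y\e_{j_0}}\lesssim\delta\linf{\x}\ltwo{\x}/\sqrt{s}$, a factor $\sqrt{s}$ smaller than what you claim. It is exactly this requirement, via Bernstein with per-entry variance $\asymp\ltwo{\x}^4/m$, that forces $m\gtrsim\delta^{-2}(\ltwo{\x}^2/\linf{\x}^2)\,s\log n=\delta^{-2}\stablesparsity{\x}s\log n$, i.e.\ the dominant term of the theorem. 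In other words, the $\stablesparsity{\x}$ factor lives in the \emph{support-selection} step (and the improvement over $s^2\log n$ comes from the gap $|x_{j_0}||x_j|\approx\linf{\x}|x_j|$ of the statistic $\Y\e_{j_0}$ versus the gap $|x_j|^2$ of the diagonal), not where you put it.

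Second, your ``main obstacle'' --- a variance-alignment refinement of matrix Bernstein to get $\stablesparsity{\x}s\log n$ for $\op{[\ybar]_{\hat\calS}-[\EE\ybar]_{\hat\calS}}$ --- is both unsubstantiated and unnecessary. The paper's \cref{lemma:ssparseeigenvalue} bounds the uniform sparse spectral norm $\rho(\E,s)$ with $\E=\ybar-\EE\ybar_0$ by a plain $\varepsilon$-net over $s$-sparse unit vectors plus scalar Bernstein, and needs only $m\gtrsim\delta^{-2}s\log n$ for $\rho(\E,s)\lesssim\delta\ltwo{\x}^2$ --- no $\stablesparsity{\x}$ enters, and no alignment argument is needed because the truncation caps each summand at $u^2\ltwo{\x}^2$ in every direction. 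Two further points you gloss over matter here: (i) $\hat\calS$ is data-dependent, so you cannot apply matrix Bernstein to the single submatrix $[\ybar]_{\hat\calS}$; you need the bound uniformly over all $s$-subsets, which is precisely what $\rho(\E,s)$ and the $\binom{n}{s}$ union bound provide (and is the source of the $s\log n$ factor); and (ii) the population matrix with the clean form $c_1\x\x^T+c_2\I$ is $\EE\ybar_0$ (truncation at the true $\ltwo{\x}$), not $\EE\ybar$ (truncation at the random $\nu$), so one must also control $\ybar-\ybar_0$ using the concentration of $\nu$, as the paper does inside \cref{lemma:ssparseeigenvalue}. With the threshold corrected to $\delta\ltwo{\x}/\sqrt{s}$ and the $\stablesparsity{\x}$ burden moved to the support step, your outline becomes the paper's proof; as written, the residual-mass bound fails and the concentration claims do not add up to the stated sample complexity.
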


We then give our theoretical result on \cref{alg:init}, the truncated power method initialization.
We summarize the result in the following \cref{thm:init} and postpone the proof to \cref{sec:proof:tp}. Our result reveals that
we can further reduce the sample complexity to $m=\Omega\left(\blue{\max\{(\stablesparsity{\x}s)^{1/2}\log^3n, \stablesparsity{\x}s\log n,\delta^{-2}s\log n\}}\right)$ to produce a $\delta$-close initial estimation by using \cref{alg:init}.
\begin{theorem}[Desired initialization guarantee of \cref{alg:init}]\label{thm:init}	
	Let $\x\in\RR^n$ be a $s$-sparse vector. Let $y_i = |\inp{\a_i}{\x}|$, $i=1,\ldots,m$, be $m$ phaseless measurements of $\x$ without noise, where $\a_i\sim N(0,\I_n)$, $1\leq i\leq m$. There exist universal constants $l$, $u$, and $C,C_1>0$ such that: for any $\delta\in(0,1)$, if
	\blue{$m\geq C\max\{(\stablesparsity{\x}s)^{1/2}\log^3n, \stablesparsity{\x}s\log n,\delta^{-2}s\log n\}$}
	then with probability exceeding $1-5m^{-1}n^{-1}$, the output of Algorithm \ref{alg:init:onestep} with parameters $l,u$, $s'=C_1s$, and $t_{\max}= \calO(\log(\delta^{-1}))$ satisfies $\dist(\x,\hat{\x}) \leq \delta\ltwo{\x}$.
\end{theorem}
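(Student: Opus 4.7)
The plan is to analyse \cref{alg:init} by mirroring its three stages: a modified-spectral warm start, a truncated power iteration, and a final sparse projection with rescaling. First I would invoke \cref{thm:onestep} with a fixed universal constant $\delta_0\in(0,1)$ (chosen small enough to sit in the basin of contraction established below) in place of the target accuracy $\delta$. This yields a vector $\w_0^0$ with $\dist(\w_0^0,\x/\ltwo{\x})\leq \delta_0$ with high probability as soon as $m\geq C\max\{(\stablesparsity{\x}s)^{1/2}\log^3 n,\stablesparsity{\x}s\log n\}$, which is precisely the $\stablesparsity{\x}$-dependent portion of the claimed sample complexity. Since the warm-start stage of \cref{alg:init} is identical to \cref{alg:init:onestep}, no new concentration analysis is needed for this step.

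Next I would analyse the truncated power loop following the framework of \cite{yuan2013truncated}, but specialised to the structure $\EE\ybar=\alpha\I+\beta\x\x^T$ supplied by \cref{lemma:expectation}. Two ingredients are required. The first is a restricted sparse concentration bound: because every iterate $\w_t^0$ is supported on at most $s'=C_1 s$ coordinates and $\x$ on $s$ coordinates, the iteration only probes principal submatrices of $\ybar-\EE\ybar$ of order at most $2s'+s=O(s)$. A standard Bernstein/$\epsilon$-net argument together with a union bound over all such index sets gives
\begin{equation*}
\sup_{|\calT|\leq 2s'+s}\rho\bigl([\ybar-\EE\ybar]_{\calT}\bigr)\leq \epsilon\,\ltwo{\x}^2
\end{equation*}
with high probability whenever $m\geq C\epsilon^{-2}s\log n$, and the choice $\epsilon=c\delta$ supplies the remaining $\delta^{-2}s\log n$ term of the sample bound. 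The second ingredient is a contraction lemma: writing $\u=\x/\ltwo{\x}$ and $\rho_t=\dist(\w_t^0,\u)$, the idealised iteration $\w\mapsto(\EE\ybar)\w/\ltwo{(\EE\ybar)\w}$ contracts angles by a factor $q<1$ determined by the spectral gap of $\alpha\I+\beta\x\x^T$, and because $s'>s$ the truncation $T_{s'}$ always retains $\supp(\x)$ whenever the iterate is reasonably aligned, so its error is controlled by the off-support mass, i.e.\ by $\rho_{t-1}$. Combining the two yields a recursion of the form $\rho_t\leq q\rho_{t-1}+C\epsilon$, and after $t_{\max}=O(\log(1/\delta))$ iterations one has $\rho_{t_{\max}}\lesssim \delta$.

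The last stage applies $T_s$ to $\w_{t_{\max}}^0$; since $\x$ is itself $s$-sparse, the standard bound $\ltwo{T_s(\v)-\v_{\supp(\x)}}\leq \ltwo{\v-\v_{\supp(\x)}}$ costs only a constant factor, and the length estimate $\nu$ concentrates at $\ltwo{\x}$ via \cref{lemma:nu:concentration} under the same sample budget, so $\dist(\hat\x,\x)\leq \delta\ltwo{\x}$ after adjusting universal constants. The main obstacle, and where careful bookkeeping will be necessary, is the contraction analysis of the second stage: the truncation $T_{s'}$ is discontinuous and the iterate supports drift, so one cannot appeal to \cite{yuan2013truncated} as a black box. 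One must show simultaneously that every iterate support stays inside an $O(s)$-sized envelope (so the restricted concentration bound applies), that the condition $s'>s$ prevents $T_{s'}$ from dropping any significant coordinate of $\x$ along the whole trajectory, and that the additive perturbation does not accumulate more than $O(\epsilon)$ over the $O(\log(1/\delta))$ iterations. It is in tuning these constants that the separation of the $\delta$-independent terms $\stablesparsity{\x}s\log n$ and $(\stablesparsity{\x}s)^{1/2}\log^3 n$ from the refinement term $\delta^{-2}s\log n$ cleanly emerges.
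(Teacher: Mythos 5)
Your proposal follows essentially the same route as the paper: a constant-accuracy warm start via \cref{thm:onestep}, a uniform concentration bound $\rho(\E,2s'+s)\lesssim\delta\ltwo{\x}^2$ over $O(s)$-sized supports (the paper's \cref{lemma:ssparseeigenvalue} with $t\asymp\delta$, supplying the $\delta^{-2}s\log n$ term), a per-iteration contraction with additive $O(\delta)$ error yielding geometric decay over $t_{\max}=O(\log(1/\delta))$ steps, and a final $T_s$ projection plus rescaling by $\nu$. The "careful bookkeeping" you flag is handled in the paper exactly as you anticipate — by working on the union support $\Lambda=F_{t-1}\cup F_t\cup\calS$ of size at most $2s'+s$ so the restricted concentration applies uniformly, and by invoking the per-iteration alignment lemmas of \cite{yuan2013truncated} (their Lemmas 11 and 12) together with \cref{lemma:barx-x} rather than their convergence theorem as a black box.
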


\blue{
If we focus on the regime when $s= \Theta(n^{\sigma})$ for some $\sigma\in(0,1)$, then the required sample size in \cref{alg:init:onestep} and \cref{alg:init} are respectively $\Omega(\delta^{-2}\stablesparsity{\x}s\log n)$ and $\Omega(\max\{\stablesparsity{\x},\delta^{-2}\}s\log n)$.
}
In most non-convex sparse phase retrieval algorithms, the radius $\delta$ of the local convergence basin in the refinement stage is a small constant (e.g., for HTP~\cite{cai2022sparse}, $\delta=\frac{1}{8}$).
\blue{
When $\stablesparsity{\x}\geq \delta^{-2}$, the sample complexity is improved from $\Omega(\delta^{-2}\stablesparsity{\x}s\log n)$ to $\Omega\left(\stablesparsity{\x}s\log n\right)$.
}
In this case, the improvement by \cref{alg:init} is significant --- the coefficient before $s\log n$ is reduced from $\delta^{-2}\stablesparsity{\x}$ to \blue{$\stablesparsity{\x}$}.
\blue{
In general, the coefficient in front of $s\log n$ is enhanced from $\stablesparsity{\x}\delta^{-2}$ to the maximum of $\stablesparsity{\x}$ and $\delta^{-2}$.
}
This improvement makes \cref{alg:init} has better numerical performance than \cref{alg:init:onestep}. Therefore, in the following, we mainly compare \cref{alg:init} with others.

As mentioned earlier above, the global sample complexity of most two-stage non-convex algorithms is dominated by the initialization stage. Therefore,  the sample efficiency of many two-stage algorithms can be improved when combined with the proposed initialization methods. Examples of such algorithms are SPARTA~\cite{wang2017sparse}, CoPRAM~\cite{jagatap2019sample}, thresholding/projected Wirtinger flow~\cite{cai2016optimal,soltanolkotabi2019structured}, SAM~\cite{cai2022sample} and HTP~\cite{cai2022sparse}, to just name a few. We use the two-stage hard thresholding pursuit (HTP) algorithm \cite{cai2022sparse}  as a typical example to illustrate this. When combined with the local refinement algorithm HTP \cite[Algorithm 1]{cai2022sparse}, we then have the two-stage algorithm (described in \cref{alg:htp} for completeness).  Also, to empirically get a better estimate, similar to \cite{wu2021hadamard}, for the TP and Modified Spectral method, we can further implement a multiple-restarted version. The details are given in Algorithm \ref{alg:tpmr}. We give the sample complexity of the algorithm in the following corollary.

\begin{corollary}\label{coro:HTP}
Assume the measurement vectors $\{\a_i\}_{i=1}^m$ are i.i.d. Gaussian random and the phaseless measurements $\{y_i\:=|\langle\a_i,\x\rangle|\}_{i=1}^{m}$ are noiseless. Then, \cref{alg:htp} is guaranteed to have an exact recovery of $\x$ in at most $K\le \calO(\log m + \log(\|\bm{x}\|_2/|x_{\min}|))$ local refinement iterations with high probability, provided \blue{$m=\Omega(\max\{(\stablesparsity{\x}s)^{1/2}\log^3n, \stablesparsity{\x}s\log n,\delta^{-2}s\log n\})$}.
\end{corollary}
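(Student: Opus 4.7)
The plan is to combine two existing ingredients in a single union-bound argument: (i) the initialization guarantee for Algorithm \ref{alg:init} provided by \cref{thm:init}, and (ii) the linear local-convergence / exact-recovery guarantee of the HTP refinement algorithm from \cite{cai2022sparse}. First I would fix $\delta$ to be the constant radius of the local convergence basin required by HTP (for instance $\delta = 1/8$ as in \cite{cai2022sparse}). Under the hypothesis $m \geq C \max\{(\stablesparsity{\x}s)^{1/2}\log^3 n,\, \stablesparsity{\x}s\log n,\, \delta^{-2}s\log n\}$, \cref{thm:init} (applied with parameters $l$, $u$, $s' = C_1 s$, $t_{\max} = \mathcal{O}(\log \delta^{-1}) = \mathcal{O}(1)$) yields, with probability at least $1 - 5m^{-1}n^{-1}$, an initial estimate $\hat\x$ satisfying $\dist(\x,\hat\x) \leq \delta \ltwo{\x}$.

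Next I would invoke the local refinement theorem for HTP from \cite{cai2022sparse}, which asserts that for i.i.d.\ Gaussian measurements with sample size $m = \Omega(s \log n)$, any $\delta$-close starting point (with $\delta$ the aforementioned constant) is driven to an exact recovery of $\x$ in at most $K \leq \mathcal{O}(\log m + \log(\ltwo{\x}/|x_{\min}|))$ iterations, with probability at least $1 - p(n,m)$ for some polynomially small failure probability $p(n,m)$. The term $\Omega(s \log n)$ is automatically dominated by our assumed sample complexity since $\stablesparsity{\x}\geq 1$ and $\delta^{-2} = \Theta(1)$. Feeding $\hat\x$ from Algorithm \ref{alg:init} as the warm start of HTP, the two statements chain together to produce the claimed bound on $K$.

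Finally I would take a union bound over (a) the event that Algorithm \ref{alg:init} delivers a $\delta$-close initializer, and (b) the event that the Gaussian design is well-behaved enough for HTP's local convergence guarantee to hold. Both failure probabilities are of the form $\mathrm{poly}(m,n)^{-1}$, so their sum remains small, giving the stated high-probability exact-recovery conclusion.

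The only genuine obstacle is the matching of probabilistic events: the initialization analysis in \cref{thm:init} and the HTP analysis in \cite{cai2022sparse} are both phrased in terms of concentration of functionals of $\{\a_i\}_{i=1}^m$, so strictly speaking one should verify that the relevant high-probability events can be defined on the same Gaussian sample and then combined via a union bound (rather than being conditioned on each other). This is routine because both analyses rely on concentration statements that hold uniformly over the relevant low-dimensional sets, but it is the one place a careful argument is needed; everything else is a direct citation of \cref{thm:init} and of the HTP local-convergence theorem.
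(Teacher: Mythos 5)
Your proposal is correct and follows essentially the same route as the paper, which simply cites Theorem~1 of \cite{cai2022sparse} together with \cref{thm:init} as a direct consequence. Your additional remarks on fixing $\delta$ to the HTP basin radius and on the union bound over the two high-probability events are a reasonable fleshing-out of that same one-line argument.
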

\begin{proof}
It is a direct consequence of \cite[Theorem 1]{cai2022sparse} and \cref{thm:init}.
\end{proof}

Compared to the recovery guarantee in \cite[Theorem 2]{cai2022sparse}, which requires $m=\Omega(\delta^{-2}s^2\log(n))$, \cref{coro:HTP} indicates that our approach is more sample-efficient. In particular, \blue{when $s= \Theta(n^{\sigma})$ for some $\sigma\in(0,1)$} and $\x$ contains only a few large entries in absolute value so that $\stablesparsity{\x}=\calO(1)$, the sample complexity of our algorithms is $\Omega(s\log n)$, which is nearly optimal. Since our methods \cref{alg:init} and \cref{alg:init:onestep} can initialize the refinement stage of any aforementioned non-convex sparse phase retrieval algorithms, the improvement in sample complexity also holds for them.

\begin{algorithm}[hbt!]
   \caption{Two-Stage HTP with TP Initialization}\label{alg:htp}
   \begin{algorithmic}[1]
     \STATE Input: Data $\left\{\bm{a}_i,y_i\right\}_{i=1}^{m}$, step size $\mu>0$ (e.g., $\mu=0.95$).
     \STATE \textcolor{OliveGreen}{// Initialization:}
     \STATE Let $\bm{x}_{0}=\hat\x$ be the initial guess produced by  \cref{alg:init}.

     \STATE \textcolor{OliveGreen}{// Local refinement updates:}
     \FOR{ $k=0,1,\ldots,K-1$ }
     \STATE $\bm{z}_{k+1}=\bm{A}\bm{x}_k$, \blue{where $\bm{A} = [\a_1,\ldots, \a_m]^T$}
     \STATE $\calS_{k+1}=\mathrm{supp}\Big(T_s\left(\bm{x}_{k}+\mu \bm{A}^T\left( \bm{y} \odot \mathrm{sgn}{\left(\bm{z}_{k+1}\right)}-\bm{z}_{k+1}\right)\right)\Big)$
     \STATE $\bm{x}_{k+1}=\mathop{\arg\, \min}\limits_{\mathrm{supp}\left(\bm{x}\right)\subseteq \calS_{k+1}}\frac{1}{2}\lVert \bm{A}\bm{x}-\bm{y} \odot \mathrm{sgn}{\left(\bm{z}_{k+1}\right)}\lVert_{2}^{2}$
     \ENDFOR
     \STATE Output $\bm{x}_K$.
    \end{algorithmic}
\end{algorithm}
\begin{algorithm}
	\caption{Truncated Power Method with Multiple Restarts (TP-MR)}\label{alg:tpmr}
	\begin{algorithmic}
		\STATE{\textbf{Input } Matrix $\Y,\ybar$,  $\bm{A} = [\a_1,\ldots, \a_m]^T$, sparsity $s$, maximum steps for truncated power method $t_{\max}$, restarted number $b$.}
		\STATE{Compute $\{Y_{jj}\}_{j=1}^n$.}
		\FOR{$b' = 1,\ldots, b$}
		\STATE{Set $j_0$ be the $b'$-th largest entry in $\{Y_{jj}\}_{j=1}^n$.}
		\STATE{Run TP (\cref{alg:init}) with Modified Spectral initialization to get $\hat\x_{b'}$.}
		\STATE{Run HTP with initialization $\hat\x_{b'}$ to get $\x_{b'}$.}
		\ENDFOR		
		\STATE{Set $b_{\min}$ to be the index minimizing $\ltwo{\A^T(\A\x_{b'} - |\y|\cdot\sign(\A\x_{b'}))}$.}
		\STATE{\textbf{Output } $\x_{b_{\min}}$}

	\end{algorithmic}
\end{algorithm}
\subsection{Discussion on sample complexity}

The sample complexity of \cref{alg:init} and \cref{alg:init:onestep} beats all existing initialization algorithms. We show in \cref{tab:samplecomlexity} a summary of the comparison of different initialization algorithms \blue{under the regime when $s= \Theta(n^{\sigma})$ for some $\sigma\in(0,1)$}. Since \cref{alg:init} has a lower sample complexity than \cref{alg:init:onestep}, we compare only \cref{alg:init} with other existing initialization methods.

To make the comparison more clearly, we assume $\blue{\stablesparsity{\x}}\geq \delta^{-2}$ so that the sample complexity of our truncated power method initialization \cref{alg:init} is $m=\blue{\Omega(\stablesparsity{\x}s\log n)}$ to produce a $\delta$-close approximation to $\x$. In \cite{wang2017sparse}, the sample complexity of the spectral initialization is $m=\Omega(\delta^{-2}s^2\log(n))$ with an additional assumption $\min_{i\in\calS}|x_i|\geq \ltwo{\x}/\sqrt{s}$ on the underlying signal. Later, \cite{jagatap2019sample} removed the additional assumption on the minimum entry. Since $\stablesparsity{\x}\leq s$, our \cref{alg:init} requires fewer samples than the spectral initialization. Especially, when there are only very few significant components in $\x$ such that $\stablesparsity{\x}=\calO(1)$, the sample complexity of \cref{alg:init} is $m=\Omega(s\log n)$, which is optimal in $s$ and order-wisely better than $m=\Omega(s^2\log(n))$ in \cite{wang2017sparse,jagatap2019sample}. Recently, \cite{wu2021hadamard} proposed a new initialization method that is a multiple run of \cref{alg:init:onestep} combined with a few steps of Hardmard Wirtinger flow, and the sample complexity of its initialization is $m=\Omega(\delta^{-2}\stablesparsity{\x} s\log(n))$ under the assumption that $\min_{i\in\calS}|x_i|\geq \ltwo{\x}/\sqrt{s}$. On the contrary, \cref{alg:init} not only achieves the same order of sample complexity without any assumption on $\x$, but also improves dependency on $\delta$ and $\stablesparsity{\x}$ from $\delta^{-2}\stablesparsity{\x}$ to $\max\{\delta^{-2},\stablesparsity{\x}\}$. Since $\delta$ is usually a small constant in various non-convex algorithms, this improvement is significant, which makes \cref{alg:init} initialized algorithms outperform others.

The sample complexity of our algorithms is $\Omega(\stablesparsity{\x}s\log n)$, which contains an additional term $\stablesparsity{\x}$ compared to the unconditionally optimal sample complexity $\Omega(s\log n)$. It is unlikely that we can remove this additional term. Indeed, similar phenomenon exists ubiquitously non-convex algorithms in the literature, such as low rank matrix recovery \cite{wei2016guarantees,tong2021accelerating} or low rank tensor recovery \cite{cai2022generalized,shen2022computationally,tong2022scaling}. For example, in the matrix case, we can analogously define the stable rank of a rank-$r$ matrix $\M\in\RR^{n\times n}$ as $\stablerank(\M) := \fro{\M}^2/\op{\M}^2$. It is trivial to see $\stablerank(\M)\in[1,r]$. Although the degree of freedom of a rank $r$ matrix is of order $O(nr)$, the required sample size for all existing non-convex low-rank matrix recovery algorithms is of order $O(nr^k)$ for some $k\geq 2$. The reason is that people usually bound $\stablerank(\M)$ simply by $r$, resulting in a sub-optimal order on $r$. In the low-rank matrix recovery case, we usually focus on the regime when $r$ is much smaller than $d$, and the sample size is optimal in $d$ no matter $k$. While in the sparse phase retrieval case, the trivial bound on $\stablesparsity{\x}$ will result in a sub-optimal order on $s$. Considering the analogy between the stable rank and stable sparsity, we conjecture that there is an information-theoretical gap between the degree of freedom of $\x$ and the actual required sample size. Despite the pessimistic existence of the gap, our result here requires a much fewer sample size compared to \cite{phase15,cai2016optimal,wang2017sparse,jagatap2019sample} and removes the restriction on $\x$ compared with \cite{wu2021hadamard}.

\section{Proofs}\label{section:proofs}
In this section, we present the proofs of Theorem \ref{thm:onestep} and Theorem \ref{thm:init}. We first introduce some technical lemmas in Section \ref{section:lemmas}. Then Theorem \ref{thm:onestep} and Theorem \ref{thm:init} are proved in \cref{proof:thm:onestep} and \cref{sec:proof:tp} respectively.

We introduce some notations that will be used throughout. Recall that the matrices $\Y$, $\ybar$, and $\ybar_0$ are defined in \eqref{eq:defY}, \eqref{eq:defYbar}, and \eqref{eq:defYbar0} respectively.
Then \blue{from Lemma \ref{lemma:expectation}}, we have $\EE\ybar_0 = (\beta - \alpha)\x\x^T + \alpha\ltwo{\x}^2\I$, where $\beta = \EE g^4\indicator_{\{l\leq |g|\leq u\}}$, $\alpha = \EE g^2\indicator_{\{l\leq |g|\leq u\}}$, and $g\sim N(0,1)$. It can be easily verified there exist universal constants $u\geq 1$ and $l\geq 0$ such that $\beta/\alpha \geq 2$, $\alpha\geq 1/2$ \blue{(for example, if we set $u=10,l=1/2$, then $\beta\approx 2.995,\alpha\approx 0.969$)}. We shall proceed with the proofs under this choice of $u,l$. We define a matrix
$$
\E = \ybar - \EE\ybar_0.
$$
We denote $\x^0 = \x/\ltwo{\x}$, the normalized version of $\x$.

\subsection{Technical Lemmas}\label{section:lemmas}
This section presents some technical lemmas that will be used to prove the theorems.
\blue{
\begin{lemma}\label{lemma:triangleinequality:dist}
	Let $\u_i\in\RR^n$ for $i = 1,2,3$, then
	$$\dist(\u_1,\u_2)\leq \dist(\u_1,\u_3) + \dist(\u_2,\u_3).$$
\end{lemma}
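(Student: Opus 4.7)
The plan is to reduce everything to the ordinary triangle inequality in $\ell_2$ after a careful choice of signs. First I would rewrite the quotient distance in its more convenient form
\begin{equation*}
\dist(\u,\v) \;=\; \min_{\epsilon\in\{\pm 1\}}\,\|\u-\epsilon\v\|_{\ell_2},
\end{equation*}
which makes explicit that $\dist$ is the quotient metric obtained from the equivalence $\u\sim -\u$ on $\RR^n$.

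Next, let $\epsilon_{13},\epsilon_{23}\in\{\pm 1\}$ be sign choices attaining the two minima, so that
\begin{equation*}
\dist(\u_1,\u_3)=\|\u_1-\epsilon_{13}\u_3\|_{\ell_2},\qquad \dist(\u_2,\u_3)=\|\u_2-\epsilon_{23}\u_3\|_{\ell_2}.
\end{equation*}
The key observation is that the product $\epsilon_{13}\epsilon_{23}$ again lies in $\{\pm 1\}$, so it is an admissible sign when bounding $\dist(\u_1,\u_2)$; in particular
\begin{equation*}
\dist(\u_1,\u_2)\;\leq\;\|\u_1-\epsilon_{13}\epsilon_{23}\u_2\|_{\ell_2}.
\end{equation*}

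Now I would insert $\pm \epsilon_{13}\u_3$ and apply the ordinary triangle inequality to split
\begin{equation*}
\|\u_1-\epsilon_{13}\epsilon_{23}\u_2\|_{\ell_2}\leq \|\u_1-\epsilon_{13}\u_3\|_{\ell_2}+\|\epsilon_{13}\u_3-\epsilon_{13}\epsilon_{23}\u_2\|_{\ell_2}.
\end{equation*}
Factoring $\epsilon_{13}$ out of the second term (its absolute value is $1$) and using $\epsilon_{23}=\epsilon_{23}^{-1}$ yields $\|\u_3-\epsilon_{23}\u_2\|_{\ell_2}=\|\u_2-\epsilon_{23}\u_3\|_{\ell_2}$, so the right-hand side becomes exactly $\dist(\u_1,\u_3)+\dist(\u_2,\u_3)$, completing the proof.

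There is really no substantial obstacle here; the only point that requires attention is that one must pick the two sign witnesses first and then realize that their \emph{product} is the correct sign with which to test $\dist(\u_1,\u_2)$. Beyond that, the argument is just the standard triangle inequality in $\ell_2$.
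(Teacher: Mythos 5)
Your proof is correct and follows essentially the same route as the paper's: fix the optimal sign witness for $\dist(\u_1,\u_3)$, insert $\pm\u_3$ with that sign, and apply the ordinary $\ell_2$ triangle inequality. The only cosmetic difference is that you name the second witness $\epsilon_{23}$ and test $\dist(\u_1,\u_2)$ with the product sign, whereas the paper keeps the minimization over the second sign inside the bound; both yield the identical estimate.
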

\begin{proof}
	Denote $\sigma_{1,3}:=\arg\min_{\sigma\in\{\pm 1\}}\ltwo{\u_1 - \u_3}$. Then using triangle inequality,
	\begin{align*}
		\dist(\u_1,\u_2) &= \min_{\sigma\in\{\pm 1\}}\ltwo{\u_1-\sigma \u_2}\leq \min_{\sigma\in\{\pm 1\}}\big[\ltwo{\u_1-\sigma_{1,3} \u_3} + \ltwo{\sigma_{1,3}\u_3 - \sigma \u_2}\big]\\
		&= \dist(\u_1,\u_3) + \min_{\sigma\in\{\pm 1\}}\ltwo{\u_2-\sigma_{1,3}\sigma\u_3} =  \dist(\u_1,u_3) + \dist(\u_2,\u_3).
	\end{align*}
And this finishes the proof.
\end{proof}

}

Next we then give the concentration bound for $\nu$ defined in \eqref{eq:defnu}.
\begin{lemma}\label{lemma:nu:concentration}
    With probability exceeding $1-\frac{2}{mn}$, $\nu: = \left(\frac{1}{m}\sum_{i=1}^my_i^2\right)^{1/2}$ has the following concentration:
    $$
|\nu^2 - \ltwo{\x}^2| \leq 3\sqrt{\frac{\log(mn)}{m}}\ltwo{\x}^2.
$$
\end{lemma}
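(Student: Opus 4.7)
The plan is to reduce the quantity $\nu^2 = \frac{1}{m}\sum_{i=1}^m y_i^2$ to a normalized sum of i.i.d. chi-square random variables and then invoke a standard sub-exponential concentration inequality. Since $\a_i \sim N(0,\I_n)$ and $y_i = |\langle \a_i,\x\rangle|$, the scalar $\langle \a_i,\x\rangle$ is $N(0,\ltwo{\x}^2)$, so we may write $y_i^2 = \ltwo{\x}^2 g_i^2$ where $g_1,\ldots,g_m \stackrel{\text{i.i.d.}}{\sim} N(0,1)$. Consequently,
\begin{equation*}
\nu^2 - \ltwo{\x}^2 \;=\; \ltwo{\x}^2 \cdot \Big(\tfrac{1}{m}\sum_{i=1}^{m}(g_i^2 - 1)\Big),
\end{equation*}
and it suffices to control $\bigl|\tfrac{1}{m}\sum_i (g_i^2-1)\bigr|$.

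Next, I would invoke a standard concentration bound for sums of i.i.d. centered sub-exponential variables. Each $g_i^2 - 1$ has $\psi_1$-norm bounded by a universal constant, so by Bernstein's inequality,
\begin{equation*}
\PP\!\left(\Big|\tfrac{1}{m}\sum_{i=1}^{m}(g_i^2 - 1)\Big| \;\geq\; t\right) \;\leq\; 2\exp\!\Big(-c\, m \min\{t^2,\,t\}\Big)
\end{equation*}
for some universal $c>0$. Choosing $t = 3\sqrt{\log(mn)/m}$ (which lies in the Gaussian regime $t \leq 1$ once $m$ is at least a constant multiple of $\log(mn)$), the right-hand side becomes $2\exp(-9c\log(mn))$. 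After absorbing the constant $9c$ into the leading factor (or, equivalently, by applying the Laurent--Massart chi-square tail bounds
\begin{equation*}
\PP\!\Big(\tfrac{1}{m}\textstyle\sum_i g_i^2 - 1 \geq 2\sqrt{t/m} + 2t/m\Big) \leq e^{-t}, \quad \PP\!\Big(\tfrac{1}{m}\textstyle\sum_i g_i^2 - 1 \leq -2\sqrt{t/m}\Big) \leq e^{-t}
\end{equation*}
with $t = \log(mn)$ and using $2\sqrt{t/m}+2t/m \leq 3\sqrt{t/m}$), the failure probability is at most $2/(mn)$.

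Combining the two bounds yields $|\nu^2 - \ltwo{\x}^2| \leq 3\sqrt{\log(mn)/m}\,\ltwo{\x}^2$ with probability exceeding $1 - 2/(mn)$, as claimed. The only mild subtlety is ensuring the chosen $t$ lies in the Gaussian (as opposed to Poissonian) tail regime of Bernstein, which holds under the implicit mild assumption that $m \gtrsim \log(mn)$; this is always subsumed by the sample complexity assumptions in the main theorems, so no extra care is needed. There is no real obstacle here beyond selecting constants cleanly.
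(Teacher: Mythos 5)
Your proposal is correct and follows essentially the same route as the paper: the paper's proof is a one-line application of the Laurent--Massart chi-square tail bound (Lemma 4.1 of \cite{laurent2000adaptive}) with $t=\log(mn)$, exactly as in your parenthetical alternative, after the same reduction of $\nu^2$ to a normalized sum of i.i.d.\ $\chi^2_1$ variables. The only difference is cosmetic --- you first phrase it via a generic sub-exponential Bernstein bound with an unspecified constant (which alone would not pin down the failure probability $2/(mn)$) before falling back on the explicit chi-square bound, and you correctly flag the implicit requirement $m\gtrsim\log(mn)$ that the paper leaves unstated.
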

\begin{proof}
    Using the concentration for sum of $\chi^2$ random variables (see Lemma 4.1 in \cite{laurent2000adaptive}), we see
    \begin{align*}
        \PP\bigg(|\nu^2 - \ltwo{\x}^2|\leq 3\max\left\{\sqrt{t/m},t/m\right\}\ltwo{\x}^2\bigg)\geq 1-2e^{-t}.
    \end{align*}
    Now setting $t=  \log(mn)$ and we get the desired result.
\end{proof}

\blue{
\begin{lemma}\label{lemma:expectation}
	Let $y = |\inp{\a}{\x}|$ for $\a\sim N(0,\I_n)$. For any $\alpha_2\geq \alpha_1\geq0$, we have
	$$\EE y^2\a\a^T\cdot\indicator_{\{\alpha_1\ltwo{\x}\leq y\leq \alpha_2\ltwo{\x}\}} = (\gamma_4 -\gamma_2)\x\x^T + \ltwo{\x}^2\gamma_2\I_n,$$
	where $\gamma_k =\EE g^k\indicator_{\{\alpha_1\leq |g|\leq \alpha_2\}}$ for $k\in\{2,4\}$ and $g\sim N(0,1)$.
\end{lemma}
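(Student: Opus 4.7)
The plan is to use the orthogonal decomposition of $\a$ relative to $\x$. Writing $\x^0 = \x/\ltwo{\x}$, I would set $g = \inp{\a}{\x^0}$ and $\a_\perp = \a - g\x^0$. Standard properties of the multivariate Gaussian give that $g\sim N(0,1)$ is independent of $\a_\perp$, and that $\a_\perp$ has mean zero with covariance $\I_n - \x^0(\x^0)^T$ (the projector onto $\x^{\perp}$). A key observation is that $y = |\inp{\a}{\x}| = \ltwo{\x}\cdot|g|$, so the indicator $\indicator_{\{\alpha_1\ltwo{\x}\leq y\leq \alpha_2\ltwo{\x}\}}$ reduces to $\indicator_{\{\alpha_1\leq |g|\leq \alpha_2\}}$, which depends only on $g$.

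Next, I would pull out $\ltwo{\x}^2$ from $y^2$ and expand
\begin{equation*}
g^2\a\a^T = g^4\x^0(\x^0)^T + g^3\x^0\a_\perp^T + g^3\a_\perp(\x^0)^T + g^2\a_\perp\a_\perp^T.
\end{equation*}
Multiplying through by the indicator and taking expectation, the two cross terms vanish because, by the independence of $g$ and $\a_\perp$, they factor as a scalar times $\EE\a_\perp = \bm{0}$. The remaining two terms factor similarly: the first contributes $\ltwo{\x}^2\,\EE[g^4\indicator_{\{\alpha_1\leq|g|\leq\alpha_2\}}]\,\x^0(\x^0)^T = \gamma_4\,\x\x^T/\ltwo{\x}^{-0}$, and the second contributes $\ltwo{\x}^2\,\gamma_2\,(\I_n - \x^0(\x^0)^T)$.

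Combining and using $\ltwo{\x}^2\x^0(\x^0)^T = \x\x^T$ yields
\begin{equation*}
\EE[y^2\a\a^T\indicator_{\{\alpha_1\ltwo{\x}\leq y\leq \alpha_2\ltwo{\x}\}}] = (\gamma_4-\gamma_2)\,\x\x^T + \ltwo{\x}^2\gamma_2\I_n,
\end{equation*}
which is exactly the claim. There is essentially no obstacle here: the entire argument rests on the rotational invariance of the Gaussian measure (so that $g$ and $\a_\perp$ decouple) together with the fact that the truncating indicator is a function of $g$ alone. The only minor care needed is in verifying that the cross terms genuinely vanish, but this follows immediately from $\EE\a_\perp = \bm{0}$ and the independence of $g$ and $\a_\perp$.
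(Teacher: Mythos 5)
Your proposal is correct and takes essentially the same approach as the paper: both rest on decomposing $\a$ into its component $g$ along $\x$ plus an independent remainder (the paper writes this entry-wise as $a_i = x_i g + t_i$, you write it vectorially as $\a = g\x^0 + \a_\perp$), and on the indicator depending only on $g$. Aside from a harmless typo ($\gamma_4\x\x^T/\ltwo{\x}^{-0}$ should just read $\gamma_4\x\x^T$), the argument matches the paper's.
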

\begin{proof}
	Without loss of generality we may assume $\ltwo{\x}=  1$. Let $g = \inp{\a}{\x}$, then $g\sim N(0,1)$. And the covariance between $a_i$ and $g$ is
	$$\text{Cov}(a_i,g) = \EE a_ig - \EE a_i\EE g = \EE a_i\sum_{j=1}^na_jx_j = x_i.$$
	Therefore we may write $a_i$ as $a_i = x_ig + t_i$ for some $t_i\sim N(0,1-x_i^2)$ that is independent of $g$. Now we first compute the expectations of diagonal entries. For any $i\in[n]$,
	\begin{align*}
		\EE y^2a_i^2 \indicator_{\{\alpha_1\ltwo{\x}\leq y\leq \alpha_2\ltwo{\x}\}}  &= \EE g^2(x_ig + t_i)^2 \cdot\indicator_{\{\alpha_1\leq |g|\leq \alpha_2\}}\\
		&= (x_i^2\EE g^4 +\EE t_i^2\EE g^2) \cdot\indicator_{\{\alpha_1\leq |g|\leq \alpha_2\}}\\
		&= x_i^2\gamma_4 + (1-x_i^2)\gamma_2.
	\end{align*}
As for the off-diagonal entries, for any $i\neq j \in [n]$, we have
\begin{align*}
	\EE y^2a_ia_j \indicator_{\{\alpha_1\ltwo{\x}\leq y\leq \alpha_2\ltwo{\x}\}}  &= \EE g^2(x_ig + t_i)(x_jg + t_j) \cdot\indicator_{\{\alpha_1\leq |g|\leq \alpha_2\}}\\
	&= x_ix_j\gamma_4 -x_ix_j\gamma_2,
\end{align*}
where in the last inequality we have used the fact that
$
	\EE t_it_j  = \EE(a_i-x_ig)(a_j-x_jg) = -x_ix_j.
$
And we conclude $\EE y^2\a\a^T\indicator_{\{\alpha_1\leq y\leq \alpha_2\}} = (\gamma_4 -\gamma_2)\x\x^T +\gamma_2\I_n.$

\end{proof}}

Next, we would like to bound the quantity $\rho(\E,s)$, which is defined in \eqref{eq:defrho} and is the maximum spectral norm of all order-$s$ principal minors of $\E$.
\begin{lemma}\label{lemma:ssparseeigenvalue}
	With probability exceeding $1- n^{-s}-\frac{2}{mn}$, for any $t>0$, there exists absolute constant $C>0$ such that
	$$\rho(\E,s) \leq 7t\ltwo{\x}^2$$ holds as long as
	$m\geq Cs\log n\max\{\frac{u^4}{t^2},\frac{u^2}{t}\}$.
\end{lemma}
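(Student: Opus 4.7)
The plan is to split the error as $\E = \E_1 + \E_2$, where
$$
\E_1 := \ybar - \ybar_0, \qquad \E_2 := \ybar_0 - \EE\ybar_0 = \frac{1}{m}\sum_{i=1}^m \Big(y_i^2\a_i\a_i^T\indicator_{\{l\ltwo{\x}\leq y_i\leq u\ltwo{\x}\}} - \EE[\cdot]\Big).
$$
The piece $\E_2$ is a sum of $m$ independent centered matrices and is amenable to a standard matrix concentration argument, while $\E_1$ only records the mismatch between the empirical normalizer $\nu$ and the true $\ltwo{\x}$ inside the truncation indicator. Since sparse spectral norms satisfy the triangle inequality $\rho(\E,s)\leq \rho(\E_1,s)+\rho(\E_2,s)$, it suffices to control each piece separately.

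For $\rho(\E_2,s)$: For each fixed $T\subseteq[n]$ with $|T|=s$, take a $1/4$-net $\calN$ of $S^{s-1}$ with $|\calN|\leq 9^s$, so that $\|(\E_2)_T\|\leq 2\max_{\u\in\calN}|\u^T(\E_2)_T\u|$. For a fixed unit $\u$,
$$
\u^T(\E_2)_T\u \;=\; \frac{1}{m}\sum_{i=1}^m Z_i,\qquad Z_i := y_i^2\langle\u,(\a_i)_T\rangle^2\indicator_{\{l\ltwo{\x}\leq y_i\leq u\ltwo{\x}\}} - \EE[\cdot],
$$
is an iid sum; after the truncation $y_i^2\leq u^2\ltwo{\x}^2$, each $Z_i$ is a scaled (centered) $\chi_1^2$ variable with sub-exponential norm $\|Z_i\|_{\psi_1}\lesssim u^2\ltwo{\x}^2$ and variance $\EE Z_i^2\lesssim u^4\ltwo{\x}^4$. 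Bernstein's inequality then yields $\PP(|\frac{1}{m}\sum_i Z_i|\geq t\ltwo{\x}^2)\leq 2\exp(-cm\min\{t^2/u^4,t/u^2\})$. A union bound over $\calN$ (factor $9^s$) and over the $\binom{n}{s}\leq n^s$ subsets $T$ contributes a log-factor of order $s\log n$, so the assumption $m\geq Cs\log n\max\{u^4/t^2, u^2/t\}$ gives $\rho(\E_2,s)\leq 6t\ltwo{\x}^2$ with failure probability at most $n^{-s}$.

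For $\rho(\E_1,s)$: Use Lemma \ref{lemma:nu:concentration} to condition on the event of probability at least $1-2/(mn)$ on which $\nu=(1+\xi)\ltwo{\x}$ with $|\xi|\leq 3\sqrt{\log(mn)/m}$. On this event, $\indicator_{\{l\leq y_i/\nu\leq u\}}$ differs from $\indicator_{\{l\leq y_i/\ltwo{\x}\leq u\}}$ only when $y_i/\ltwo{\x}$ lies in a pair of intervals of width $O(\xi)$ around $l$ and $u$. Thus every nonzero summand of $\E_1$ is of the form $\frac{1}{m}y_i^2\a_i\a_i^T$ with $y_i^2\leq u^2\ltwo{\x}^2$ and the indicator restricted to an $O(\xi)$-measure window. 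Running the same net-plus-Bernstein device as above on this truncated sum gives $\rho(\E_1,s)\lesssim \xi \cdot u^2\ltwo{\x}^2$, which under the stated sample size is dominated by the bound on $\rho(\E_2,s)$ and can be absorbed at the cost of an absolute constant. Combining the two pieces yields the claim $\rho(\E,s)\leq 7t\ltwo{\x}^2$ with the advertised probability $1-n^{-s}-2/(mn)$.

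The main technical obstacle is that the terms of $\E_1$ are \emph{not} independent: $\nu$ is a function of every $y_i$, so replacing $\ltwo{\x}$ by $\nu$ in the indicator couples all $m$ summands. I would resolve this by a deterministic-after-conditioning trick: on the good event for $\nu$, deterministically enlarge the truncation window from $[l,u]\ltwo{\x}$ to $[(1-|\xi|)l, (1+|\xi|)u]\ltwo{\x}$ so that the enlarged indicator dominates both $\indicator_{\{l\leq y_i/\nu\leq u\}}$ and $\indicator_{\{l\leq y_i/\ltwo{\x}\leq u\}}$. The mismatch is then an iid sum of bounded random matrices supported on a narrow annular window, whose sparse spectral norm can be controlled by the same Bernstein / $\epsilon$-net inequality used for $\E_2$, now with a prefactor $O(\xi)$ from the narrowness of the window. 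This cleanly decouples the $\nu$-dependence from the iid concentration.
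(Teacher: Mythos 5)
Your proposal is correct and follows essentially the same route as the paper: the same decomposition $\E=\E_1+\E_2$, the same $\epsilon$-net-over-sparse-supports plus Bernstein argument for $\E_2$, and the same conditioning-on-$\nu$ trick of dominating the indicator mismatch by deterministic narrow-window indicators (restoring independence) for $\E_1$. The only cosmetic difference is that the paper records the $\E_1$ bound as a small-expectation term of order $\sqrt{\log(mn)/m}\cdot\ltwo{\x}^2$ plus a Bernstein deviation term $t\ltwo{\x}^2$, rather than a single $O(\xi)u^2\ltwo{\x}^2$ prefactor, but both are absorbed into the final constant $7t\ltwo{\x}^2$ in the same way.
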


\blue{
\begin{proof}
	We decompose $\E$ as
	$$\E = \underbrace{\bar\Y -\bar \Y_0}_{\E_1} + \underbrace{\bar\Y_0 -\EE\bar\Y_0}_{\E_2}.$$
	We first consider $\rho(\E_1,s)$. Recall
	$$\E_1 = \frac{1}{m}\sum_{i=1}^my_i^2\a_i\a_i^T\bigg(\indicator_{\{l\nu\leq y_i\leq u\nu\}} - \indicator_{\{l\ltwo{\x}\leq y_i\leq u\ltwo{\x}\}}\bigg).$$
	Consider the event
	$$\calE = \bigg\{(1-c_0)\ltwo{\x}\leq \nu\leq (1+c_0)\ltwo{\x}, \text{~where~} c_0 = 3\bigg(\frac{\log (mn)}{m}\bigg)^{1/2}\bigg\}.$$
	From Lemma \ref{lemma:nu:concentration}, we see $\calE$ holds with probability exceeding $1-\frac{2}{mn}$.
	Then under $\calE$, we have
	\begin{equation*}
		\begin{aligned}
			&\quad \indicator_{\{l\nu\leq y_i\leq u\nu\}} - \indicator_{\{l\ltwo{\x}\leq y_i\leq u\ltwo{\x}\}}\\
			& \leq \indicator_{\{l(1-c_0)\ltwo{\x}\leq y_i\leq u(1+c_0)\ltwo{\x}\}} - \indicator_{\{l\ltwo{\x}\leq y_i\leq u\ltwo{\x}\}}\\
			&= \left\{\begin{array}{rlc}
				&1, &\text{~if~}y_i\in[l(1-c_0)\ltwo{\x}, l\ltwo{\x}]\cup[u\ltwo{\x}, u(1+c_0)\ltwo{\x}],\\
				&0, &\text{~otherwise,}
			\end{array}
			\right.
		\end{aligned}
	\end{equation*}
and
\begin{equation*}
	\begin{aligned}
		&\quad \indicator_{\{l\nu\leq y_i\leq u\nu\}} - \indicator_{\{l\ltwo{\x}\leq y_i\leq u\ltwo{\x}\}}\\
		& \geq  \indicator_{\{l(1+c_0)\ltwo{\x}\leq y_i\leq u(1-c_0)\ltwo{\x}\}} - \indicator_{\{l\ltwo{\x}\leq y_i\leq u\ltwo{\x}\}}\\
		&= \left\{\begin{array}{rlc}
			&-1, &\text{~if~}y_i\in[l\ltwo{\x}, l(1+c_0)\ltwo{\x}]\cup[u(1-c_0)\ltwo{\x}, u\ltwo{\x}],\\
			&0, &\text{~otherwise.}
		\end{array}
		\right.
	\end{aligned}
\end{equation*}
Therefore
\begin{align}\label{upperbound:indicator}
	\bigg| \indicator_{\{l\nu\leq y_i\leq u\nu\}} - \indicator_{\{l\ltwo{\x}\leq y_i\leq u\ltwo{\x}\}}\bigg| \leq \indicator_{\{l(1-c_0)\ltwo{\x}\leq y_i\leq l(1+c_0)\ltwo{\x}\}} + \indicator_{\{u(1-c_0)\ltwo{\x}\leq y_i\leq u(1+c_0)\ltwo{\x}\}}.
\end{align}
Denote the set of \blue{normalized} $s$-sparse vectors in $\RR^n$ by $\TT_{s}^n = \{\w\in\RR^n:\ltwo{\w} = 1, \lzero{\w} \leq s\}$.
Then for all $\delta\in(0,1/2)$, there exists a set $\calN_{\delta}$ such that for all $\w\in\TT_s^n$, there exists $\wt\w\in\calN_{\delta}$, $\text{supp}(\w) = \text{supp}(\wt\w)$ and $\ltwo{\w-\wt\w}\leq \delta$ and $|\calN_{\delta}|\leq \binom{n}{s} (\frac{3}{\delta})^s\leq (\frac{3en}{\delta s})^s$ \blue{\cite{liu2021towards}}.
Then from the definition of $\rho(\E_1,s)$, we see
\begin{align*}
	\rho(\E_1,s) &= \max_{\w\in\TT_{s}^n}\bigg|\w^T\frac{1}{m}\sum_{i=1}^my_i^2\a_i\a_i^T\bigg(\indicator_{\{l\nu\leq y_i\leq u\nu\}} - \indicator_{\{l\ltwo{\x}\leq y_i\leq u\ltwo{\x}\}}\bigg)\w\bigg|\\
	&\leq \max_{\w\in\TT_{s}^n}\frac{1}{m} \sum_{i=1}^my_i^2(\w^T\a_i)^2\bigg( \indicator_{\{l(1-c_0)\ltwo{\x}\leq y_i\leq l(1+c_0)\ltwo{\x}\}} + \indicator_{\{u(1-c_0)\ltwo{\x}\leq y_i\leq u(1+c_0)\ltwo{\x}\}}\bigg),
\end{align*}
where the last inequality follows form \eqref{upperbound:indicator}.
Denote
\begin{align*}
	\A_1 = \frac{1}{m} \sum_{i=1}^my_i^2\a_i\a_i^T \indicator_{\{l(1-c_0)\ltwo{\x}\leq y_i\leq l(1+c_0)\ltwo{\x}\}},\\
	\A_2 =  \frac{1}{m} \sum_{i=1}^my_i^2\a_i\a_i^T \indicator_{\{u(1-c_0)\ltwo{\x}\leq y_i\leq u(1+c_0)\ltwo{\x}\}}.
\end{align*}
We first consider
$$\rho_1 :=\max_{\w\in\TT_{s}^n}\w^T\A_1\w.$$
Since $\A_2$ is SPD, $\rho_1$ is larger than or equal to the largest eigenvalue of any $s\times s$ principal sub-matrix of $\A_2$
Suppose $\rho_1 = \w_0^T\A\w_0$ for some $\w_0\in\TT_{s}^n$. From the definition of $\calN_{\delta}$, there exists $\w\in\calN_{\delta}$, such that $\supp(\w_0) = \supp(\w)$ and $\ltwo{\w-\w_0}\leq \delta$,
then
\begin{align*}
	\w_0^T\A_1\w_0  = (\w-\w_0)^T\A_1\w + \w_0^T\A(\w-\w_0) + \w^T\A_1\w \leq 2\delta \rho_1 +  \w^T\A_1\w.
\end{align*}
This implies
\begin{align}\label{eq:rho1}
	\rho_1\leq(1-2\delta)^{-1}\max_{\w\in\calN_{\delta}} \w^T\A_1\w.
\end{align}
Now for any $\w\in\calN_{\delta}$, we consider $\w^T\A\w$. Denote $$Z_i = \w^T(y_i^2\a_i\a_i^T \indicator_{\{l(1-c_0)\ltwo{\x}\leq y_i\leq l(1+c_0)\ltwo{\x}\}})\w. $$
Then $\w^T\A_1\w = \frac{1}{m}\sum_{i=1}^mZ_i$. Then there exists some absolute constant $C_1>0$,
\begin{align*}
	\|Z_i\|_{\psi_1} \leq C_1l^2\ltwo{\x}^2.
\end{align*}
Using the centering of $\psi_1$ norm \cite[Exercise 2.7.10]{vershynin2018high}, we see
\begin{align*}
	\|Z_i-\EE Z_i\|_{\psi_1} \leq C_2l^2\ltwo{\x}^2.
\end{align*}
Here $\EE Z_i = (\gamma_4-\gamma_2)(\w^T\x)^2 + \gamma_2\ltwo{\x}^2 \leq \gamma_4\ltwo{\x}^2$ with $\gamma_k = \EE g^k\indicator_{\{l(1-c_0)\leq |g|\leq l(1+c_0)\}}$ for $g\sim N(0,1)$ from Lemma \ref{lemma:expectation}.
And
\begin{align}\label{bound:gamma4}
	\gamma_4 = 2\int_{l(1-c_0)}^{l(1+c_0)}\frac{1}{\sqrt{2\pi}}e^{-x^2/2}x^4dx \leq C_3\sqrt{\frac{\log(mn)}{m}}.
\end{align}

Using Bernstein's inequality (c.f. \cite[Theorem 2.8.1]{vershynin2018high}), we conclude
\begin{align*}
	\PP\bigg(\frac{1}{m}\sum_{i=1}^mZ_i\geq t\ltwo{\x}^2 + \gamma_4\ltwo{\x}^2\bigg)\leq \exp(-c_1\min\{\frac{mt^2}{l^4}, \frac{mt}{l^2}\}).
\end{align*}
Taking union bound over all $\w\in\calN_{\delta}$, we see with probability exceeding $1- (\frac{9en}{\delta^2s})^s\exp(-c_1\min\{\frac{mt^2}{l^4}, \frac{mt}{l^2}\})$,
$$\max_{\w\in\calN_{\delta}} \w^T\A_1\w \leq  t\ltwo{\x}^2 + \gamma_4\ltwo{\x}^2\leq (C_3\sqrt{\frac{\log(mn)}{m}} + t)\ltwo{\x}^2,$$
where the last inequality is from \eqref{bound:gamma4}. By setting $\delta = 1/2$, together with \eqref{eq:rho1}, we obtain with probability exceeding $1 - (36ens^{-1})^{s}\exp(-c_1\min\{\frac{mt^2}{l^4}, \frac{mt}{l^2}\})$,
\begin{align}\label{boundforrho1}
	\rho_1 \leq 2(C_3\sqrt{\frac{\log(mn)}{m}} + t)\ltwo{\x}^2.
\end{align}
Similarly we can show $\rho_2:= \max_{\w\in\TT_{s}^n}\w^T\A_2\w$ has the following upper bound
\begin{align}\label{boundforrho2}
	\rho_2 \leq 2(C_3\sqrt{\frac{\log(mn)}{m}} + t)\ltwo{\x}^2.
\end{align}
with probability exceeding $1 - (36ens^{-1})^{s}\exp(-c_1\min\{\frac{mt^2}{u^4}, \frac{mt}{u^2}\})$.
And \eqref{boundforrho1}, \eqref{boundforrho2} imply $$\rho(\E_1,s)\leq 4(C_3\sqrt{\frac{\log(mn)}{m}} + t)\ltwo{\x}^2.$$
Next we consider $\rho(\E_2,s)$. Recall
$$\E_2 =  \frac{1}{m}\sum_{i=1}^m\bigg(y_i^2\a_i\a_i^T\indicator_{\{l\ltwo{\x}\leq y_i\leq u\ltwo{\x}\}}  - \EE \bar \Y_0\bigg).$$
From the definition, $\rho(\E_2,s)$ is the largest operator norm of any $s\times s$ principal sub-matrix of $\E_2$, and thus
$$\rho(\E_2,s) =\max_{\substack{\ltwo{\z},\ltwo{\y}\leq 1\\ \lzero{\z},\lzero{\y}\leq s\\ \supp(\z) = \supp(\y)}}\z^T\E_2\y.$$
Let $\rho(\E_2,s) = \z_0\E_2\y_0$ for some $\ltwo{\z_0},\ltwo{\y_0}\leq 1, \lzero{\z_0},\lzero{\y_0}\leq s$. Then from the definition of $\calN_{\delta}$, there exists $\z,\y\in\calN_{\delta}$ s.t. $\ltwo{\z-\z_0},\ltwo{\y-\y_0}\leq \delta$ and $\supp(\z) = \supp(\y) = \supp(\z_0)$. Moreover,
$$\rho(\E_2,s) = \z_0^T\E_2\y_0  = (\z_0-\z)^T \E_2 \y_0 + \z^T\E_2(\y_0-\y)+\z^T\E_2\y\leq 2\delta \rho(\E_2,s) +\z^T\E_2\y.$$
Therefore,
$$
\rho(\E_2,s) \leq (1-2\delta)^{-1}\cdot\max_{\substack{\z,\y\in\calN_{\delta}\\ \supp(\z) = \supp(\y)}}\z^T\E_2\y.
$$
Now for any $(\z,\y)\in\calN_{\delta}\times \calN_{\delta}$, we provide the upper bound of  $\z^T\E_2\y$ as follows. Denote $Y_i = \z^T(y_i^2\indicator_{\{l\ltwo{\x}\leq y_i\leq u\ltwo{\x}\}}\a_i\a_i^T - \EE\bar\Y_0)\y$, and then $\z^T\E_2\y = \frac{1}{m}\sum_{i=1}^mY_i$. Then we have
$$
\|Y_i\|_{\psi_1} \leq C_4\left\|\indicator_{\{l\ltwo{\x}\leq y_i\leq u\ltwo{\x}\}}y_i^2\z^T\a_i\a_i^T\y\right\|_{\psi_1}\leq C_5u^2\ltwo{\x}^2
$$
Using Bernstein's inequality (c.f. \cite[Theorem 2.8.1]{vershynin2018high}), we get
$$
\PP(\z^T\E_2\y \geq t\ltwo{\x}^2) \leq \exp\left(-c_2\min\left\{\frac{mt^2}{u^4},\frac{mt}{u^2}\right\}\right).
$$
By taking union bound and setting $\delta = 1/4$, we obtain that, with probability at least $1 - (\frac{36en}{s})^{2s}\exp(-c_2\min\{\frac{mt^2}{u^4},\frac{mt}{u^2}\})$,
it holds $\rho(\E_2,s) \leq 2t\ltwo{\x}^2.$
Putting everything together, we conclude as long as $m\geq C_6s\log n\max\{\frac{u^4}{t^2},\frac{u^2}{t}\}$, with probability exceeding $1-n^{-s}-\frac{2}{mn}$,
$$\rho(\E,s)\leq \rho(\E_1,s) + \rho(\E_2,s) \leq (C_3\sqrt{\frac{\log(mn)}{m}} + 6t)\ltwo{\x}^2\leq 7t\ltwo{\x}^2.$$
\end{proof}
}

Recall $\ybar_0 = \frac{1}{m}\sum_{i=1}^m y_i^2\a_i\a_i^T\indicator_{\{l\leq (y_i/\ltwo{\x})\leq u\}}$ has expectation $\EE \ybar_0 = (\beta-\alpha)\x\x^T + \alpha\ltwo{\x}^2\I$.
Now we can bound the second-to-largest eigenvalue in absolute value of a sub-matrix $\ybar_{\Lambda}$ using the following lemma.

\begin{lemma}\label{lemma:barx-x}
	Let $\Lambda\subset[n]$ be such that $\Lambda\cap\calS\neq\emptyset$, $|\Lambda| = k$, and $\ltwo{[\x^0]_{\Lambda}}\geq \frac{\sqrt{3}}{2}$, where $\x^0=\x/\ltwo{\x}$. Let $\bar\x_{\Lambda}$ be an eigenvector of unit length corresponding to the largest eigenvalue of $\ybar_{\Lambda}$. If $\rho(\E,k) \leq \frac{1}{4}(\frac{3}{4}\beta - \alpha)\ltwo{\x}^2$, then we have
	$$\dist(\bar\x_{\Lambda}, [\x^0]_{\Lambda})^2\leq \ltwo{[\x^0]_{\Lambda}}^2 + 1-2\frac{\ltwo{[\x^0]_{\Lambda}}}{\sqrt{1+\blue{\frac{4\rho^2(\E,k)}{(\frac{3}{4}\beta-\alpha)^2\ltwo{\x}^4}}}}.$$
\end{lemma}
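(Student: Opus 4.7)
The plan is to write $\ybar_\Lambda=M+\E_\Lambda$ with $M:=[\EE\ybar_0]_\Lambda=(\beta-\alpha)\x_\Lambda\x_\Lambda^T+\alpha\ltwo{\x}^2\I_k$ and then apply a Davis--Kahan-type perturbation argument to $M$. The matrix $M$ has an especially clean spectrum: its top unit eigenvector is $v_1:=\x_\Lambda/\ltwo{\x_\Lambda}$, which coincides with $[\x^0]_\Lambda/\ltwo{[\x^0]_\Lambda}$, with associated eigenvalue $\lambda_1^M=(\beta-\alpha)\ltwo{\x_\Lambda}^2+\alpha\ltwo{\x}^2$, while every other eigenvalue equals $\lambda_2^M=\alpha\ltwo{\x}^2$. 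Because $M$ acts as the scalar $\lambda_2^M$ on the whole orthogonal complement of $v_1$, the perturbation analysis will collapse to a single scalar identity rather than the usual Wedin-style matrix projection.

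With these observations in hand, I write $\bar\x_\Lambda=\cos\theta\,v_1+\sin\theta\,v_\perp$ for some unit vector $v_\perp\perp v_1$, after choosing the sign of $\bar\x_\Lambda$ so that $\cos\theta\geq0$, and denote by $\lambda$ the top eigenvalue of $\ybar_\Lambda$. Projecting the identity $\ybar_\Lambda\bar\x_\Lambda=\lambda\bar\x_\Lambda$ along $v_\perp$ and using $Mv_\perp=\lambda_2^M v_\perp$ yields the scalar relation
$$(\lambda-\lambda_2^M-v_\perp^T\E_\Lambda v_\perp)\sin\theta=\cos\theta\,v_\perp^T\E_\Lambda v_1,$$
and hence the tangent bound $|\tan\theta|\leq\|\E_\Lambda\|/(\lambda-\lambda_2^M-\|\E_\Lambda\|)$.

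To lower-bound the denominator, Weyl's inequality together with the Rayleigh-quotient bound $\lambda\geq v_1^T\ybar_\Lambda v_1$ gives $\lambda\geq\lambda_1^M-\|\E_\Lambda\|$, so $\lambda-\lambda_2^M\geq(\beta-\alpha)\ltwo{\x_\Lambda}^2-\|\E_\Lambda\|$. The hypothesis $\ltwo{[\x^0]_\Lambda}^2\geq3/4$ yields $\ltwo{\x_\Lambda}^2\geq\tfrac34\ltwo{\x}^2$, and the elementary inequality $\tfrac34(\beta-\alpha)\geq\tfrac34\beta-\alpha$ upgrades this to $(\beta-\alpha)\ltwo{\x_\Lambda}^2\geq(\tfrac34\beta-\alpha)\ltwo{\x}^2$. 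Invoking the hypothesis $\rho(\E,k)\leq\tfrac14(\tfrac34\beta-\alpha)\ltwo{\x}^2$ to absorb the two copies of $\|\E_\Lambda\|$ then leaves $\lambda-\lambda_2^M-\|\E_\Lambda\|\geq\tfrac12(\tfrac34\beta-\alpha)\ltwo{\x}^2$, so $\tan^2\theta\leq 4\rho^2(\E,k)/\bigl((\tfrac34\beta-\alpha)^2\ltwo{\x}^4\bigr)$. To conclude, after the sign choice one has the elementary identity
$$\dist(\bar\x_\Lambda,[\x^0]_\Lambda)^2=1+\ltwo{[\x^0]_\Lambda}^2-2\ltwo{[\x^0]_\Lambda}\cos\theta,$$
and substituting $\cos\theta=(1+\tan^2\theta)^{-1/2}$ together with the $\tan^2\theta$ bound above produces exactly the claimed inequality.

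The one place where care is needed is insisting on the $\tan\theta$ form of Davis--Kahan rather than the more familiar $\sin\theta$ form: the claim demands $\cos\theta\geq(1+B^2)^{-1/2}$, which is strictly stronger than the $\sin\theta$-derivable $\cos\theta\geq\sqrt{1-B^2}$. This is what forces me to retain the extra $-v_\perp^T\E_\Lambda v_\perp$ subtraction in the scalar identity and then to absorb two independent copies of $\|\E_\Lambda\|$ into $(\tfrac34\beta-\alpha)\ltwo{\x}^2$ using the strength of the hypothesis on $\rho(\E,k)$; without the clean fact that $M$ is a scalar on $v_1^\perp$, one would be stuck with a matrix inverse in the denominator and a much uglier constant.
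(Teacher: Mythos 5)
Your proposal is correct and follows essentially the same route as the paper's proof: you decompose the top eigenvector of $\ybar_{\Lambda}$ along $[\x^0]_{\Lambda}/\ltwo{[\x^0]_{\Lambda}}$ and its orthogonal complement, project the eigen-equation onto the orthogonal direction to obtain a tangent bound (your $\tan\theta$ is exactly the paper's quantity $t$ with $c_1=\cos\theta$, $c_2=\sin\theta$), lower-bound the gap via Weyl and the hypothesis on $\rho(\E,k)$, and convert $\cos\theta\geq(1+\tan^2\theta)^{-1/2}$ into the claimed distance bound. The only cosmetic difference is that you bound the orthogonal Rayleigh quotient directly as $\lambda_2^M+v_\perp^T\E_\Lambda v_\perp$ rather than via $\max_{j\neq 1}|\lambda_j(\ybar_\Lambda)|$, which is if anything a slightly cleaner justification of the same numerical bound.
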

\begin{proof}
	Recall $\E = \ybar - \EE\ybar_0$. Denote $\bar\lambda$ the largest eigenvalue of $\ybar_{\Lambda}$.
	\blue{From Weyl's inequality \cite[Theorem 4.3.1]{horn2012matrix}, we have}
	\begin{align}\label{lowerbound:l1}
		\bar\lambda=\lambda_1(\ybar_{\Lambda})&\geq \lambda_1([\EE\ybar_0]_{\Lambda}) + \lambda_n(\E_{\Lambda})\geq \blue{(\beta-\alpha)}\ltwo{\x_{\Lambda}}^2+\blue{\alpha}\ltwo{\x}^2 - \rho(\E,k)\notag\\
		&\blue{\geq \beta\ltwo{\x_{\Lambda}}^2- \rho(\E,k) = \beta\ltwo{[\x^0]_{\Lambda}}^2\ltwo{\x}^2 - \rho(\E,k)}\notag\\
        &\geq \frac{3}{4}\beta\ltwo{\x}^2- \rho(\E,k),
	\end{align}
	where the last inequality holds since $\ltwo{[\x^0]_{\Lambda}}\geq \frac{\sqrt{3}}{2}$ and $\lambda_n(\E_{\Lambda})\geq - \rho(\E,k)$ by definition. Similarly, we have for all $j\geq 2$,
	\begin{align}\label{upperbound:lj}
		|\lambda_j(\ybar_{\Lambda})| \leq |\lambda_j([\EE\ybar_0]_{\Lambda})| + \rho(\E,k)\leq \alpha\ltwo{\x}^2 + \rho(\E,k).
	\end{align}
	Notice $\ltwo{\bar\x_{\Lambda}} = 1$ but $\ltwo{[\x^0]_{\Lambda}}\leq 1$.
	Now we write $\bar\x_{\Lambda} = c_1[\x^0]_{\Lambda}/\ltwo{[\x^0]_{\Lambda}} + c_2\z$ with $([\x^0]_{\Lambda})^T\z=0$, $\ltwo{\z} = 1$ and $c_1^2 + c_2^2 = 1$.
	As a quick consequence, $\supp(\z)\subset \Lambda$.
	Then we have
	\begin{align*}
		\bar\lambda c_1[\x^0]_{\Lambda}/\ltwo{[\x^0]_{\Lambda}} +
		\bar\lambda c_2\z = \bar\lambda \bar\x_{\Lambda} = \ybar_{\Lambda}\bar\x_{\Lambda} =  c_1\ybar_{\Lambda}[\x^0]_{\Lambda}/\ltwo{[\x^0]_{\Lambda}} + c_2\ybar_{\Lambda}\z.
	\end{align*}
	By taking the inner product with $\z$,  we obtain
	$$\bar\lambda c_2 = c_1\z^T\ybar_{\Lambda}[\x^0]_{\Lambda}/\ltwo{[\x^0]_{\Lambda}} + c_2\z^T\ybar_{\Lambda}\z.$$
	\blue{Since $[\x^0]_{\Lambda}$ is the eigenvector of $[\EE \bar\Y_0]_{\Lambda}$, we have $\z^T[\EE \bar\Y_0]_{\Lambda}[\x^0]_{\Lambda} = 0$. This leads to}
	$$
    |c_2| = |c_1|\frac{|\z^T([\EE\bar\Y_0]_{\Lambda} + \E_{\Lambda})[\x^0]_{\Lambda}|/\ltwo{[\x^0]_{\Lambda}}}{|\bar\lambda - \z^T\ybar_{\Lambda}\z|}= |c_1|\underbrace{\frac{|\z^T\E_{\Lambda}[\x^0]_{\Lambda}|/\ltwo{[\x^0]_{\Lambda}}}{|\bar\lambda - \z^T\ybar_{\Lambda}\z|}}_{:=t}.
    $$
    \blue{Since $\supp(\z)\subset\Lambda$, we have $|\z^T\E_{\Lambda}[\x^0]_{\Lambda}|/\ltwo{[\x^0]_{\Lambda}} \leq \rho(\E,k).$
    	Moreover, since $\z$ is perpendicular to $[\x^0]_{\Lambda}$, which corresponds to the largest eigenvalue of $\bar\Y_{\Lambda}$, we have from \eqref{upperbound:lj}
    	$$|\z^T\bar\Y_{\Lambda}\z| \leq \max_{j\neq 1}\lambda_j(\bar\Y_{\Lambda})\leq \alpha\ltwo{\x}^2  + \rho(\E,k).$$
So we have from \eqref{lowerbound:l1} together with $\rho(\E,k) \leq \frac{1}{4}(\frac{3}{4}\beta - \alpha)\ltwo{\x}^2$,
\begin{align*}
	 t\leq \frac{\rho(\E,k)}{\bar\lambda - |\z^T\bar\Y_{\Lambda}\z|} \leq \frac{2\rho(\E,k)}{(\frac{3}{4}\beta-\alpha)\ltwo{\x}^2}.
\end{align*}
}

	Then, we have $1 = c_1^2+c_2^2\leq (1+t^2)c_1^2$, which implies $c_1^2\geq \frac{1}{1+t^2}$.
	Now, without loss of generality, we assume $c_1>0$. So,
	\begin{align*}
	\ltwo{[\x^0]_{\Lambda} -\bar\x_{\Lambda}}^2
	= \ltwo{[\x^0]_{\Lambda}}^2+1 -2([\x^0]_{\Lambda})^T\bar\x_{\Lambda}
	&= \ltwo{[\x^0]_{\Lambda}}^2+1 -2c_1\ltwo{[\x^0]_{\Lambda}}\cr
	&\leq \ltwo{[\x^0]_{\Lambda}}^2+1 -2\frac{\ltwo{[\x^0]_{\Lambda}}}{\sqrt{1+t^2}}.
	\end{align*}
	Finally, we plug the upper bound for $t$ and get the desired result.
\end{proof}

\subsection{Proof of Theorem \ref{thm:onestep}}\label{proof:thm:onestep}
We are ready to prove \cref{thm:onestep}. We organize the proof into three parts. Firstly, we show $j_0$ chosen in Algorithm \ref{alg:init:onestep} satisfies $|x_{j_0}^0|\geq\frac{\linf{\x^0}}{2}$. Secondly, we show that $\ltwo{[\x^0]_{\hat\calS}}$ \blue{is sufficient large, meaning that $\hat\calS$ captures the indices of $\x$ with largest absolute values.}
Lastly, we put everything together.

\vspace{0.5cm}

\noindent\textit{Step 1: Estimating $|x_{j_0}^0|$.} We start with the estimation of $|x_{j_0}^0|$.
\begin{lemma}\label{lemma:xj0}
	If the sample size $m$ satisfies $m\geq C\linf{\x^0}^{-4}\log n$ for some absolute constant $C>0$, then, with probability exceeding $1 - \calO(n^{-10})$, $j_0$ defined by $j_0 = \arg\max_{1\leq j\leq n} Y_{jj}$ satisfies $|x_{j_0}^0|\geq\frac{\linf{\x^0}}{2}$.
\end{lemma}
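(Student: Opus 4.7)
The plan is to show that the diagonal entries $Y_{jj}$ concentrate sharply enough around their expectations that the argmax must land on a ``significant'' coordinate of $\x$. Explicitly, taking $l=0$ and $u=\infty$ in \cref{lemma:expectation} yields
$$\EE Y_{jj} \;=\; \ltwo{\x}^2 + 2x_j^2.$$
Let $j^\star = \arg\max_j |x_j|$, so $\EE Y_{j^\star j^\star} = \ltwo{\x}^2 + 2\linf{\x}^2$. If every diagonal entry satisfied $|Y_{jj}-\EE Y_{jj}|\le \tfrac{3}{8}\linf{\x}^2$, then any $j$ with $|x_j| < \tfrac{1}{2}\linf{\x}$ would have $Y_{jj} < \ltwo{\x}^2+\tfrac{7}{8}\linf{\x}^2$ while $Y_{j^\star j^\star} > \ltwo{\x}^2 + \tfrac{13}{8}\linf{\x}^2$; since $j_0$ maximizes $Y_{jj}$, it cannot be such a $j$, proving $|x_{j_0}|\ge \tfrac12\linf{\x}$, i.e.\ $|x_{j_0}^0|\ge \tfrac12\linf{\x^0}$.

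\textbf{Core step: concentration of $Y_{jj}$.} Write $Y_{jj}=\frac{1}{m}\sum_{i=1}^m Z_i^{(j)}$ with $Z_i^{(j)} := y_i^2 a_{i,j}^2 = \lrangle{\a_i,\x}^2 a_{i,j}^2$. Each $Z_i^{(j)}$ is a degree-$4$ polynomial in a Gaussian vector and therefore sub-Weibull of order $1/2$ (equivalently $\psi_{1/2}$). Using the decomposition employed in the proof of \cref{lemma:expectation}, namely $a_{i,j} = x_j\lrangle{\a_i,\x}/\ltwo{\x}^2 + t_{i,j}$ with $t_{i,j}\sim N(0,1-x_j^2/\ltwo{\x}^2)$ independent of $g_i:=\lrangle{\a_i,\x}$, one sees
$$Z_i^{(j)} = \frac{x_j^2}{\ltwo{\x}^4}g_i^4 + \frac{2x_j}{\ltwo{\x}^2}g_i^3 t_{i,j} + g_i^2 t_{i,j}^2,$$
from which the $\psi_{1/2}$-norm bound $\|Z_i^{(j)}\|_{\psi_{1/2}}\le C\ltwo{\x}^2$ is immediate. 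Bernstein's inequality for $\psi_{1/2}$ random variables (e.g.\ a direct specialization of \cite[Thm.~2.8.1]{vershynin2018high} to sub-Weibull of order $1/2$) gives, for any $\tau>0$,
$$\PP\Big(|Y_{jj}-\EE Y_{jj}| > \tau \ltwo{\x}^2\Big) \;\le\; 2\exp\!\Big(-c\, m\,\min\{\tau^2,\sqrt{\tau}\,\}\Big).$$
Setting $\tau = \tfrac{3}{8}\linf{\x^0}^2 \in (0,1]$, we have $\min\{\tau^2,\sqrt{\tau}\} = \tau^2 \asymp \linf{\x^0}^4$, so the bound becomes $2\exp(-c'\, m\,\linf{\x^0}^4)$. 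Taking a union bound over $j\in[n]$ and requiring $m\ge C\linf{\x^0}^{-4}\log n$ with $C$ sufficiently large gives, with probability at least $1-\calO(n^{-10})$, the uniform concentration
$$\max_{j\in[n]}\,|Y_{jj}-\EE Y_{jj}| \;\le\; \tfrac{3}{8}\linf{\x^0}^2\,\ltwo{\x}^2 \;=\; \tfrac{3}{8}\linf{\x}^2.$$

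\textbf{Conclusion.} Combining the gap computation in the first paragraph with the concentration bound just established yields $Y_{j^\star j^\star}>Y_{jj}$ for every $j$ with $|x_j|<\tfrac{1}{2}\linf{\x}$; by maximality of $j_0$ this forces $|x_{j_0}|\ge\tfrac{1}{2}\linf{\x}$, and dividing through by $\ltwo{\x}$ gives the stated conclusion.

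\textbf{Main obstacle.} The only non-routine step is the Bernstein-type inequality for $\psi_{1/2}$ summands; specifically, verifying that the relevant regime is the ``sub-Gaussian part'' $\tau^2$ of the $\min$ rather than the heavier $\sqrt{\tau}$ part, which is what produces the sharp $\linf{\x^0}^{-4}\log n$ scaling (as opposed to a weaker power of $\linf{\x^0}^{-1}$). Everything else is bookkeeping of constants in the gap-versus-concentration comparison.
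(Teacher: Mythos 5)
Your overall strategy --- concentrate each diagonal entry $Y_{jj}$ around $\EE Y_{jj}=\ltwo{\x}^2+2x_j^2$ to within a fixed fraction of $\linf{\x}^2$, then compare the maximizer against the set $\{j:|x_j|<\tfrac12\linf{\x}\}$ via a union bound --- is exactly the paper's strategy. The paper simply imports the concentration step from \cite[Lemma 1]{wang2017sparse}, which gives $|Y_{jj}-\EE Y_{jj}|\le\epsilon\ltwo{\x}^2$ with probability $1-\exp(-m\epsilon^2/192)$, and then sets $\epsilon=\linf{\x^0}^2/2$. Your gap bookkeeping (the $\tfrac{13}{8}$ versus $\tfrac{7}{8}$ comparison) is correct.

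The gap is in the concentration step, which you yourself flag as the only non-routine point but do not actually verify. The tail bound you write for an average of $\psi_{1/2}$ variables, $2\exp(-c\,m\min\{\tau^2,\sqrt{\tau}\})$, is not the correct generalized Bernstein inequality: for i.i.d.\ centered sub-Weibull summands of order $\alpha=1/2$ with $\psi_{1/2}$-norm $K$, the bound on $\PP\bigl(|\frac{1}{m}\sum_i(Z_i-\EE Z_i)|>\tau K\bigr)$ has exponent $-c\min\{m\tau^2,(m\tau)^{1/2}\}$, not $-c\,m\min\{\tau^2,\tau^{1/2}\}$; and \cite[Theorem 2.8.1]{vershynin2018high} is stated only for sub-exponential ($\psi_1$) summands, so it does not directly ``specialize'' to $\alpha=1/2$. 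With the correct inequality, making the exponent exceed $11\log n$ for the union bound over $j\in[n]$ requires both $m\tau^2\gtrsim\log n$, i.e.\ $m\gtrsim\linf{\x^0}^{-4}\log n$, and $(m\tau)^{1/2}\gtrsim\log n$, i.e.\ additionally $m\gtrsim\linf{\x^0}^{-2}\log^2 n$. When $\stablesparsity{\x}=\linf{\x^0}^{-2}$ is small (say $\calO(1)$) the second requirement is of order $\log^2 n$ and strictly exceeds the lemma's hypothesis $m\ge C\linf{\x^0}^{-4}\log n$, so your route does not establish the lemma as stated (the extra factor is harmless inside \cref{thm:onestep}, whose hypotheses already force $m\gtrsim\log^3 n$, but the lemma is a standalone claim). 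To close the gap you should either cite a concentration result of the form the paper uses, or truncate $Z_i^{(j)}$ at level $\calO(\ltwo{\x}^2\log^2 n)$ and control the tail part separately --- which is precisely the $Z_{ij,1}/Z_{ij,2}$ decomposition the paper carries out in the proof of \cref{lemma:hatOmega}.
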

\begin{proof}
Let $j^*\in[n]$ be satisfying $|x_{j^*}^0| = \linf{\x^0}$. Then \cite[Lemma 1]{wang2017sparse} implies that, for any $\epsilon_1>0$, with probability exceeding $1-\exp(-\frac{m\epsilon_1^2}{192})$,
	\begin{align}\label{Yjj:lowerbound}
		Y_{j^*j^*}\ltwo{\x}^{-2}\geq 1+ 2\linf{\x^0}^2 - \epsilon_1.
	\end{align}
On the other hand, we consider the set $\calS_1: = \{j\in[n]: |x^0_j|\leq \frac{\linf{\x^0}}{2}\}$. By using \cite[Lemma 1]{wang2017sparse} and taking union bound, we have that, with probability at least $1 - (n-1)\exp(-\frac{m\epsilon_2^2}{192})$,
	\begin{align}\label{Yjj:upperbound}
		Y_{jj}\ltwo{\x}^{-2}\leq 1+ \frac{\linf{\x^0}^2}{2} + \epsilon_2,\qquad \forall~j\in\calS_1.
	\end{align}
Now combine \eqref{Yjj:lowerbound} and \eqref{Yjj:upperbound} and set $\epsilon_1 = \epsilon_2 = \frac{\linf{\x^0}^2}{2}$. Then,
with probability exceeding $1 - n\exp(-\frac{m\linf{\x^0}^4}{768})$, $Y_{j_0j_0}\geq Y_{j^*j^*} > Y_{jj}$ for all $j\in\calS_1$, which implies $j_0\not\in\calS_1$.
\end{proof}

\vspace{0.5cm}

\noindent\textit{Step 2: Estimating $\ltwo{[\x^0]_{\hat\calS}}$.} For any $\gamma\in(0,1]$, we define $\calS^-_{\gamma} :=\{i\in\calS: |x^0_i|<\frac{\gamma}{2\sqrt{s}}\}$ and $\calS^+_{\gamma} := \calS\backslash\calS^-_{\gamma}$. Then $\ltwo{[\x^0]_{\calS_{\gamma}^-}}^2 < \frac{\gamma^2}{4s}s = \frac{\gamma^2}{4}$ and $\ltwo{[\x^0]_{\calS^+_{\gamma}}}^2 > 1-\frac{\gamma^2}{4}$. Since $\linf{\x^0}\geq\frac{1}{\sqrt{s}}$, \cref{lemma:xj0} implies $|x^0_{j_0}|\geq \frac{1}{2\sqrt{s}}\geq\frac{\gamma}{2\sqrt{s}}$ with high probability, and thus $j_0\in\calS^+_{\gamma}$. The following lemma shows that $\calS^+_{\gamma}\subset\hat\calS$ with high probability, where $\hat\calS$ is defined in \eqref{eq:hatS}.

\begin{lemma}\label{lemma:hatOmega}
Assume the number $m$ of samples satisfies
$$
m\geq C\max\{\gamma^{-1}\linf{\x^0}^{-1}\sqrt{s}\log^3n, \gamma^{-2}\linf{\x^0}^{-2}s\log n\}
$$
for some absolute constant $C>0$. Then, with probability exceeding $1-\calO(n^{-10})$, $\hat\calS$ chosen in Algorithms \ref{alg:init} and \ref{alg:init:onestep} satisfies $\calS^+_{\gamma}\subset\hat\calS$.
\end{lemma}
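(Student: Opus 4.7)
The plan is to reduce the support-inclusion claim $\calS^+_\gamma\subset\hat\calS$ to a pointwise ranking comparison, and then to a single uniform entrywise concentration estimate on $\Y-\EE\Y$. The rank reduction goes as follows: it suffices to show, on an event of probability $1-\calO(n^{-10})$, that
\[
|[\Y\e_{j_0}]_j| > |[\Y\e_{j_0}]_{j'}| \qquad\text{for all}\quad j\in\calS^+_\gamma,\; j'\in\calS^c.
\]
Granting this, for each fixed $j\in\calS^+_\gamma$ any index $k$ strictly outranking $j$ in $|\Y\e_{j_0}|$ must lie in $\calS$ rather than $\calS^c$, so at most $|\calS|-1\le s-1$ indices outrank $j$; hence $j$ is among the top $s$ and $j\in\hat\calS$.

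To produce the required gap I would read off the expectation sizes from \eqref{eq:EYj0}. The hypothesis $m\ge C\gamma^{-2}\linf{\x^0}^{-2}s\log n$, combined with the pigeonhole bound $\linf{\x^0}\ge 1/\sqrt{s}$, already forces $m\ge C\linf{\x^0}^{-4}\log n$, so \cref{lemma:xj0} applies and we may condition on $|x_{j_0}|\ge \tfrac{1}{2}\linf{\x}$. Since $\gamma\le 1$, this in particular forces $j_0\in\calS^+_\gamma$ with $|[\EE\Y\e_{j_0}]_{j_0}|\ge \ltwo{\x}^2$; for $j\in\calS^+_\gamma\setminus\{j_0\}$,
\[
|[\EE\Y\e_{j_0}]_j| = 2|x_{j_0}||x_j| \ge 2\cdot\tfrac{\linf{\x}}{2}\cdot\tfrac{\gamma\ltwo{\x}}{2\sqrt{s}} = \tfrac{\gamma\linf{\x^0}\ltwo{\x}^2}{2\sqrt{s}},
\]
while $[\EE\Y\e_{j_0}]_{j'}=0$ for $j'\in\calS^c$. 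Setting the threshold $\tau:=\tfrac{\gamma\linf{\x^0}\ltwo{\x}^2}{8\sqrt{s}}$, the desired ranking follows from the uniform entrywise estimate
\[
\max_{j,k\in[n]}\bigl|Y_{jk}-[\EE\Y]_{jk}\bigr|\le\tau,
\]
because this forces $|[\Y\e_{j_0}]_j|\ge 3\tau$ for every $j\in\calS^+_\gamma$ and $|[\Y\e_{j_0}]_{j'}|\le\tau$ for every $j'\in\calS^c$.

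The substantive work is the uniform concentration. Each entry $Y_{jk}=\tfrac{1}{m}\sum_{i=1}^{m}\inp{\a_i}{\x}^2 a_{i,j}a_{i,k}$ is an empirical average of i.i.d.\ quartic forms in the Gaussian entries of $\a_i$, so Gaussian hypercontractivity gives $\|\inp{\a_i}{\x}^2 a_{i,j}a_{i,k}\|_{\psi_{1/2}}\lesssim \ltwo{\x}^2$. A Bernstein-type inequality for $\psi_{1/2}$ variables then yields a two-regime deviation bound of the form $\sqrt{\log(1/\delta)/m}\,\ltwo{\x}^2$ in the sub-Gaussian regime and $\log^3(1/\delta)/m\cdot\ltwo{\x}^2$ in the heavy-tail regime, the cubic power of the log tracking the fourth-order Gaussian chaos. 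A union bound over the $n^2$ index pairs at confidence $n^{-12}$, followed by forcing the resulting bound to be at most $\tau$, produces the conditions
\[
\sqrt{\log n/m}\,\ltwo{\x}^2\lesssim\tau \qquad\text{and}\qquad \log^3 n/m\cdot\ltwo{\x}^2\lesssim\tau,
\]
which rearrange into $m\gtrsim\gamma^{-2}\linf{\x^0}^{-2}s\log n$ and $m\gtrsim\gamma^{-1}\linf{\x^0}^{-1}\sqrt{s}\log^3 n$ respectively, exactly the two branches in the hypothesis.

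The principal obstacle is the concentration step: because the summands are fourth-order Gaussian polynomials, their tails are heavier than sub-exponential (they are $\psi_{1/2}$ rather than $\psi_1$), so the classical Bernstein inequality does not apply out of the box. One must either invoke a heavy-tail Bernstein inequality (e.g.\ in the spirit of Adamczak's) or bound moments directly via Gaussian hypercontractivity, carefully tracking the $\log^3 n$ factor that emerges when the $\psi_{1/2}$ tail is combined with the $n^2$ union bound. Once this uniform bound is in hand, the remaining ingredients—the expectation identity \eqref{eq:EYj0}, the lower bound on $|x_{j_0}|$ from \cref{lemma:xj0}, and the rank argument—are purely algebraic, and the lemma follows by combining them on the intersection of the two high-probability events.
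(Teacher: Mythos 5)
Your proposal is correct and follows the same skeleton as the paper's proof: reduce the inclusion to the ranking inequality $\min_{j\in\calS^+_{\gamma}}|[\Y\e_{j_0}]_j| > \max_{j\in\calS^c}|[\Y\e_{j_0}]_j|$, read off the expectations from \eqref{eq:EYj0} after invoking \cref{lemma:xj0} (your check that the stated sample size already implies the hypothesis of \cref{lemma:xj0} via $\linf{\x^0}\geq 1/\sqrt{s}$ and $\gamma\leq 1$ is exactly what is needed), and then beat the gap $\tau=\frac{\gamma\linf{\x}\ltwo{\x}}{8\sqrt{s}}$ by entrywise concentration. Where you genuinely diverge is the concentration mechanism. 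The paper does not invoke a heavy-tail Bernstein inequality; it truncates by hand, splitting $y_i^2a_{ij}a_{ij_0}$ into a part where $\max\{|y_i|/\ltwo{\x},|a_{ij}|,|a_{ij_0}|\}\leq\sqrt{44\log n}$ --- which is bounded by $(44\log n)^2\ltwo{\x}^2$ and handled by the classical bounded Bernstein inequality, the $(\log n)^2$ bound times the $\log n$ confidence level being precisely the source of the $\log^3 n$ branch --- and a residual part whose variance is $\calO(n^{-11}/m)\cdot\ltwo{\x}^4$ and which is dispatched by Chebyshev. Your route via $\|\cdot\|_{\psi_{1/2}}$ and an Adamczak-type inequality is a legitimate packaged version of the same idea; one quibble is your log accounting: a degree-$4$ Gaussian chaos has tail $\exp(-(t/K)^{1/2})$, so the heavy-tail branch gives a quantile of order $K\log^{2}(1/\delta)$, not $\log^3$, meaning your route would in fact need only $m\gtrsim\gamma^{-1}\linf{\x^0}^{-1}\sqrt{s}\log^2 n$ on that branch (up to a possible extra $\log^{2}m$ from $\|\max_i|X_i|\|_{\psi_{1/2}}$ in the precise form of the inequality) --- in any case this sits safely inside the stated hypothesis, so nothing breaks. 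A minor advantage of your version: taking the union bound over all $n^2$ pairs $(j,k)$ rather than over the single column indexed by the data-dependent $j_0$ sidesteps the measurability issue that the paper's union bound ``over all $j\in[n]$'' glosses over.
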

\begin{proof}
It suffices to show
\begin{equation}\label{eq:minS+>maxSc}
\min_{j\in\calS^+_{\gamma}}|[\Y\e_{j_0}]_j| > \max_{j\in\calS^c}|[\Y\e_{j_0}]_j|.
\end{equation}
To this end, we need to bound the expectation $|[\EE\Y\e_{j_0}]_j|$ and the concentration $|[\Y\e_{j_0}]_j - [\EE\Y\e_{j_0}]_j|$ for all $j\in\calS^+_{\gamma}$ and $j\in\calS^c$.

To bound the expectation, we recall $|[\EE\Y\e_{j_0}]_j|$ given in  \eqref{eq:EYj0}.
Then, tor $j\in\calS^+_{\gamma}$, the definition of $\calS^+_{\gamma}$ gives $|[\EE\Y\e_{j_0}]_j|\geq \frac{\gamma |x_{j_0}|\ltwo{\x}}{\sqrt{s}}$. This together with \cref{lemma:xj0} implies
\begin{equation}\label{eq:EYj0Sc}
|[\EE\Y\e_{j_0}]_j|\geq\frac{\gamma\linf{\x}\ltwo{\x}}{2\sqrt{s}},\quad\forall~j\in\calS^+_{\gamma}
\end{equation}
with high probability.

For the concentration, we decompose $[\Y\e_{j_0}]_j = \frac{1}{m}\sum_{i=1}^my_i^2 a_{ij} a_{ij_0}=:\frac{1}{m}\sum_{i=1}^mZ_{ij,1}+\frac{1}{m}\sum_{i=1}^mZ_{ij,2}$, where
$$
Z_{ij,1} = y_i^2 a_{ij} a_{ij_0}\indicator_{\{\max\{|y_i|/\ltwo{\x},| a_{ij}|,|a_{ij_0}|\}\leq \sqrt{44\log n}\}},
\quad
Z_{ij,2} = y_i^2 a_{ij} a_{ij_0} - Z_{ij,1}.
$$
Thus, we have $|Z_{ij,1}|\leq (44\log n)^2\ltwo{\x}^2$ and
$$
	\frac{1}{m^2}\sum_{i=1}^m \var(Z_{ij,1}) \leq \frac{1}{m^2}\sum_{i=1}^m \EE y_i^4 a_{ij}^2 a_{ij_0}^2\leq \frac{1}{m^2}\sum_{i=1}^m \sqrt{\EE y_i^8\cdot \EE a_{ij}^4 a_{ij_0}^4}\leq \frac{105}{m}\ltwo{\x}^4.
$$
Using Bernstein's inequality (c.f. \cite[Theorem 2.8.4]{vershynin2018high}), we obtain
\begin{align}\label{Z:1}
		\PP\left(\left|\frac{1}{m}\sum_{i=1}^m (Z_{ij,1} - \EE Z_{ij,1})\right|\geq \frac{\gamma\linf{\x}\ltwo{\x}}{16\sqrt{s}}\right)\leq 2\exp\left(-\frac{\frac{\gamma^2\linf{\x}^2\ltwo{\x}^2}{256s}}{2(\frac{105}{m}\ltwo{\x}^4+\frac{44^2\log^2n}{m}\ltwo{\x}^2\frac{\gamma\linf{\x}\ltwo{\x}}{48\sqrt{s}})}\right).
	\end{align}
Therefore, as long as $m\geq C\max\{\gamma^{-1}\ltwo{\x}\linf{\x}^{-1}\sqrt{s}\log^3n, \gamma^{-2}\ltwo{\x}^2\linf{\x}^{-2}s\log n\}$ for some absolute constant $C>0$, we have $|\frac{1}{m}\sum_{i=1}^m (Z_{ij,1} - \EE Z_{ij,1})|\leq \frac{\gamma\linf{\x}\ltwo{\x}}{16\sqrt{s}}$ with probability more than $1 - \calO(n^{-11})$. Moreover, we have
\begin{align*}
		\var\left(\frac{1}{m}\sum_{i=1}^{m}Z_{ij,2}\right) &\leq \frac{1}{m}\EE(y_1^4 a_{1j}^2 a_{1j_0}^2\indicator_{\{\max\{|y_1|/\ltwo{\x},|a_{1j}|,|a_{1j_0}|\}\geq \sqrt{44\log n}\}})\no\\
		&\leq \frac{1}{m}\sqrt{\EE y_1^8 a_{1j}^4 a_{1j_0}^4}\cdot\PP^{1/2}\left(\max\{|y_1|/\ltwo{\x},|a_{1j}|,|a_{1j_0}|\}\geq \sqrt{44\log n}\right)\no\\
		&\leq \frac{45\sqrt{1001}}{m}\ltwo{\x}^4\cdot 3n^{-11},
\end{align*}
where the second inequality is from the Cauchy-Schwartz inequality, and the third inequality follows the fact that $|y_1|/\ltwo{\x},|a_{1j}|,|a_{1j_0}|$ are all the absolute value of Gaussian random variables.
From Chebyshev's inequality, we see
\begin{align}\label{Z:2}
		\PP\left(\left|\frac{1}{m}\sum_{i=1}^m (Z_{ij,2} - \EE Z_{ij,2})\right|\geq \frac{\gamma\linf{\x}\ltwo{\x}}{16\sqrt{s}}\right)\leq \frac{\frac{45\sqrt{1001}}{m}\ltwo{\x}^4\cdot 3n^{-11}}{\frac{\gamma^2\linf{\x}^2\ltwo{\x}^2}{256s}}\leq \calO(n^{-11}),
\end{align}
where the last inequality holds as long as $m\geq C\gamma^{-2}\ltwo{\x}^2\linf{\x}^{-2}s$. Putting \eqref{Z:1} and \eqref{Z:2} together and taking a union bound over all $j\in[n]$, we get that, with probability at least $1 - \calO(n^{-10})$,
\begin{equation}\label{eq:concYej0}
	|[\Y\e_{j_0}]_j - [\EE\Y\e_{j_0}]_j| \leq \frac{\gamma\linf{\x}\ltwo{\x}}{8\sqrt{s}},\quad\forall~j\in[n].
\end{equation}

By combining \eqref{eq:EYj0Sc} and \eqref{eq:concYej0}, we obtain that
$$
|[\Y\e_{j_0}]_j|\geq |[\EE\Y\e_{j_0}]_j|-|[\Y\e_{j_0}]_j - [\EE\Y\e_{j_0}]_j|\geq \frac{3\gamma\linf{\x}\ltwo{\x}}{8\sqrt{s}},
\quad\forall~j\in\calS^+_{\gamma}.
$$
Moreover, it follows from \eqref{eq:EYj0} and \eqref{eq:concYej0} that
$$
|[\Y\e_{j_0}]_j|\leq |[\EE\Y\e_{j_0}]_j|+|[\Y\e_{j_0}]_j - [\EE\Y\e_{j_0}]_j|\leq\frac{\gamma\linf{\x}\ltwo{\x}}{8\sqrt{s}},
\quad\forall~j\in\calS^c.
$$
These two inequalities lead to \eqref{eq:minS+>maxSc}, which concludes the proof.
\end{proof}	

	\vspace{0.5cm}
	
	\noindent\textit{Step 3: Putting everything together.} Now we estimate $\dist(\hat\x,\x)$ and prove \cref{thm:onestep}.
 \begin{proof}[Proof of \cref{thm:onestep}]
For simplicity, we denote $\rho = \rho(\E,s)$.
\blue{Also recall we would like to obtain some $\hat \x$ such that $\ltwo{\hat\x-\x}\leq \delta\ltwo{\x}$.}
To proceed, we choose $t = \frac{\delta}{448}$ in \cref{lemma:ssparseeigenvalue} and $\gamma=\frac{\delta}{4}$ in \cref{lemma:hatOmega}.
\blue{Notice $\linf{\x^0} = \linf{\x}/\ltwo{\x} = \stablesparsity{\x}^{-1/2}$. }
By applying \cref{lemma:nu:concentration}, \cref{lemma:ssparseeigenvalue}, \cref{lemma:xj0}, and \cref{lemma:hatOmega}, we obtain that: as long as
\blue{$$m\geq C\max\{\delta^{-1}(\stablesparsity{\x}s)^{1/2}\log^3n, \delta^{-2}\stablesparsity{\x}s\log n\}$$}
 for some $C>0$ depending only on $u$, with probability at least $1-n^{-s}-2n^{-10}-2m^{-1}n^{-1}\geq 1 -5m^{-1}n^{-1}$, it holds that
\begin{equation}\label{eq:succthm1}
\rho\leq 7t\ltwo{\x}^2,\quad j_0\in\calS_{\gamma}^+\subset\hat{\calS},\quad \left|\nu^2 - \ltwo{\x}^2 \right|\leq \delta\ltwo{\x}^2.
\end{equation}

We first estimate $\dist(\hat\x^0,\x^0)^2$ under \eqref{eq:succthm1}.  Since $\hat\x_0$ in \cref{alg:init:onestep} is supported only on $\hat\calS$,
\begin{equation}\label{eq:distdecomp}
\dist(\hat\x^0,\x^0)^2 = \dist(\hat\x^0, [\x^0]_{\hat\calS})^2+\ltwo{[\x^0]_{{\hat\calS}^c}}^2.
\end{equation}
From $\calS_{\gamma}^+\subset\hat{\calS}$ in \eqref{eq:succthm1} and the definition of $\calS^-_{\gamma}$, it follows that $\ltwo{[\x^0]_{\hat\calS^c}}^2\leq \ltwo{[\x^0]_{\calS^-_{\gamma}}}^2<\frac{\gamma^2}{4}$. It remains to estimate $\dist(\hat\x^0, [\x^0]_{\hat\calS})^2$. To this end, we apply \cref{lemma:barx-x} with $\Lambda=\hat\calS$ and $k=s$. Since $\delta\in(0,1)$, we have $\ltwo{[\x^0]_{\hat\calS}}^2=1-\ltwo{[\x^0]_{\hat\calS^c}}^2\geq 1-\frac{\gamma^2}{4}= 1-\frac{\delta^2}{64}>\frac{\sqrt3}{2}$. Moreover, due to the assumptions $\beta/\alpha\geq 2$ and $\alpha\geq 1/2$, it holds that $\frac{3}{4}\beta-\alpha=(\frac{3}{4}\frac{\beta}{\alpha}-1)\alpha\geq\frac{1}{4}$, which together with the bound of $\rho$ in \eqref{eq:succthm1} implies $\rho\leq \frac{1}{4}(\frac{3}{4}\beta - \alpha)\ltwo{\x}^2$. Therefore, \cref{lemma:barx-x} gives
$$
 \dist(\hat\x^0,[\x^0]_{\hat\calS})^2\leq \ltwo{[\x^0]_{\hat\calS}}^2 + 1-2\frac{\ltwo{[\x^0]_{\hat\calS}}}{\sqrt{1+\frac{4\rho^2(\E,k)}{(\frac{3}{4}\beta-\alpha)^2\ltwo{\x}^4}}}.
$$
As $\sqrt{1-\frac{\gamma^2}{4}}\leq\ltwo{[\x^0]_{\hat\calS}}\leq 1$, we have
\begin{align*}
	\dist(\hat\x^0,[\x^0]_{\hat\calS})^2 &\leq \max\bigg\{2-2\frac{1}{\sqrt{1+\frac{4\rho^2}{(\frac{3}{4}\beta-\alpha)^2\ltwo{\x}^4}}}, 2-\frac{\gamma^2}{4}-2\frac{\sqrt{1-\frac{\gamma^2}{4}}}{\sqrt{1+\frac{4\rho^2}{(\frac{3}{4}\beta-\alpha)^2\ltwo{\x}^4}}}\bigg\}\\
	&\leq \max\bigg\{\frac{8\rho^2}{(\frac{3}{4}\beta-\alpha)^2\ltwo{\x}^4+4\rho^2}, 2-\frac{\gamma^2}{4}-2\frac{1-\frac{\gamma^2}{4}}{1+\frac{4\rho^2}{(\frac{3}{4}\beta-\alpha)^2\ltwo{\x}^4}}\bigg\}\\
	&\leq \frac{8\rho^2 + \frac{(\frac{3}{4}\beta-\alpha)^2\ltwo{\x}^4}{2}\gamma^2}{(\frac{3}{4}\beta-\alpha)^2\ltwo{\x}^4 +4\rho^2} - \frac{\gamma^2}{4}
	\leq \frac{8\rho^2}{(\frac{3}{4}\beta-\alpha)^2\ltwo{\x}^4 +4\rho^2} + \frac{\gamma^2}{4}\\
   &\leq \frac{8\rho^2}{\frac{1}{16}\ltwo{\x}^4 +4\rho^2} + \frac{\gamma^2}{4}\leq 128\rho^2\ltwo{\x}^{-4}+\frac{\gamma^2}{4},
\end{align*}
where in the first inequality of the fourth line we have used $\frac{3}{4}\beta-\alpha\geq\frac14$, and in the second inequality we have used $1-\frac{1}{\sqrt{1+y}}\leq \frac{y}{y+1}$ for all $y>0$.
Combining it with \eqref{eq:distdecomp}, and recall we set $\gamma = \frac{\delta}{4}$, $t = \delta/448$, we get
	\begin{align}\label{eq:distx0x}
		\dist(\hat\x^0,\x^0)^2 \leq \frac{\gamma^2}{2} + 128\rho^2\ltwo{\x}^{-4}\leq\gamma^2 = \delta^2/16,
	\end{align}
where we have used the bound of $\rho$ in \eqref{eq:succthm1}.

Finally, since we have defined $\hat\x=\nu\hat\x^0$ and $\x^0=\x/\ltwo{\x}$, combining \eqref{eq:distx0x} and the bound of $\nu$ in \eqref{eq:succthm1} gives
\blue{
\begin{equation}\label{eq:disthatxx}
	\begin{split}
		\dist(\hat\x, \x) &\leq \dist(\nu\x^0, \x)+ \dist(\hat\x,\nu\x^0)
		\leq\ltwo{\x-\nu\x^0} + \dist(\nu\x^0 ,\hat\x)\\
		&= |\ltwo{\x} - \nu|  + \nu\dist(\x^0 ,\hat\x^0)\\
		&\leq 2\ltwo{\x}\frac{\delta}{4} + \frac{\delta}{2}\ltwo{\x},
	\end{split}
\end{equation}
where the first inequality is from Lemma \ref{lemma:triangleinequality:dist}.
}
\end{proof}

\subsection{Proof of Theorem \ref{thm:init}}\label{sec:proof:tp}
In this section, we prove \cref{thm:init}. The proof is separated into three parts. Firstly, we show that $\w_0$ falls into a small constant neighborhood of $\x^0:=\x/\ltwo{\x}$. Secondly, we show the convergence of the truncated power method. Thirdly, we show the projection back to $\MM_s$ gives the desired estimator.

\begin{proof}[Proof of \cref{thm:init}]
Throughout the proof, we denote $k = s+2s'$, $\rho = \rho(\E,k)$ with $\E = \ybar-\EE\ybar_0$, $F_t = \supp(\w_t)$, and we set $\gamma=1/4$ in \eqref{eq:distx0x}.
Similar to the proof of Theorem \ref{thm:onestep}, given
\blue{$$m\geq C\max\{(\stablesparsity{\x}s)^{1/2}\log^3n, \stablesparsity{\x}s\log n, \delta^{-2}s \log n\}$$}
for some constant $C>0$ depending only on $u$,
and setting $t = \frac{1}{70000}\delta$ in Lemma \ref{lemma:ssparseeigenvalue} gives
with probability exceeding $1-n^{-s} - 2n^{-10}-2m^{-1}n^{-1}\geq 1 -5m^{-1}n^{-1}$, the following event holds:
\begin{align}\label{eq:succthm2}
\calE = \left\{|\nu^2 - \ltwo{\x}^2| \leq 3\sqrt{\frac{\log(mn)}{m}}\ltwo{\x}^2,\quad \rho\leq 10^{-4}\delta\ltwo{\x}^2,\quad  \dist(\w_0^0,\x^0)^2\leq 1\right\}.
\end{align}
We shall proceed with the proof conditioning on this event.

\hspace{1cm}

\noindent\textit{Step 1: Estimation of $|(\w_0^0)^T\x_0|$.}
From $\dist(\w_0^0,\x^0)^2\leq 1$, we obtain $|(\w_0^0)^T\x_0|\geq 1/2$.

\hspace{1cm}

\noindent\textit{Step 2: Convergence of truncated power method.} We prove in this step, starting from $\w_0^0$, the truncated power method will output some $\w_{t_{\max}}^0$ such that $\dist(\w_{t_{\max}}^0, \x^0)\leq \delta/2$.

We show $\dist(\w_{t}^0,\x^0)\leq 1$ by induction and estimate $\dist(\w_{t}^0,\x^0)$ in terms of $\dist(\w_{t-1}^0,\x^0)$. The result from Step 1 tells us $\dist(\w_0^0,\x^0)\leq 1$. Suppose we have $\dist(\w_{t-1}^0,\x^0)\leq 1$. We would like to show $\dist(\w_{t}^0,\x^0)\leq 1$ and give a tight bound of $\dist(\w_{t}^0,\x^0)$.

Denote $\Lambda = F_{t-1} \cup F_t\cup\calS$. Then $|\Lambda|\leq 2s'+s = k$.
recall that $\w_t=T_{s'}(\ybar_{\Lambda}\w_{t-1})/\ltwo{T_{s'}(\ybar_{\Lambda}\w_{t-1})}$, and
Define
\begin{align}\label{tildewtprime}
	\wt\w_t^0 = \ybar_{\Lambda}\w_{t-1}^0/\ltwo{\ybar_{\Lambda}\w_{t-1}^0}.
\end{align}
Then $\w_t^0 = [\tilde\w_{t}^0]_{F_t}/\ltwo{[\tilde\w_{t}^0]_{F_t}}$.
Let $\kappa$ be the ratio of the second largest (in absolute value) to the largest eigenvalue of $\ybar_{\Lambda}$. Then, since $\calS\subset\Lambda$,
\begin{align*}
	\kappa=\frac{|\lambda_2(\ybar_{\Lambda})|}{\lambda_1(\ybar_{\Lambda})}\leq \frac{|\lambda_2([\EE\ybar_0]_{\Lambda})| + \rho}{\lambda_1([\EE\ybar_0]_{\Lambda})- \rho} = \frac{\alpha\ltwo{\x}^2+\rho}{\beta\ltwo{\x}^2 - \rho}\leq 0.51,
\end{align*}
where the last inequality follows from $\rho\leq c_0\ltwo{\x}$ and $\beta/\alpha\geq 2$.

Let $\bar\x^0$ be a unit eigenvector corresponding to the top eigenvalue of $\ybar_{\Lambda}$ and satisfying $(\x^0)^T\bar\x^0 \geq 0$. As a result, $\dist(\x^0,\bar\x^0) = \ltwo{\x^0-\bar\x^0}$. Then, since $\wt\w_t^0 = \ybar_{\Lambda}\w_{t-1}^0/\ltwo{\ybar_{\Lambda}\w_{t-1}^0}$, [\cite{yuan2013truncated}, Lemma 11] gives
\begin{align*}
	|(\bar\x^0)^T\tilde\w_{t}^0| &\geq |(\bar\x^0)^T\w_{t-1}^0|\big(1+(1-\kappa^2)(1-((\bar\x^0)^T\w_{t-1}^0)^2)/2\big)\\
	&\geq |(\bar\x^0)^T\w_{t-1}^0|\big(1+0.369(1-((\bar\x^0)^T\w_{t-1}^0)^2)\big),
\end{align*}
which leads to
\begin{equation}\label{eq:keyineq1Thm2}
    1-|(\bar\x^0)^T\tilde\w_{t}^0| \leq (1-|(\bar\x^0)^T\w_{t-1}^0|)\big(1-0.369(|(\bar\x^0)^T\w_{t-1}^0|+|(\bar\x^0)^T\w_{t-1}^0|^2)\big).
\end{equation}
Since $\calS\subset\Lambda$, it is deducted from Lemma \ref{lemma:barx-x} that
\begin{equation}\label{eq:keyineq2Thm2}
\ltwo{\x^0-\bar\x^0}^2 = \dist(\bar\x^0, \x^0)^2\leq 2 - 2\frac{1}{\sqrt{1 + \frac{4\rho^2}{(\frac{3}{4}\beta - \alpha)^2\ltwo{\x}^4}}} \leq 64\rho^2\ltwo{\x}^{-4}\leq 10^{-6}\delta^2,
\end{equation}
where in the last second inequality we used $\frac{3}{4}\beta - \alpha\geq \frac{1}{4}$ and $1-\frac{1}{\sqrt{1+t}}\leq \frac{t}{2}$ for $t\geq 0$, and in the last inequality we used $\rho\ltwo{\x}^{-2} \leq c_0\delta$ and $c_0\leq 0.0001$. Note that $\dist(\x^0,\w_{t-1}^0)\leq 1$ implies $|(\x^0)^T\w_{t-1}^0|\geq 0.5$, which further leads to
\begin{equation*}
\begin{split}
|(\bar\x^0)^T\w_{t-1}^0|&\geq|(\x^0)^T\w_{t-1}^0|-|(\bar\x^0-\x^0)^T\w_{t-1}^0| \geq|(\x^0)^T\w_{t-1}^0|-\ltwo{\bar\x^0-\x^0}\ltwo{\w_{t-1}^0}\cr
&\geq 0.5-0.001\delta\geq 0.499,
\end{split}
\end{equation*}
This plugged into \eqref{eq:keyineq1Thm2} gives
$$
1-|(\bar\x^0)^T\tilde\w_{t}^0| \leq 0.724(1-|(\bar\x^0)^T\w_{t-1}^0|),
$$
which also implies $|(\bar\x^0)^T\tilde\w_{t}^0|\geq 0.276$. Because $\bar\x^0, \tilde\w_t^0, \w_{t-1}^0$ are of unit length, the inequality above is equivalent to
$$
\dist(\bar\x^0,\tilde\w_{t}^0)\leq 0.851\cdot\dist(\bar\x^0,\w_{t-1}^0).
$$
Thus, setting $\u_1 = \x^0, \u_2 = \tilde\w_t^0,\u_3 = \bar\x^0$ and noticing $|(\x^0)^T\tilde\w_t^0|\geq |(\bar\x^0)^T\tilde\w_t^0| - \ltwo{\bar\x^0-\x^0}\geq 0.275>0.001>\ltwo{\bar\x^0-\x^0}$,  \cref{lemma:triangleinequality:dist} derives
$$\dist(\x^0,\tilde\w_t^0)\leq \dist(\bar\x^0,\tilde\w_t^0) + \dist(\bar\x^0,\x^0).$$
Also, since $|(\bar\x^0)^T\w_{t-1}^0|\geq |(\x^0)^T\w_{t-1}^0| - \ltwo{\bar\x^0-\x^0}\geq 0.499>0.001>\ltwo{\bar\x^0-\x^0}$, using \cref{lemma:triangleinequality:dist} again with $\u_1 = \bar\x^0, \u_2 = \w_{t-1}^0,\u_3 = \x^0$ leads to
$$\dist(\bar\x^0,\w_{t-1}^0)\leq \dist(\x^0,\w_{t-1}^0) + \dist(\x^0,\bar\x^0).$$
The three inequalities above together with \eqref{eq:keyineq2Thm2} imply
\begin{equation}\label{eq:distx0tildewt}
\dist(\x^0,\tilde\w_{t}^0)\leq 0.851\cdot\dist(\x^0, \w_{t-1}^0) +0.002\delta. 
\end{equation}
Also, since 
we set $s' = 50s$, using [\cite{yuan2013truncated}, Lemma 12] and the fact that if $a\geq b - \min\{c_1,c_2\}$, then $a\geq b - c_2$, we have
\begin{equation*}
\begin{split}
|(\x^0)^T[\tilde\w_t^0]_{F_t}| &\geq |(\x^0)^T\tilde\w_t^0| - (s/s')^{1/2}(1+(s/s')^{1/2})(1-|(\x^0)^T\tilde\w_t^0|^2)\cr
&\geq |(\x^0)^T\tilde\w_t^0| - 0.162(1-|(\x^0)^T\tilde\w_t^0|^2),
\end{split}
\end{equation*}
from which it follows that
\begin{equation*}
\begin{split}
1-|(\x^0)^T[\tilde\w_t^0]_{F_t}|&\leq (1-|(\x^0)^T\tilde\w_t^0|)+0.162(1+|(\x^0)^T\tilde\w_t^0|)(1-|(\x^0)^T\tilde\w_t^0|)\cr
&\leq1.324(1-|(\x^0)^T\tilde\w_t^0|),
\end{split}
\end{equation*}
\blue{where the last inequality holds since $|(\x^0)^T\tilde\w_t^0|\leq 1$.}
Recall $\w_t^0 = [\tilde\w_{t}^0]_{F_t}/\ltwo{[\tilde\w_{t}^0]_{F_t}}$. Thus,
\begin{equation}\label{eq:distx0wt0}
\begin{split}
	\dist(\x^0,\w_t^0) &= \sqrt{2(1-|(\x^0)^T\w_t^0|)}=\sqrt{2(1-|(\x^0)^T[\tilde\w_t^0]_{F_t}|/\ltwo{[\tilde\w_t^0]_{F_t}})}\cr
	&\leq \sqrt{2(1-|(\x^0)^T[\tilde\w_t^0]_{F_t}|)}\leq \sqrt{1.324\cdot 2(1 - |(\x^0)^T\tilde\w_t^0|))}\cr
    &=\sqrt{1.324}\cdot\dist(\x^0,\tilde\w_t^0)
    \leq 0.98\cdot\dist(\x^0, \w_{t-1}^0) + 0.0024\delta,
\end{split}
\end{equation}
where in the last inequality we used \eqref{eq:distx0tildewt}. The above inequality also implies $\dist(\x^0,\w_t^0)\leq 1$. So, we finish the induction, showing $\dist(\x^0,\w_t^0)\leq 1$ for all $t$. Consequently, \eqref{eq:distx0wt0} holds for all $t$, which gives
\begin{equation*}
\begin{split}
\dist(\x^0,\w_t^0)&\leq 0.98\cdot\dist(\x^0, \w_{t-1}^0) + 0.0024\delta \cr
&\leq 0.98^2\cdot\dist(\x^0, \w_{t-2}^0)+0.98\cdot 0.0024\delta+0.0024\delta\cr
&\leq\cdots\cr
&\leq 0.98^t\cdot\dist(\x^0, \w_{0}^0)+\frac{0.0024\delta}{1-0.98}\leq 0.98^t+0.12\delta
\end{split}
\end{equation*}
Therefore, when $t\geq t_{\max}:=\frac{\log(200\delta^{-1})}{\log(1/0.98)}$, we have $0.98^t\leq 0.005\delta$ and hence $\dist(\w_{t}^0,\x^0)\leq \frac{\delta}{8}$.

\hspace{1cm}

\noindent{\textit{Step 3:}} In view of the optimality of $\hat\x^0$ as the closest point to $\w_{t_{\max}}^0$ in $\MM_s$, we have
\begin{align*}
    \ltwo{\hat\x^0-\w_{t_{\max}}^0} \leq \dist(\x^0,\w_{t_{\max}}^0).
\end{align*}
Using triangle inequality for $\dist$ (c.f. Lemma \ref{lemma:triangleinequality:dist}),
\blue{
\begin{align*}
	\dist(\hat\x^0,\x^0) &\leq  \dist(\hat\x^0,\w_{t_{\max}}^0) + \dist(\w_{t_{\max}}^0,\x^0) \\
	&\leq \ltwo{\hat\x^0 - \w_{t_{\max}}^0} +  \dist(\w_{t_{\max}}^0,\x^0) \\
	&\leq 2\dist(\w_{t_{\max}}^0, \x^0)\leq \delta/4.
\end{align*}
}
Now using the same argument as in \eqref{eq:disthatxx} in the proof of Theorem \ref{thm:onestep}, we show $\dist(\hat\x,\x)\leq \delta\ltwo{\x}$ under the event $\calE$. This finishes the proof of the theorem.
\end{proof}

\section{Numerical Experiments}\label{section:experiments}
This section presents some numerical experiment results of the proposed algorithms and compares them with other existing initialization methods. For the refinement step, we use HTP proposed in \cite{cai2022sparse}.

Throughout the experiments, the target signal $\x\in\RR^n$ is $s$-sparse with its support uniformly sampled from all subsets of $[n]$ with size $s$. The non-zero entries of $\x$ are randomly drawn from the mean-$0$ and variance-$1$ Gaussian distribution independently. The sampling vectors $\{\a_i\}$ are i.i.d. standard Gaussian random vectors. The measurements are $\{y_i\}_{i=1}^m$ with $y_i = |\a_i^T\x|$ for $i\in[m]$.

We compare our methods with the spectral initialization \cite{jagatap2019sample}. We shall refer to the initialization method in \cite{jagatap2019sample} as Spectral, Algorithm \ref{alg:init:onestep} as Modified Spectral, and Algorithm \ref{alg:init} as TP.  All the tests are run on a laptop with a 2.4 GHz octa-core i9 processor and 32 GB memory using MATLAB R2020b. Denote the output of the algorithm by $\hat\x$, and the relative error is defined as
$$r(\hat\x,\x) = \frac{\ltwo{\hat\x-\x}}{\ltwo{\x}}.$$
In the experiments, we consider an instance to be successful if $\hat\x$ satisfies $r(\hat\x,\x)\leq 0.001$.

In the first experiment, we fix the dimension $n = 1000$ and choose $s\in\{25,35\}$. For each $s$, we vary the sample size $m$ from $100$ to $1500$. We compare our algorithms with the spectral method. Since we are only interested in the initialization behavior, we shall fix the algorithm in the refinement stage to be HTP \cite{cai2022sparse} as described in \cref{alg:htp}. The results are displayed in Figure \ref{fig:fixs}.

From Figure \ref{fig:fixs}, we can see the TP has overall the best performance. In particular, when the sample size is insufficient, both Modified Spectral and TP have a higher chance of recovering the signal. Moreover, in both Modified Spectral and TP, the multiple-restarted version significantly increases the likelihood of successful recovery. The ability to insert the multiple-restarted step can be also regarded as an advantage of Modified Spectral and TP over Spectral. It is also clearly seen from Figure \ref{fig:fixs} that TP outperforms Modified Spectral with or without multiple restarts, which is also consistent with our theoretical result.

\begin{figure}
 	\centering
	\begin{subfigure}[b]{.98\linewidth}
			\includegraphics[width=0.45\textwidth]{./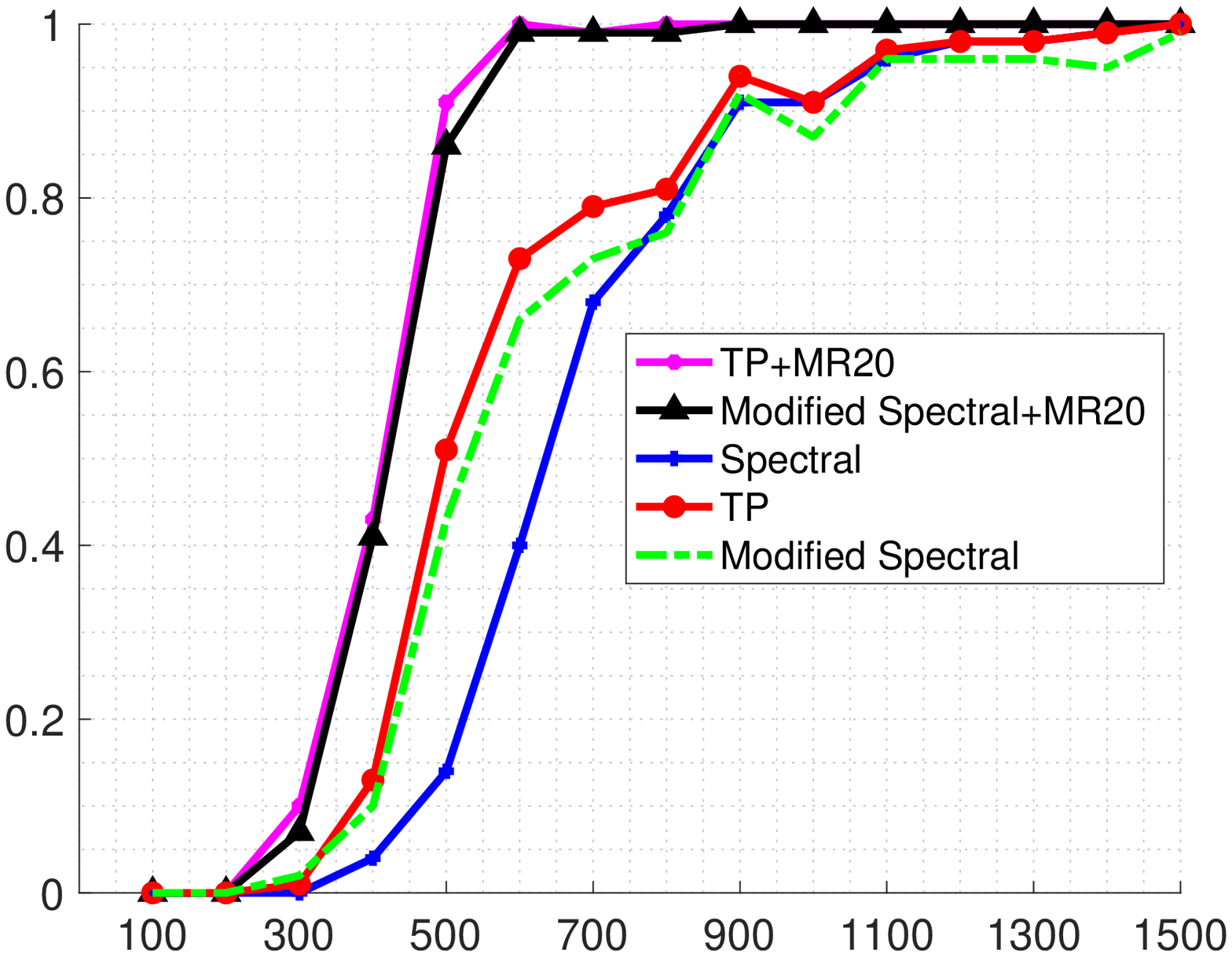}
		\includegraphics[width=0.45\textwidth]{./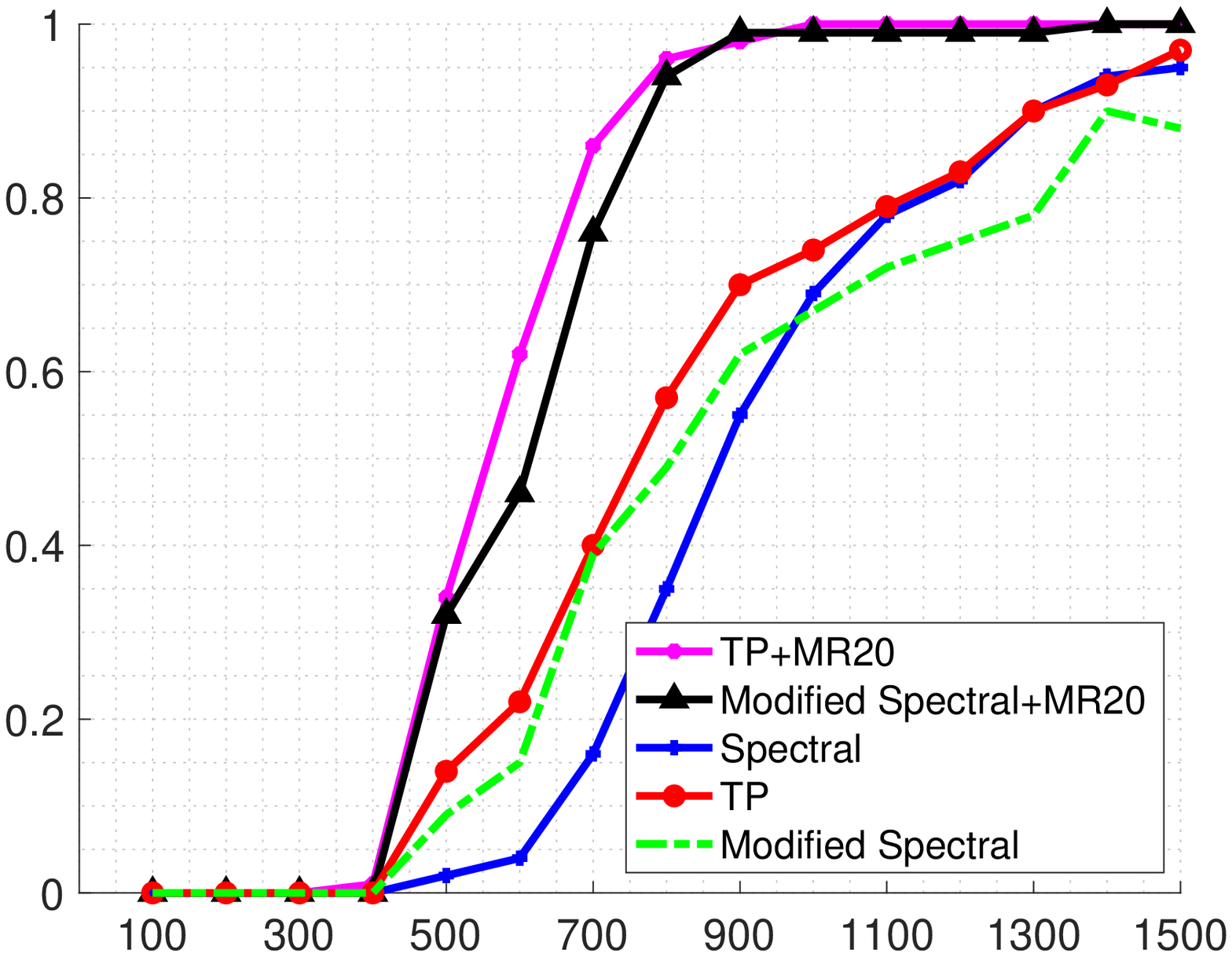}
		\caption{TP, Modified Spectral method, Spectral method, multiple restarted TP ($b=20$), and multiple restarted Modified Spectral method ($b=20$).  Left: Fix $s = 25$; Right: Fix $s = 35$.}
	\end{subfigure}
	\caption{Comparison between TP, Modified Spectral, and Spectral method. In each setting, we conduct 100 independent instances. For TP and Modified Spectral, we can use multiple restarts where we set $b = 20$.}
	\label{fig:fixs}
\end{figure}

\begin{figure}
	\centering
	\begin{subfigure}[t]{.32\linewidth}
		\includegraphics[width=\textwidth]{./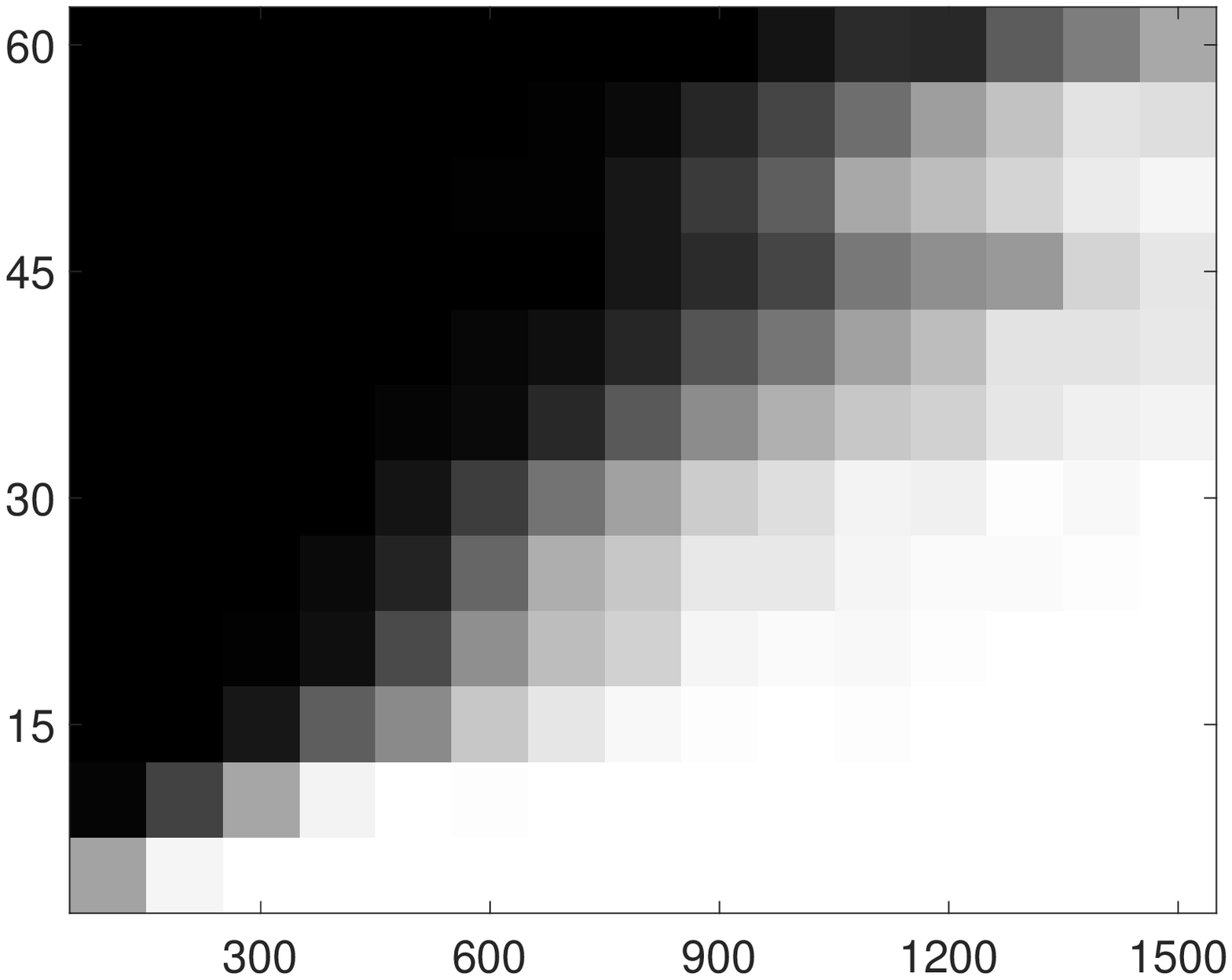}
		\caption{Spectral + HTP}
	\end{subfigure}
	\begin{subfigure}[t]{.32\linewidth}
		\includegraphics[width=\textwidth]{./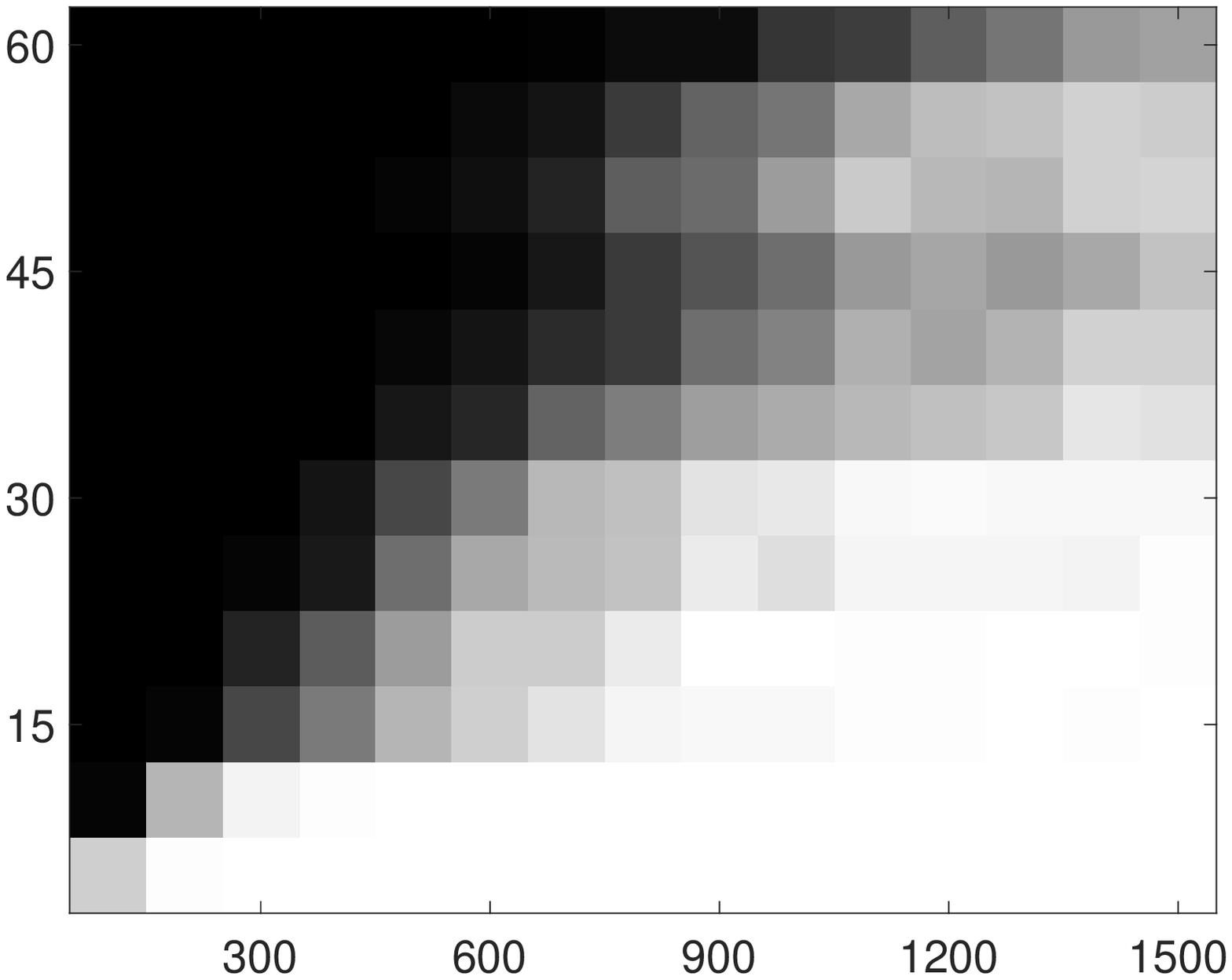}
		\caption{Modified Spectral + HTP}
	\end{subfigure}
	\begin{subfigure}[t]{.32\linewidth}
		\includegraphics[width=\textwidth]{./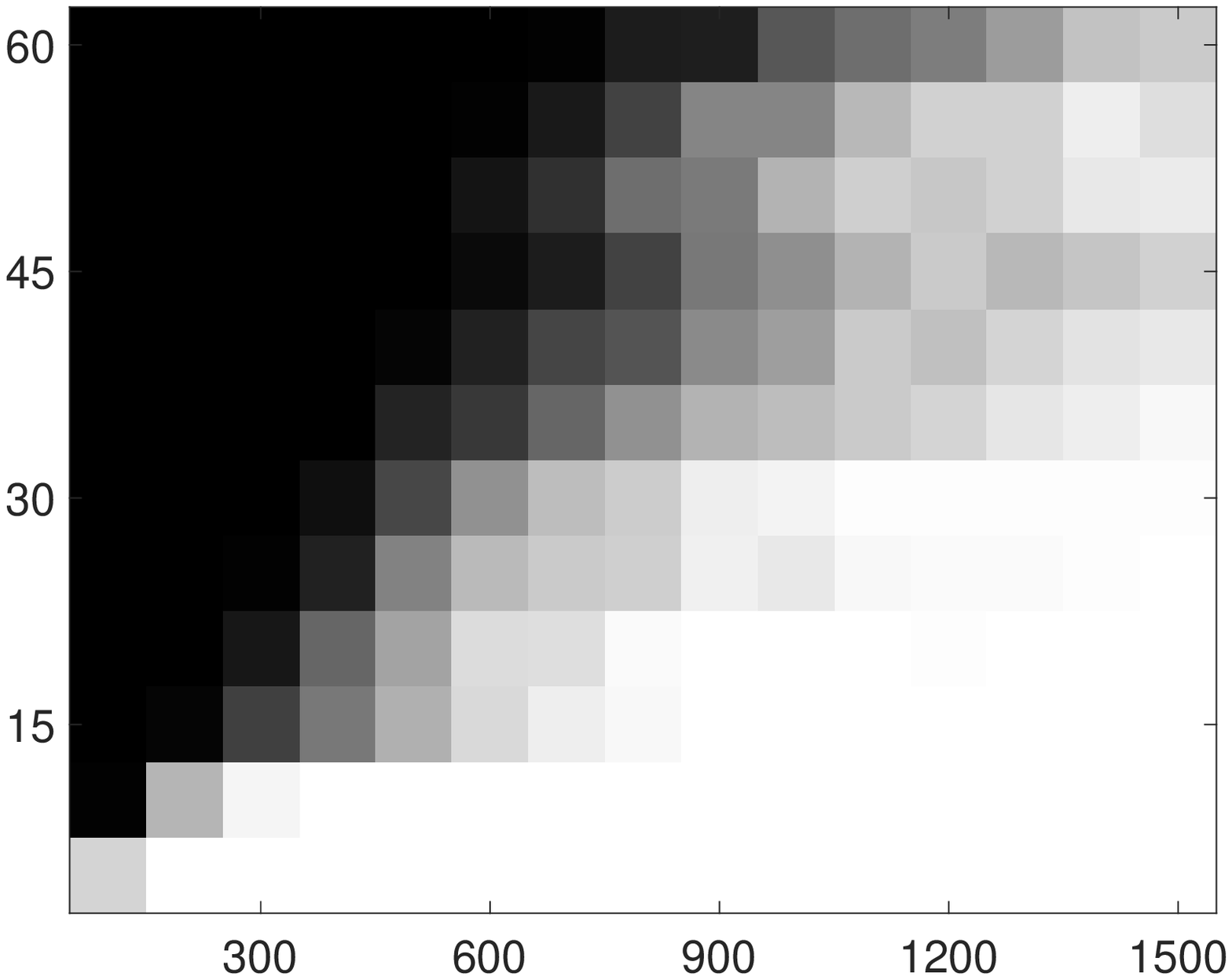}
		\caption{TP + HTP}
	\end{subfigure}
	\caption{Phase transition for different algorithms with signal dimension $n$ = 1000. The successful recovery rates are depicted in different gray levels of the corresponding block. Black means that the successful recovery rate is 0, white 1, and gray between 0 and 1.}
	\label{fig:noMR}
\end{figure}

\begin{figure}
	\centering
	\begin{subfigure}[t]{.32\linewidth}
		\includegraphics[width=\textwidth]{./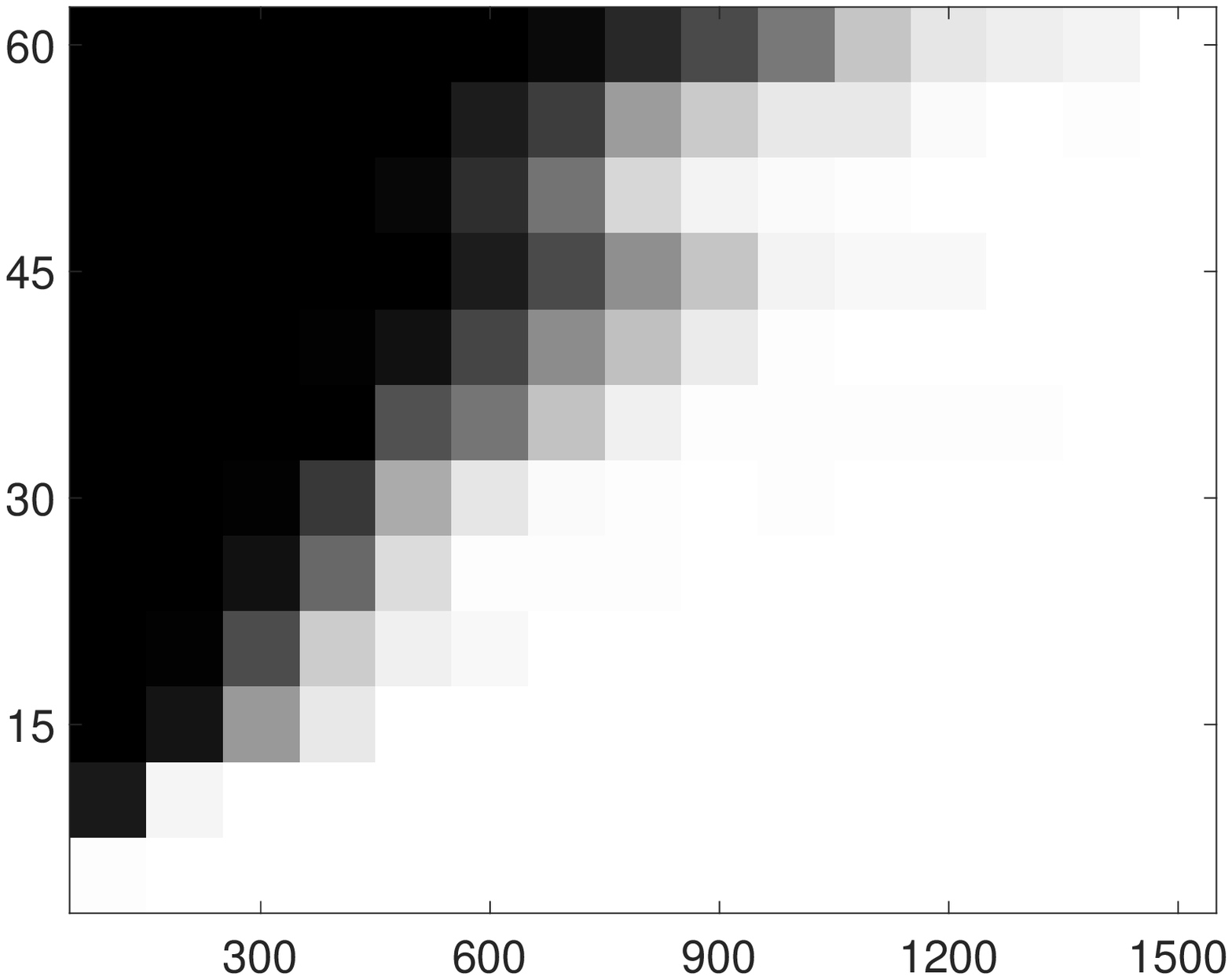}
		\caption{TP + HTP with MR}
	\end{subfigure}
	\begin{subfigure}[t]{.32\linewidth}
		\includegraphics[width=\textwidth]{./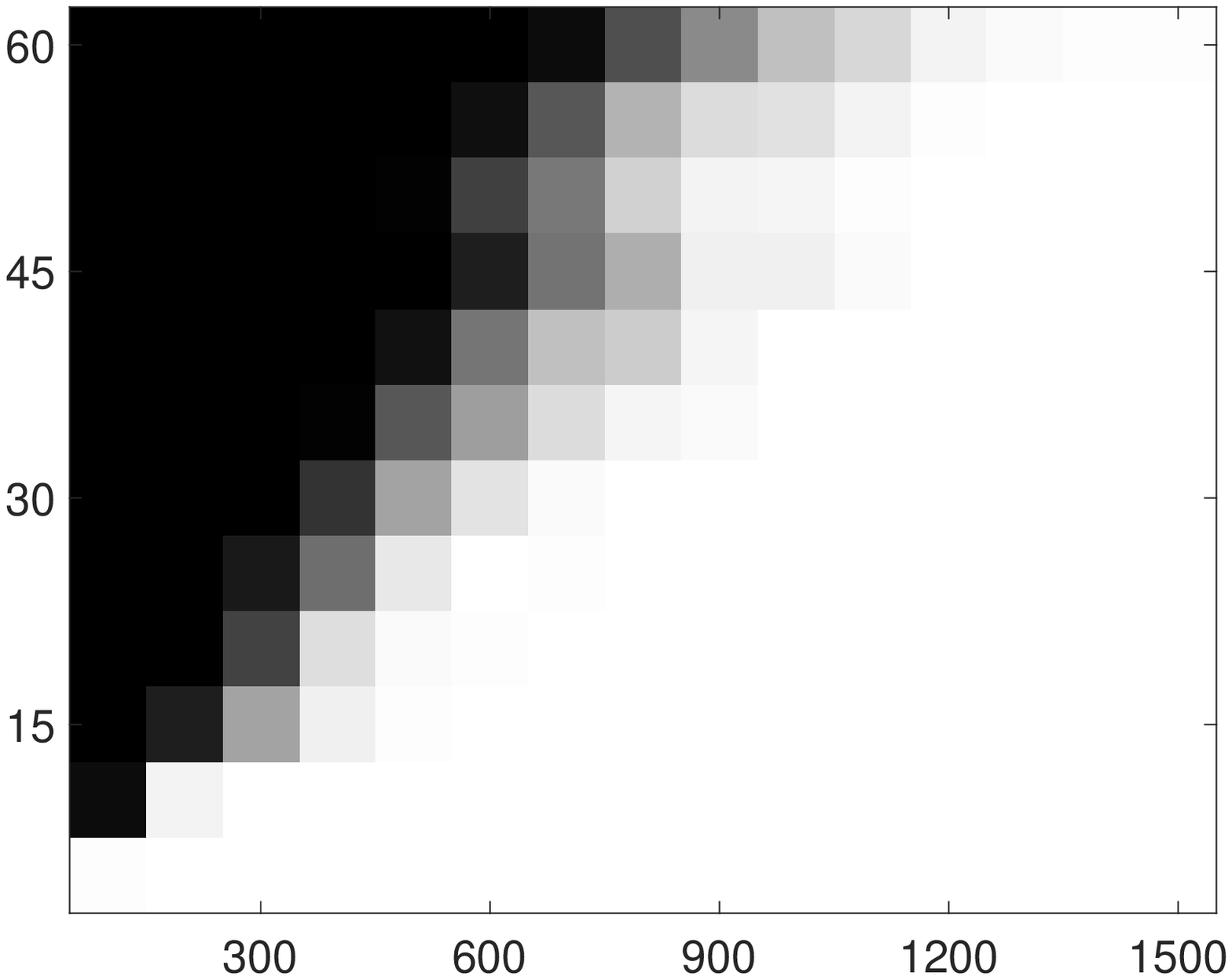}
		\caption{Modified Spectral + HTP with MR}
	\end{subfigure}
	\caption{Phase transition for different multiple-restarted algorithms with signal dimension $n$ = 1000. }
	\label{fig:MR}
\end{figure}

\begin{figure}
	\centering
	\begin{subfigure}[t]{.32\linewidth}
		\includegraphics[width=\textwidth]{./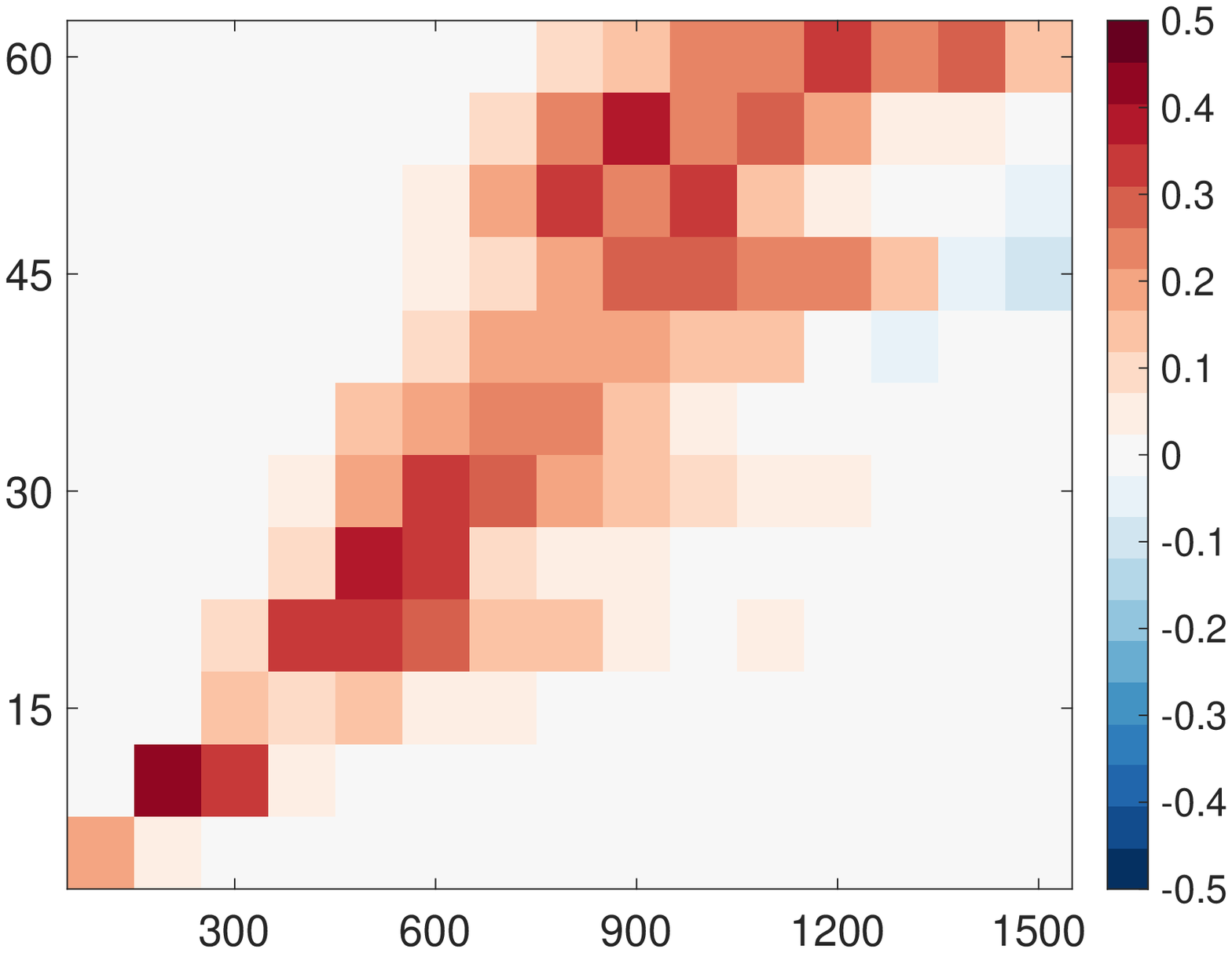}
		\caption{TP and Spectral}
	\end{subfigure}
	\begin{subfigure}[t]{.32\linewidth}
		\includegraphics[width=\textwidth]{./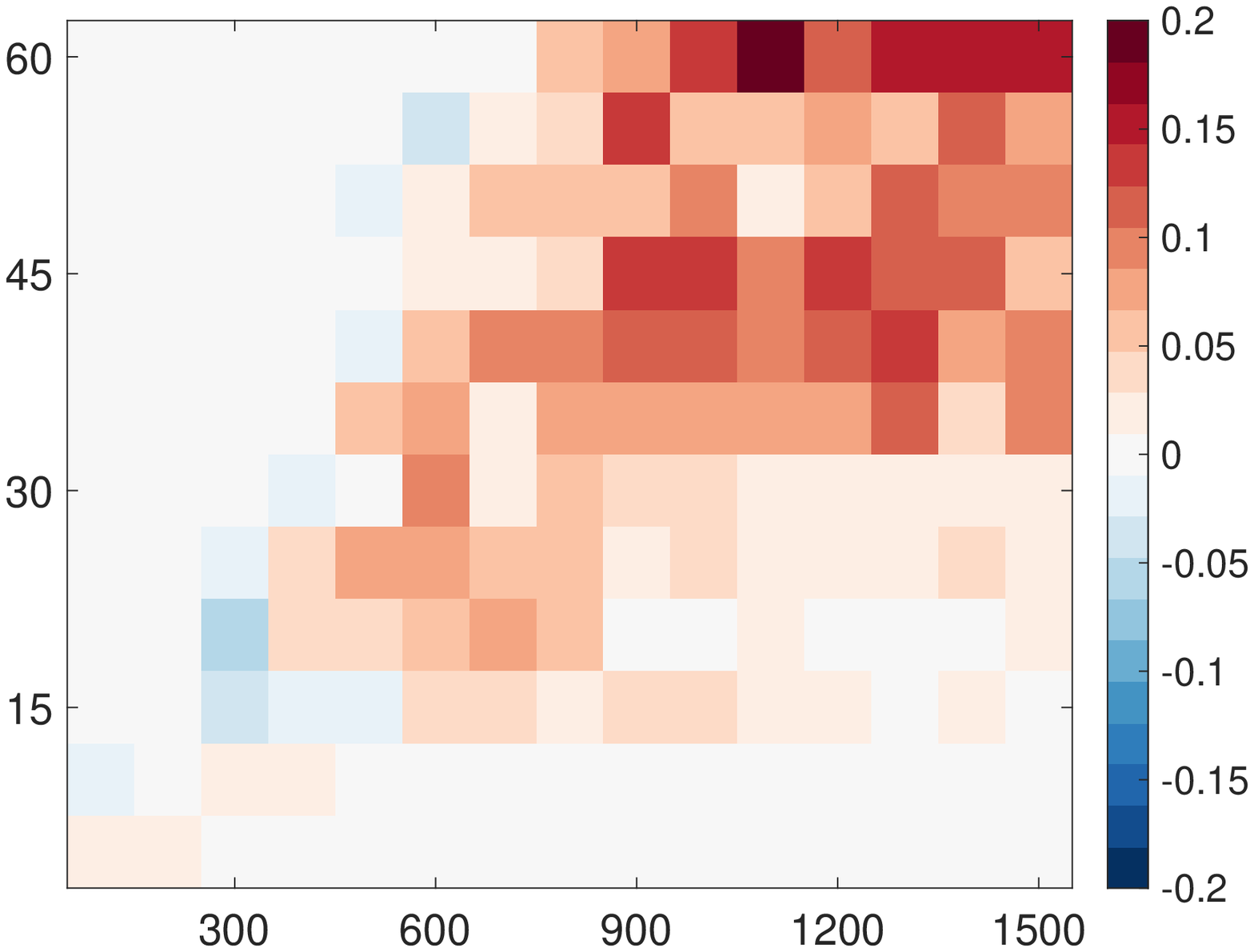}
		\caption{TP and Modified Spectral}
	\end{subfigure}
	\begin{subfigure}[t]{.32\linewidth}
		\includegraphics[width=\textwidth]{./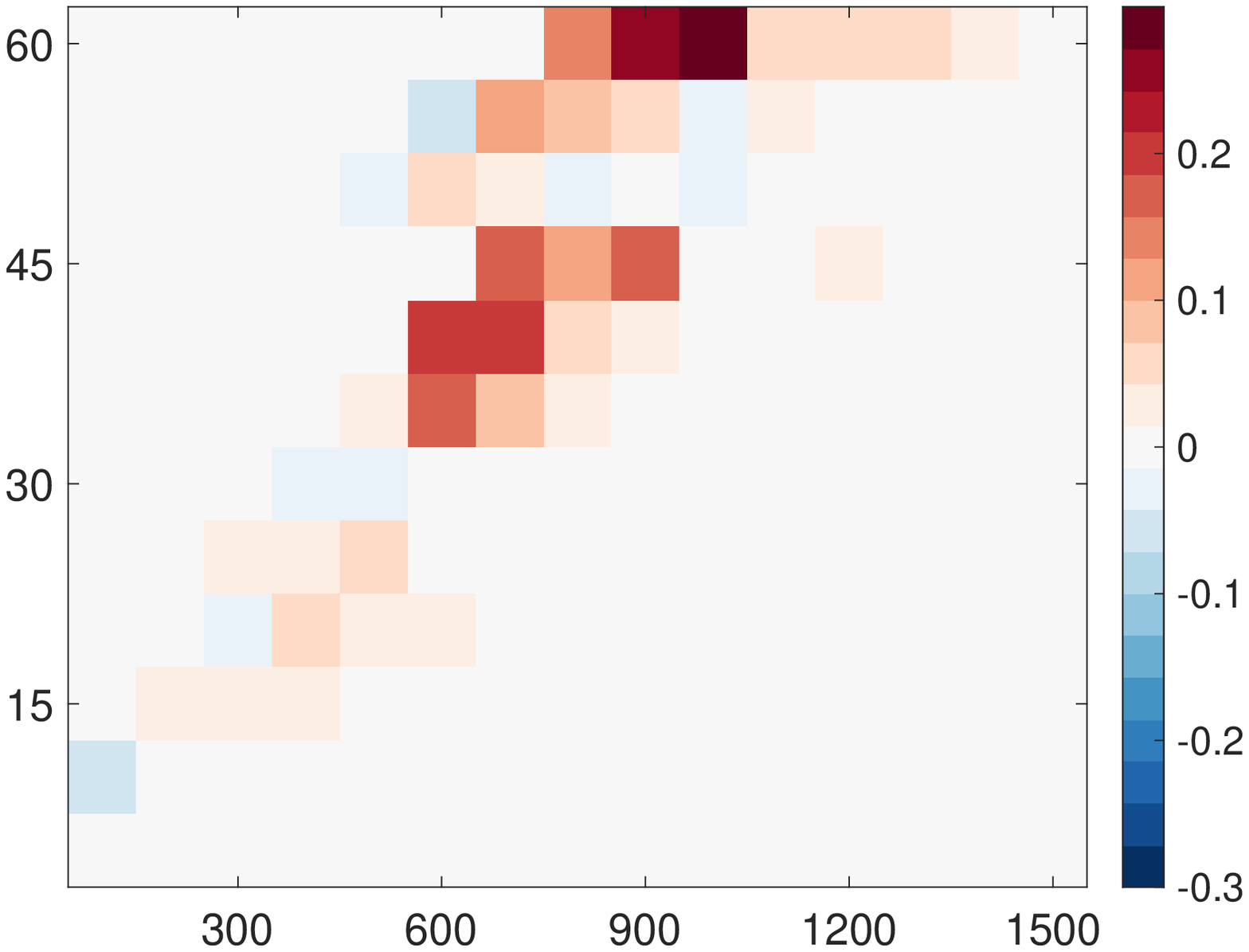}
		\caption{TP and Modified Spectral, both with MR}
	\end{subfigure}
	\caption{Differences between the rate of successful recovery.} 
	\label{fig:diff}
\end{figure}

In the second experiment, we compare the phase transition of different initialization methods. Same as in the previous experiment, to have a fair comparison of different initialization algorithms, we fix the algorithm in the refinement stage to be HTP. We fix the dimension $n = 1000$. We vary the sparsity $s$ from 5 to 60 and the sampling number $m$ from $100$ to $1500$. For TP and Modified Spectral, we also implemented their multiple-restarted versions. The results are shown in \cref{fig:noMR} and \cref{fig:MR}. We see from the figures that TP and Modified Spectral have noticeably higher rates of successful recovery than Spectral.
Moreover, multiple restarts can significantly increase the chance of successful recovery. To illustrate the superiority of TP over the other two methods more clearly, we plot in Figure \ref{fig:diff} the differences in successful recovery rates between TP and Spectral and between TP and Modified Spectral, respectively. We see that TP has a higher successful recovery rate than the other two methods, especially when $m$ is on the margin of the phase transition curve.

\section{Conclusion}

In this work, we have proposed two initialization algorithms for two-stage non-convex sparse phase retrieval approaches. While the popular spectral initialization requires $\Omega(s^2\log n)$ measurements for the desired initialization, the theoretical sample complexity of our proposed algorithms is only $\Omega(\stablesparsity{\x}s\log n)$ unconditionally. Numerical simulations are provided to verify the sample efficiency of the proposed method.  \par

\section*{References}
\bibliographystyle{abbrv}
\bibliography{SparsePRtp}
\end{document}